\theoremstyle{definition}
\newtheorem{definition}{Definition}[section]
\newtheorem{theorem}{Theorem}
\newtheorem*{theorem*}{Theorem}
\newtheorem*{lemma*}{Lemma}
\newtheorem*{proposition*}{Proposition}
\newtheorem{corollary}{Corollary}[section]
\newtheorem*{definition*}{Definition}  
\newcommand{\CC}[1]{\textcolor{blue}{(CC) #1}}
\begin{document}
\title{Stabilizer Entanglement Enhances Magic Injection}

\author{Zong-Yue Hou}
\affiliation{School of Physics, Peking University, Beijing 100871, China}

\author{ChunJun Cao}
\email{cjcao@vt.edu}
\affiliation{Department of Physics, Virginia Tech, Blacksburg, VA, USA 24061}

\author{Zhi-Cheng Yang}
\email{zcyang19@pku.edu.cn}
\affiliation{School of Physics, Peking University, Beijing 100871, China}
\affiliation{Center for High Energy Physics, Peking University, Beijing 100871, China}

\date{\today}

\begin{abstract}
Non-stabilizerness is a key resource for fault-tolerant quantum computation, yet its interplay with entanglement in dynamical settings remains underexplored. We address this by analyzing a well-controlled, analytically tractable setup, where we show that entanglement acts as a conduit that teleports magic across the system, thereby enhancing magic injection. Using exact calculations, we prove that when a Haar-random unitary $U_A$ is applied to a subsystem $A$ of an entangled stabilizer state, the total injected magic increases with the entanglement between $A$ and its complement. More generally, for any unitary $U_A$, we show that this enhancement is maximized when $A$ is maximally entangled with its complement, in which case the total injected magic is exactly given by the unitary stabilizer Rényi entropy we introduce. This quantity provides both a directly computable measure of unitary magic and a lower bound on the minimum number of $T$ gates required to synthesize $U_A$. We further extend our analysis to tripartite stabilizer entanglement, non-stabilizer entanglement, and magic injection via shallow-depth brickwork circuits, finding that the qualitative picture remains unchanged.
  
\end{abstract}

\maketitle

{\it Introduction.-} Understanding what distinguishes quantum systems from classical ones has been a challenging and multi-faceted endeavor. Much of the study of quantumness in physics has focused on the idea of quantum entanglement, which played a central role in characterizing topological order~\cite{Hamma_2005,Levin_2006,PreskillKitaev}, dynamically induced phases~\cite{Skinner_2019}, quantum chaos, and even the emergence of spacetime~\cite{Ryu_2006}. However, entanglement being a quantum correlation, is just one facet of quantumness. Quantum advantage, for example, is concerned with the hardness of simulating quantum systems on a classical computer. This facet is not captured by entanglement alone as many highly entangled systems prepared using Clifford operations violate Bell's inequalities but can be efficiently simulated classically~\cite{gottesmanknill,Aaronson_2004}. The notion of classical hardness therefore constitutes a second layer of quantumness and is intimately connected to non-stabilizerness or magic~\cite{HowardCampbell,Veitch_2014,Heinrich_2019,magicMC,Howard_2014,Bravyi_2005}.

Like entanglement, magic is an important but distinct quantum resource for realizing fault-tolerant \textit{universal} quantum computation~\cite{Bravyi_2005,Bravyi_2012}. 
The importance of non-stabilizerness has been largely overlooked in quantum many-body physics until recently, where a flurry of activities have been dedicated to study its total extent in various systems, which has implication for resource estimation in state preparation and quantum simulation~\cite{Sarkar_2020,white2020manahaarrandomstates,White:2020zoz,Liu_2022,Paulispec_haarrand, chen2024magic, bravyi2019simulation, PRXQuantum.6.020324}, in characterizing phase transitions~\cite{Niroula:2023meg,magic_transition}, in quantum dynamics~\cite{zhou2020single, manadynamics,SREdynamics,zhang2024quantummagicdynamicsrandom,monitored_magic_transit, wang2025magic}, in improved quantum simulation with tensor networks~\cite{clifford_disentangling,clifford_disentangling2, PhysRevLett.133.230601, dowling2025bridging, andreadakis2025exact}, and in quantum gravity~\cite{Cao:2023mzo,grav_magical}. Arguably, the most fascinating perspective comes from the interplay between magic and entanglement \cite{magic_spec,grav_magical,stabent25,iannotti2025,magicalent,gu2024,monitored_magic_transit} as both are needed for quantum advantage. Recent work revealed that magic in highly entangled systems is similar to the scrambling/encoding of information in quantum error correcting codes where it quickly delocalizes and can only be found when one can access a large part of the system \cite{Bao_2022,zhang2024quantummagicdynamicsrandom, wei2024noise}. This asymptotic behavior is consistent with the observation that subsystem magic quickly rises and decays to almost zero as the system thermalizes~\cite{manadynamics}.

Despite recent progress, the fundamental mechanisms governing magic dynamics, particularly the interplay between magic injection and entanglement, remain poorly understood. 
In generic quantum circuits and Hamiltonian dynamics, entanglement and magic dynamics are interwined, as each time step of evolution simultaneously injects magic and increases entanglement~\cite{SREnonintegrable,magicspread_randomcircuit}. To disentangle their roles, it is essential to study a controlled setting where one can isolate their contributions.  In this work, by constructing such a setup, we aim to address the following question: how does entanglement affect the injection of magic into a quantum system?

We show that preexisting stabilizer entanglement (i.e. entanglement of a nonmagical, stabilizer state) can enhance magic injection via gate teleportation. More explicitly, using exact calculations, we prove that when a Haar-random unitary $U_A$ is applied to a subsystem $A$ of an entangled stabilizer state, the total injected magic increases with the entanglement between $A$ and its complement. More generally, for any unitary $U_A$, we show that this enhancement is maximized when $A$ is maximally entangled with its complement, in which case the total injected magic is exactly given by the unitary stabilizer Rényi entropy we introduce. This quantity provides both a directly computable measure of unitary magic and a lower bound on the minimum number of $T$ gates (defined as $T=e^{i\frac{\pi}{8}Z}$) required to synthesize $U_A$, also known as $T$-count. We further extend our analysis to tripartite stabilizer entanglement, non-stabilizer entanglement, and magic injection via shallow-depth brickwork circuits, finding that the qualitative picture remains unchanged.

{\it Quantum magic and its measure.-} 
Clifford operations consist of unitaries that map Pauli operators to Pauli operators, the preparation of Pauli eigenstates, and the measurement in the Pauli basis among other things. Their simplicity around Pauli operators allows them to be efficiently simulable on a classical computer. The states that can be prepared with Clifford operations are called stabilizer states. A stabilizer pure state $|\psi\rangle$ is one where $s|\psi\rangle=|\psi\rangle$ for all $s\in S$ where $S$ is a size-$2^N$ Abelian subgroup of the Pauli group $P_N$ over $N$ qubits. It can be shown that $\rho=|\psi\rangle\langle\psi|=\frac{1}{2^N}\sum_{s\in S}s$ --- a stabilizer pure state is an equal superposition of the $2^N$ stabilizer group elements. 

Quantum magic quantifies the amount of ``non-Clifford" or ``non-stabilizer'' resources needed in a quantum state or process. In this work, we focus on a computable measure of magic in quantum states called {\it linear stabilizer entropy}~\cite{SRE} defined as
\begin{equation}
Y^{\rm lin}(\rho) := 1- \frac{1}{2^{N}}\sum_{P\in P_N}{\rm tr}(P \rho)^4. 
\end{equation} 
As $\tr(P\rho)$ computes the expansion coefficients of the state in the Pauli basis, also known as the Pauli spectrum, $1-Y^{\rm lin}(\rho)$ is simply the second moment of the Pauli spectrum of $\rho$, which characterizes how delocalized the expectation values of $P$ are over all possible Pauli strings compared to the stabilizer spectrum which contains precisely $2^N$ ones and $4^N-2^N$ zeros.

The linear stabilizer entropy satisfies the following properties: (i) faithfulness: let \(\text{STAB}_N\) denote the set of all \(N\)-qubit pure stabilizer states, then \(Y^{\rm lin}(\rho)=0\) iff \(\rho \in \text{STAB}_N\), otherwise \(Y^{\rm lin}(\rho)>0\); (ii) invariant under Clifford unitaries \(U_C\): \(Y^{\rm lin}(U_C\rho U^{\dagger}_C)=Y^{\rm lin}(\rho)\); 
(iii) monotonicity: $Y^{\rm lin}$ is a strong pure-state monotone under free operations \cite{Monotone}. Hence, $Y^{\rm lin}(\rho)$ serves as a good measure for quantum magic from the point of view of magic-state resource theory, which increases and approaches unity as magic increases. Notice that $Y^{\rm lin}(\rho)$ is also closely related to the second stabilizer R\'enyi entropy (SRE) $M_2(\rho)= - {\rm log} (1-Y^{\rm lin}(\rho))$, another widely used magic measure~\cite{SRE}.

\begin{figure}[!tb]
    \centering
    \includegraphics[width=0.95\linewidth]{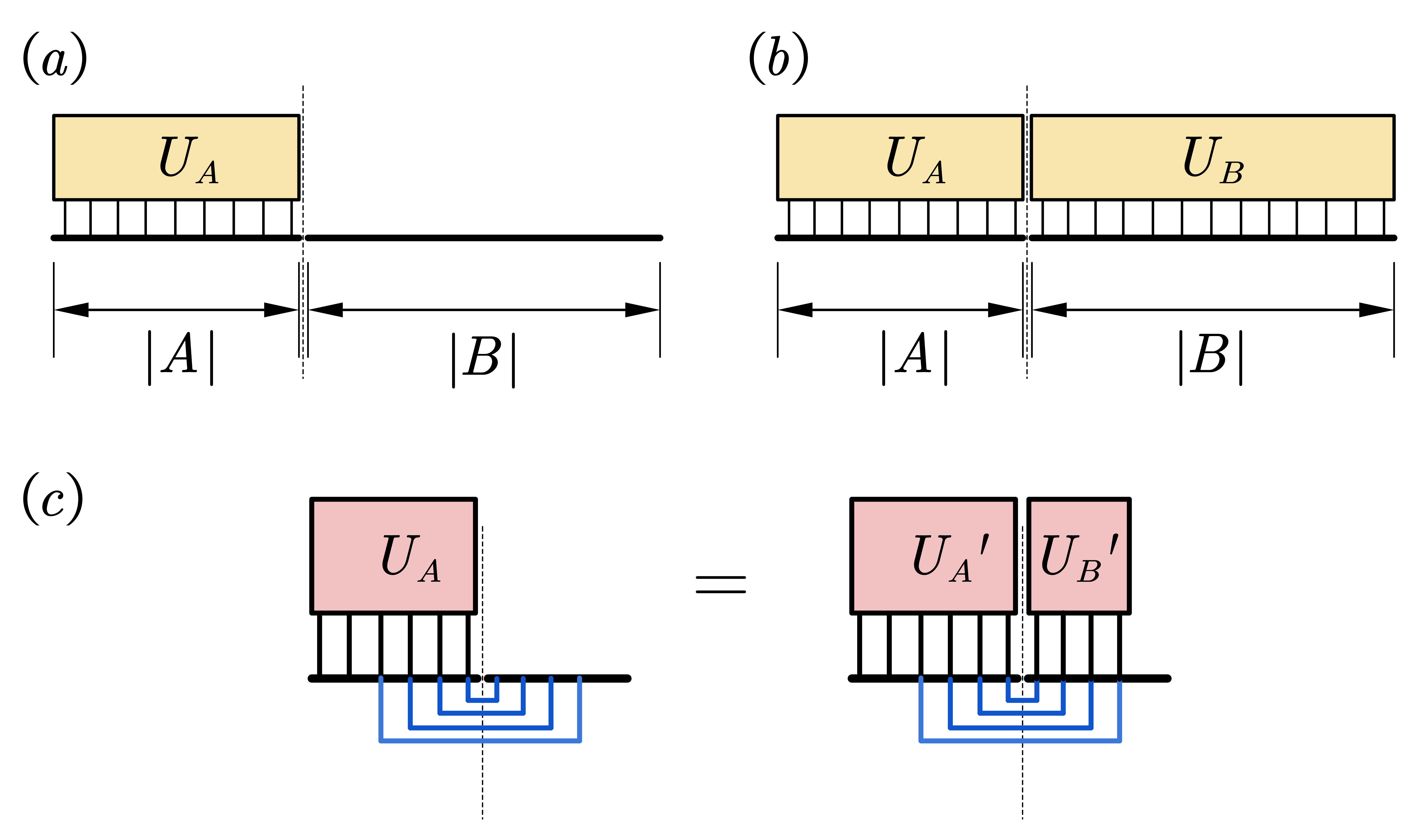}
    \caption{Schematics of the setup considered in this work. The initial states are bipartite stabilizer pure states. (a) A Haar random unitary injects magic by acting on a subregion $A$. (b) A factorized unitary $U=U_A \otimes U_B$, where $U_A$ and $U_B$ are independently Haar-random, acts on the initial stabilizer state. (c) Schematic of gate teleportation. Because of the Bell pairs between $A$ and $B$, \(U_A\) acting on $A$ is equivalent to \({U'_A}\) and \({U'_B}\) acting on $A$ and $B$.}
    \label{fig:Fig1}
\end{figure}

{\it Stabilizer entanglement enhances magic injection.-} To see how entanglement impacts magic injection, consider the setup illustrated in Fig.~\ref{fig:Fig1}(a). We take an initial stabilizer pure state $\rho$, partitioned into subregion $A$ and its complement $B$. A Haar random unitary $U_A$ is then applied to subregion $A$, and we are interested in computing the linear stabilizer entropy of the resulting state $\sigma=U_A|\psi\rangle \langle \psi| U_A^\dagger$, averaged over Haar random unitaries. This setup preserves the entanglement spectrum between $A$ and $B$ and models the process where magic is locally injected by a non-Clifford unitary gate to a stabilizer state when $|A| \ll |B|$. We will show in SM~\cite{SM} that key qualitative features of our results also persist even when considering more structured unitaries within subregion $A$.

We then compute Haar-averaged the linear stabilizer entropy $Y^{\rm lin}(U_A|\psi\rangle \langle \psi|U_A^\dagger)$ exactly. To leading order,
\begin{eqnarray}
    \overline{Y^{\rm lin}}&=& \mathbb{E}_{U_A} \ Y^{\rm lin}\left(U_A|\psi\rangle \langle\psi|\ U_A^\dagger\right)  \nonumber \\
    &=&1-4\cdot {{2}^{-|A|-E}}\left[1+O({2}^{-{|A|-E}})\right],
\label{eq:Y_bi}
\end{eqnarray}
 where $|A|\gg 1$ and $E=-\rm{tr}(\rho_A\text{log}\rho_A)$ is the amount of entanglement between $A$ and $B$, which is an integer for a stabilizer state.
 An exact closed-form expression for $\overline{Y^{\rm lin}}$ at arbitrary $|A|$ and $E$ is derived in the Supplemental Material (SM)~\cite{SM}. It turns out that the leading-order analytical formula already provides an excellent approximation for moderate system sizes and entanglement: numerical simulations show deviations become negligible once $|A|+E \gtrsim 4$ [Fig.~\ref{fig:Fig2}(a)]. Consistently, we find that the Haar-averaged second SRE $\overline{M_2}$ is nearly indistinguishable from $-\log(1-\overline{Y^{\rm lin}})$, confirming that the average in Eq.~(\ref{eq:Y_bi}) is typical with vanishing fluctuations~\cite{SM}. Thus we have
 \begin{equation}
 \overline{M_2} = |A|+E+ 2 + O(2^{-|A|-E}).
 \label{eq:M2}
 \end{equation}
 Notice that $\overline{M_2}$ is extensive, as expected.
 Eqs.~(\ref{eq:Y_bi}) and (\ref{eq:M2}) establish a \textit{quantitative} relation between the efficiency of local magic injection and the amount of stabilizer entanglement contained in the target state. 
 For a fixed subregion $A$, $\overline{Y^{\rm lin}}(\rho)$ approaches $1$ exponentially with the amount of entanglement $E$ between $A$ and its complement, and correspondingly the average second SRE increases linearly with $E$, indicating that global magic injection on $AB$ is enhanced by preexisting entanglement. 
 Similar behavior arises when $U_A$ is implemented by a brickwork circuit rather than a Haar-random unitary, although in that case the total injected magic is limited by the circuit depth (see SM~\cite{SM}).

Intuitively, the enhancement of magic injection by stabilizer entanglement as revealed by Eqs.~(\ref{eq:Y_bi}) and~(\ref{eq:M2})  can be understood as a consequence of gate `teleportation' mediated by Bell pairs spanning subsystems $A$ and $B$ [see Fig.~\ref{fig:Fig1}(c)]. As a simple example, consider a $2N$-qubit maximally entangled state between $A$ and $B$: $|{\rm Bell}\rangle_{AB}^{\otimes N}$. Then any unitary acting on half of the Bell pairs in $A$ will teleport to its partner in $B$ since $U \otimes I\ |\mathrm{Bell}\rangle_{AB}^{\otimes N}= I \otimes U^T\ |\mathrm{Bell}\rangle_{AB}^{\otimes N}$.
It has been shown that any bipartite stabilizer state $|\psi\rangle_{AB}$ can be converted into a collection of single-qubit states (say $|+\rangle$) and Bell pairs through local Clifford unitaries~\cite{bravyi2006ghz}:
\begin{equation}
    U_A^{\rm Cliff}\otimes U^{\rm Cliff}_B|\psi\rangle_{AB}=|+\rangle^{\otimes f_A}|\mathrm{Bell}\rangle^{\otimes E}_{AB}|+\rangle^{\otimes f_B}.
\end{equation} 
Hence for a stabilizer state that has entanglement, one can inject magic on $B$ through $A$ where magic flows to $B$ via the ``entanglement conduit'' formed by the $E$ Bell pairs. This simple picture also offers an intuitive explanation for the observed dynamics of local magic as a system thermalizes \cite{manadynamics}---while time evolution increases local magic initially when the spins are unentangled, this local magic quickly spreads through the conduit as entanglement also grows in the system.

\begin{figure}[!tb]
    \centering
    \includegraphics[width=1\linewidth]{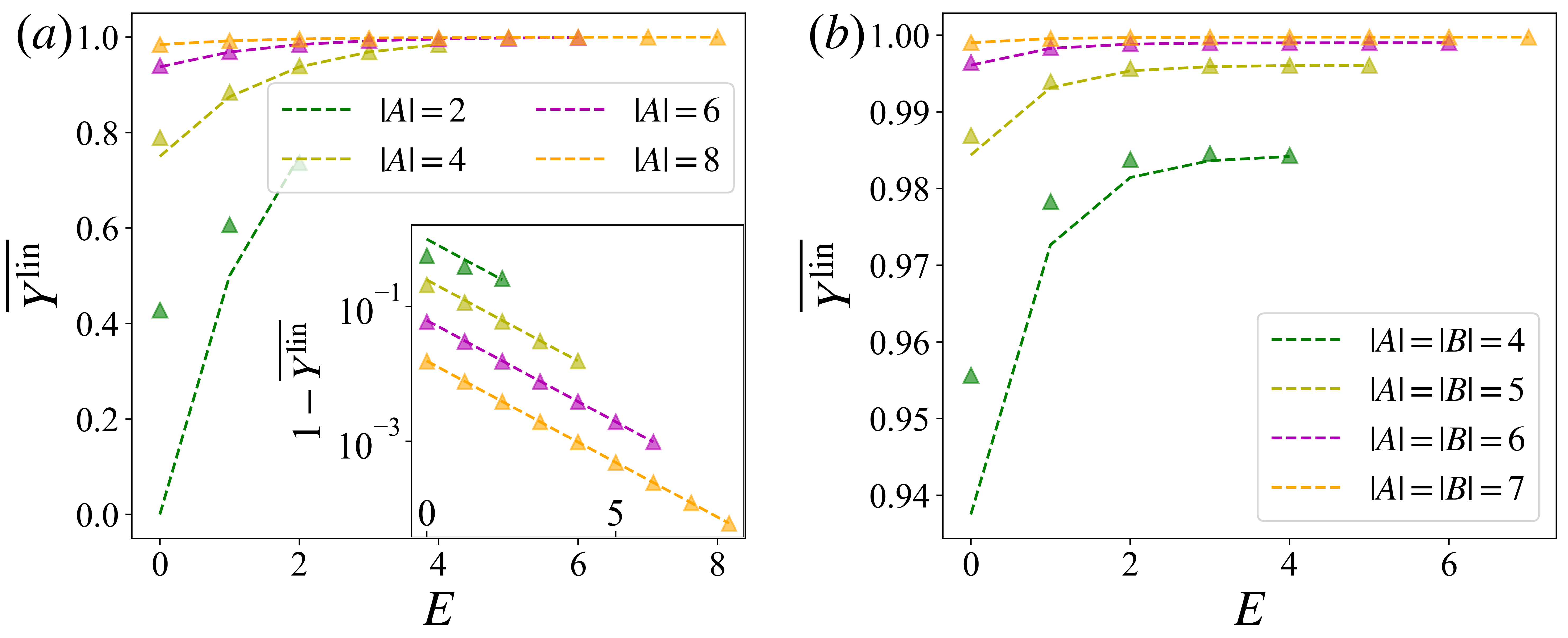}
    \caption{Numerical simulations of $\overline{Y^{\rm lin}}$ for the setup depicted in (a) Fig.~\ref{fig:Fig1}(a), and (b) Fig.~\ref{fig:Fig1}(b), respectively. We consider stabilizer initial states with varying system sizes and bipartite entanglement $E$. The analytical results Eqs.~(\ref{eq:Y_bi}) and (\ref{eq:UaUb}) are shown as dashed lines.}
    \label{fig:Fig2}
\end{figure}

A further manifestation of entanglement-enhanced magic injection is revealed by applying a unitary that is a tensor product of local unitaries acting on subsytems $A$ and $B$: $U=U_A \otimes U_B$, as illustrated in Fig.~\ref{fig:Fig1}(b). We compute the linear stabilizer entropy of the resulting state, averaged over Haar random $U_A$ and $U_B$. In the limit where $|A|\gg 1$ and $|B| \gg 1$, we find~\cite{SM}
\begin{eqnarray}
\overline{Y^{\rm lin}}&=& \mathbb{E}_{U_A}\mathbb{E}_{U_B} Y^{\rm lin}\left(U_AU_B|\psi\rangle \langle \psi |U_A^\dagger U_B^\dagger\right)  \nonumber \\
&=& 1-4\cdot {2}^{-N}\left[1+3\cdot 2^{-2E}+O({{2}^{-|A|-E}}, {{2}^{-|B|-E}})\right],   \nonumber \\
\label{eq:UaUb}
\end{eqnarray}
where $N=|A|+|B|$ is the total number of qubits.  First of all, Eq.~(\ref{eq:UaUb}) is consistent with Eq.~(\ref{eq:Y_bi}) and our teleportation picture upon taking $|A|=|B|=E$.
Furthermore, notice that the average value of $\overline{Y^{\rm lin}}$ for Haar random states is given by $1-4\cdot 2^{-N}$. Eq.~(\ref{eq:UaUb}) shows that the gap between $\overline{Y^{\rm lin}}$ for a stabilizer state acted upon by $U_A \otimes U_B$  and that for a Haar random state decays exponentially with the amount of entanglement. Therefore, despite the applied unitary being a tensor product of local unitaries on subsystems $A$ and $B$, from the perspective of magic, its effect on magic injection closely resembles that of a genuinely global Haar-random unitary on $AB$, provided that the initial state is sufficiently entangled. Interestingly, the amount of entanglement necessary for such indistinguishability from Haar random magic is independent of system size. Fig.~\ref{fig:Fig2}(b) shows numerical simulations of Haar-averaged $U_A$ and $U_B$, which again closely match Eq.~(\ref{eq:UaUb}) with subleading corrections already negligible at moderate $|A|$ and $E$. Remarkably, $\overline{Y^{\rm lin}}$ saturates to the Haar-random value once $E \approx 3$, independent of system size. 

In SM, we generalize Eqs.~(\ref{eq:Y_bi}) and (\ref{eq:UaUb}) to tripartite states, where similar results are obtained~\cite{SM}.

{\it T count and unitary SRE.-} 
The preceding results establish that entanglement enhances magic injection, but they also raise a natural question: what sets the ultimate limit of this enhancement? From a resource-theoretic perspective, all non-Clifford resources originate from the applied unitary $U_A$, so the total injected magic must be upper bounded by the non-Clifford content of $U_A$ itself. We begin by introducing the unitary stabilizer Rényi entropy of a given unitary, which provides a direct bridge between our previous results and the non-Clifford resources contained in $U$.

\begin{definition*}
    Let $U$ be an $N$-qubit unitary, then the $\alpha-$th unitary stabilizer R\'enyi entropy of $U$ is defined as
\begin{equation}
        H_{\alpha}(U):=\frac{1}{1-\alpha}\log \Big(\frac{1}{2^{2N}}\sum_{P_{i},P_{j}}\Big[\frac{{\text{tr}}(P_{i}UP_{j}U^{\dagger})}{2^{N}}\Big]^{2\alpha}\Big)
\label{eq:hu}
\end{equation}
where $P_i \in P_N$. 
\end{definition*}
 To gain some intuition on the above quantity, notice that $U P_i U^\dagger = \sum c_{ij} P_j$ for a single Pauli string $P_i$, where $c_{ij}= 2^{-N} {\rm tr}(P_j U P_i U^\dagger)$, and $\sum_{j} |c_{ij}|^2 =1$. Thus Eq.~(\ref{eq:hu}) is essentially the logarithm of $2^{-2N} \sum_{ij} |c_{ij}|^{2\alpha}$, where the additional factor of $2^{-2N}$ is due to the summation over all $P_i$. The definition clearly parallels the state SRE, but now for the unitary directly, which basically characterizes the similarity between $U$ and a Clifford unitary for which $c_{ij}=\pm 1$ for one specific $j$. 

We prove that this measure establishes a lower bound on the $T$-count, i.e. the minimal number of $T$ gates needed to synthesize a unitary $U$. More precisely,
\begin{theorem}
For any  $\alpha\geq 0$,
   \begin{equation}
t(U) \geq v(U) \geq H_{\alpha}(U),
\label{eq:ineq}
\end{equation} 
where the unitary stabilizer nullity is defined as $v(U):= 2N-{\rm log} |s(U)|$, where $s(U):=\{P_i\in P_N| U P_i U^\dagger = \pm P_j \}$~\cite{Jiang_2023}.  
\end{theorem}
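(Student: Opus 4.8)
The plan is to prove the two inequalities $t(U)\ge v(U)$ and $v(U)\ge H_\alpha(U)$ separately, using the nullity $v(U)$ as the pivot. Throughout I take $\log=\log_2$, so that $v(U)=2N-\log_2|s(U)|$ is a nonnegative integer. The first step is to record that $s(U)$ is a \emph{subgroup} of the Pauli group: if $UPU^\dagger$ and $UP'U^\dagger$ are Paulis then so is $U(PP')U^\dagger$. Hence, under the symplectic identification $P_N\cong\mathbb F_2^{2N}$, $s(U)$ is an $\mathbb F_2$-subspace and $\log_2|s(U)|=\dim_{\mathbb F_2}s(U)$.

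For $t(U)\ge v(U)$ I would establish two structural facts. (i) \emph{Clifford invariance}: conjugation by a Clifford permutes Paulis bijectively, so $|s(CUC')|=|s(U)|$ and thus $v(CUC')=v(U)$. (ii) \emph{Subadditivity}: $v(U_2U_1)\le v(U_1)+v(U_2)$. I would prove (ii) by observing that $\phi_1\colon s(U_1)\to\mathbb F_2^{2N}$, $P\mapsto U_1PU_1^\dagger$, is an injective $\mathbb F_2$-linear map, that $\phi_1^{-1}(s(U_2))\subseteq s(U_2U_1)$, and that $\dim\big(\phi_1(s(U_1))\cap s(U_2)\big)\ge\dim s(U_1)+\dim s(U_2)-2N$ by the Grassmann dimension inequality, giving $\dim s(U_2U_1)\ge\dim s(U_1)+\dim s(U_2)-2N$. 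Writing $U$ as a minimal product of $t(U)$ single-qubit $T$-gates interleaved with Cliffords, and noting $v(T)=1$ (a $T$-gate maps $I,Z$ to Paulis but $X,Y$ to superpositions, so $s(T)$ has index $2$) while Cliffords contribute $0$, subadditivity yields $v(U)\le t(U)$.

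For $v(U)\ge H_\alpha(U)$ I would first reduce to $\alpha=0$. Writing $c_{ij}=2^{-N}\,\mathrm{tr}(P_iUP_jU^\dagger)$, the matrix $(c_{ij})$ is real orthogonal, so $q_{ij}:=c_{ij}^2$ is doubly stochastic and $r_{ij}:=q_{ij}/4^N$ is a probability distribution on pairs. A short computation recasts the definition as $H_\alpha(U)=H_\alpha^{\mathrm{Renyi}}(r)-2N$; since $H_\alpha^{\mathrm{Renyi}}$ is non-increasing in $\alpha$, it suffices to bound the Hartley case $H_0(U)=\log_2|\mathrm{supp}(c)|-2N$, where $|\mathrm{supp}(c)|=\sum_j n_j$ and $n_j$ counts the Paulis in $UP_jU^\dagger$. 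The crux is the lemma $n_j\le 4^N/|s(U)|$: for $P_h\in s(U^\dagger)$ one has $U^\dagger P_hU=\pm g$ with $g$ a Pauli, so $P_h(UP_jU^\dagger)P_h=\pm UP_jU^\dagger$, and comparing Pauli coefficients forces every $P_i$ in the support to share the same commutation relation with $P_h$. Thus $\mathrm{supp}(UP_jU^\dagger)$ lies in a single coset of the centralizer $s(U^\dagger)^\perp$, which by symplectic non-degeneracy has exactly $4^N/|s(U)|$ elements (using $|s(U^\dagger)|=|s(U)|$). Summing over $j$ gives $|\mathrm{supp}(c)|\le 4^{2N}/|s(U)|$, hence $H_0(U)\le 2N-\log_2|s(U)|=v(U)$, and therefore $H_\alpha(U)\le v(U)$ for all $\alpha\ge0$.

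The main obstacle is the support lemma $n_j\le 4^N/|s(U)|$; the remaining steps are bookkeeping with the symplectic structure. The delicate points are verifying that $\mathrm{supp}(UP_jU^\dagger)$ is confined to one coset of $s(U^\dagger)^\perp$ (equivalently, that the character $P_h\mapsto$ commutation of $P_i$ with $P_h$ is constant across the support) and that each such coset has the claimed cardinality, for which the non-degeneracy of the symplectic form on $P_N$ is essential.
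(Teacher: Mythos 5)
Your proposal is correct. The substantive half, $v(U)\ge H_{\alpha}(U)$, follows essentially the same route as the paper: you reduce to the Hartley case $\alpha=0$ via monotonicity of the R\'enyi entropy (this is exactly the paper's Corollary), and your support lemma $n_j\le 4^N/|s(U)|$ is the paper's key step in identical form --- your group $s(U^{\dagger})$ is precisely the paper's $g(U)=Us(U)U^{\dagger}$, your observation that $P_h(UP_jU^{\dagger})P_h=\pm UP_jU^{\dagger}$ forces a constant commutation character across the Pauli support is their common-eigenvector argument, and the confinement to a coset of the symplectic orthocomplement of cardinality $2^{2N}/|s(U)|$ matches their affine-translate-of-$G^{\perp}$ count. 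The one genuine difference is the first inequality: the paper simply cites Jiang--Wang for $t(U)\ge v(U)$, whereas you prove it self-containedly via Clifford invariance of $v$, subadditivity $v(U_2U_1)\le v(U_1)+v(U_2)$ (correctly derived from $\phi_1^{-1}(s(U_2))\subseteq s(U_2U_1)$ and the Grassmann dimension bound in $\mathbb{F}_2^{2N}$), and $v(T)=1$; this is essentially the standard argument from the cited reference, so it buys self-containedness rather than novelty, but all steps check out, including the facts you lean on implicitly ($|s(U^{\dagger})|=|s(U)|$ by bijectivity of conjugation, orthogonality of $(c_{ij})$ making $r_{ij}$ a normalized distribution, and non-degeneracy of the symplectic form giving $|G^{\perp}|=2^{2N-\dim G}$).
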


If the total amount of magic of the unitary is given by the unitary SRE above, what initial stabilizer state $|\psi\rangle$, if it exists, can be used to extract all the magic hidden in $U_A$? The answer is simply a state that is maximally entangled between $A$ and an ancillary system $B$ of the same size. 
\begin{theorem}
The unitary SRE of $U$ is equal to the state SRE of $U$'s Choi state $|U\rangle:=U_A\otimes I_B|\rm Bell\rangle_{AB}^{\otimes N}$: 
\begin{equation}
    H_{\alpha}(U)=M_{\alpha}(|U\rangle).
    \label{eq:lemma}
\end{equation}
\end{theorem}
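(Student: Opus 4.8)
The plan is to reduce the claimed identity to an equality between the two Pauli spectra, computed term by term. Recall that for a pure state on $n$ qubits the $\alpha$-stabilizer R\'enyi entropy reads $M_\alpha(\rho)=\frac{1}{1-\alpha}\log\!\big(2^{-n}\sum_{P\in P_n}{\rm tr}(P\rho)^{2\alpha}\big)$, which for $\alpha=2$ reproduces $M_2=-\log(1-Y^{\rm lin})$. Applying this to the $2N$-qubit Choi state $|U\rangle$ and comparing with Eq.~(\ref{eq:hu}), both sides carry the same prefactor $2^{-2N}$ and the same $\frac{1}{1-\alpha}\log(\cdot)$ wrapper. It therefore suffices to prove the ``spectrum'' identity $\sum_{Q\in P_{2N}}\langle U|Q|U\rangle^{2\alpha}=\sum_{P_i,P_j\in P_N}\big[2^{-N}{\rm tr}(P_iUP_jU^\dagger)\big]^{2\alpha}$.

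The key step is to evaluate the Pauli expectation value $\langle U|Q|U\rangle$ for an arbitrary $2N$-qubit Pauli string. I would split $Q=P_a\otimes P_b$ with $P_a$ acting on $A$ and $P_b$ on $B$, and then invoke the ricochet (transpose) identity for the maximally entangled state, $\langle {\rm Bell}|^{\otimes N}\, X_A\otimes Y_B\,|{\rm Bell}\rangle^{\otimes N}=2^{-N}{\rm tr}(XY^T)$, with the transpose taken in the computational basis defining $|{\rm Bell}\rangle$. With $X=U^\dagger P_a U$ and $Y=P_b$ this gives $\langle U|Q|U\rangle=2^{-N}{\rm tr}(U^\dagger P_a U\,P_b^T)$. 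Since every Pauli string obeys $P_b^T=\eta_b P_b$ with $\eta_b=\pm1$, and this sign contributes only through $\eta_b^{2\alpha}=1$, cyclicity of the trace yields $\langle U|Q|U\rangle^{2\alpha}=\big[2^{-N}{\rm tr}(P_aUP_bU^\dagger)\big]^{2\alpha}$ term by term. Summing over all $Q$, equivalently summing independently over $P_a,P_b\in P_N$, reproduces exactly the double sum defining $H_\alpha(U)$, completing the proof.

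The only nontrivial ingredient --- and hence the main thing to get right --- is the ricochet identity, which is precisely the Choi--Jamio{\l}kowski duality underlying the Choi state: it converts the bipartite overlap on $AB$ into a single-copy trace of the conjugation $P_a\mapsto U^\dagger P_a U$. The remaining work is purely bookkeeping of the transpose sign and of normalization factors. I would also remark that the identity is stated for the standard Bell basis; any other maximally entangled stabilizer state differs from it by local Clifford unitaries, which leave both $M_\alpha$ and the matrix elements ${\rm tr}(P_iUP_jU^\dagger)$ invariant, so the result is independent of this convention.
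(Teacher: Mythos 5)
Your proof is correct, and it takes a genuinely different route from the paper's. The paper works in replica space: it expands the Bell projector in the Pauli basis (dropping signs, which is justified because only even powers $2\alpha$ appear), writes $M_\alpha$ via the swap-type operator $Q_\alpha=\sum_P P^{\otimes 2\alpha}$, factorizes $Q_\alpha^{AB}=Q_\alpha^A\otimes Q_\alpha^B$, and uses the orthogonality $\mathrm{tr}(P P_{i_k})=2^N\delta_{P,P_{i_k}}$ on the $B$ factor to collapse the $2\alpha$-fold sum to its diagonal, landing directly on $\mathrm{tr}\bigl(Q_\alpha U^{\otimes 2\alpha}Q_\alpha U^{\dagger\otimes 2\alpha}\bigr)$, i.e. on $H_\alpha(U)$. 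You instead prove the pointwise statement
\begin{equation*}
\langle U|\,P_a\otimes P_b\,|U\rangle=\eta_b\,2^{-N}\,\mathrm{tr}\bigl(P_a U P_b U^\dagger\bigr),\qquad \eta_b=\pm 1,
\end{equation*}
via the ricochet identity $\langle\mathrm{Bell}|^{\otimes N}X\otimes Y|\mathrm{Bell}\rangle^{\otimes N}=2^{-N}\mathrm{tr}(XY^T)$, and then observe that the transpose sign $\eta_b$ dies under the even power $2\alpha$. This is strictly more informative than the paper's moment-level computation: you establish that the Pauli character function of the Choi state equals the conjugation coefficients $c_{ab}$ up to sign, from which $M_\alpha(|U\rangle)=H_\alpha(U)$ follows for all $\alpha$ simultaneously as an immediate corollary; it also makes the sign bookkeeping fully explicit, whereas the paper handles signs with a brief ``signs do not matter for $\mathrm{tr}^{2\alpha}$'' remark when expanding $\rho_0$. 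The trade-off is that the paper's $Q_\alpha$ calculus slots directly into the replica machinery used elsewhere in its supplement (e.g. the exact Weingarten computations), while your single-copy argument is more elementary and self-contained. Two minor points worth stating explicitly in a polished write-up: for non-integer $\alpha$ one should read $x^{2\alpha}$ as $(x^2)^\alpha$ so that $\eta_b^{2\alpha}=1$ is unambiguous (the quantities $\mathrm{tr}(P_iUP_jU^\dagger)$ are real by Hermiticity and cyclicity, so this is the only interpretive subtlety), and the bijection between $P_{2N}$ and pairs $(P_a,P_b)\in P_N\times P_N$ used in your final summation deserves one line. Your closing remark on local-Clifford independence of the Bell-basis convention is correct and a useful addition not present in the paper.
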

Notice that the right hand side of Eq.~(\ref{eq:lemma}) is precisely the kind of setup we consider in this work, namely, a unitary acts on a stabilizer initial state consisting of $N$ Bell pairs shared between subsystem $A$ and its complement.
Explicitly, the linear stabilizer entropy of acting $U$ on subsystem $A$ when the initial stabilizer state has $|A|=E\leq |B|$ is $Y^{\rm lin}=1-\exp(-H_2(U))$.

The above theorem also provides an intuitive interpretation of the unitary SRE we introduce --- if we unfold a unitary operator into its dual state using the channel-state duality, then the magic of a unitary operator is precisely equal to that of its dual (Choi) state. Detailed proofs of Theorem 1 and 2 are given in the SM~\cite{SM}.

\begin{figure}[!tb]
    \centering
    \includegraphics[width=1\linewidth]{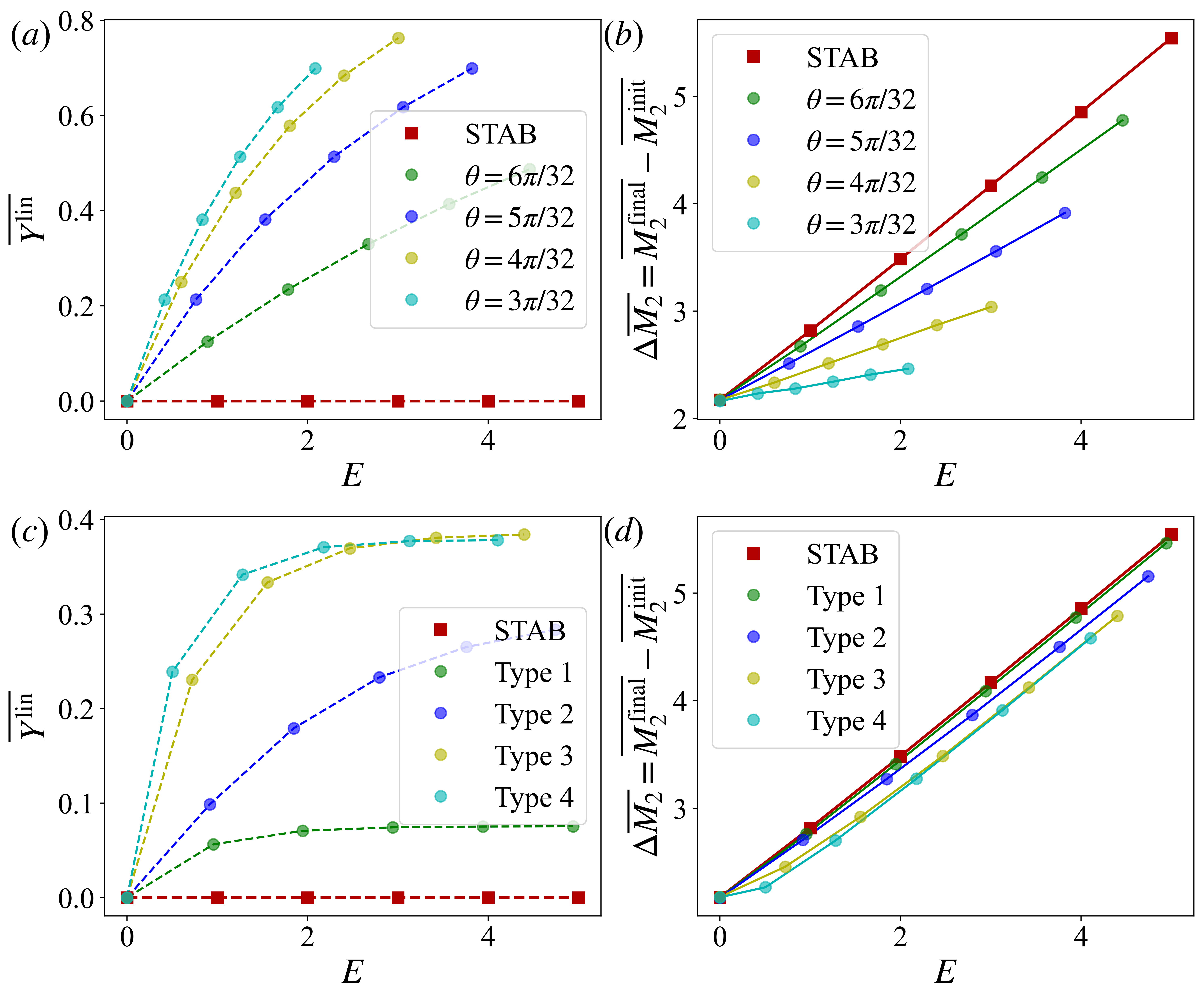}
    \caption{Numerical simulations of magic injection to initial states with non-stabilizer entanglement. (a)\&(b): imperfect Bell states; (c)\&(d): states with engineered entanglement spectra (see text for a precise description). In (a)\&(c), we plot the amount of preexisting magic in the initial states, while in (b)\&(d) we plot the amount of injected magic: $\Delta \overline{M_2} = \overline{M_2^{\rm final}}-\overline{M_2^{\rm init}}$.}
    \label{fig:Fig3}
\end{figure}

{\it Non-stabilizer entanglement.-} So far, we have focused on initial states with only stabilizer entanglement. However, realistic dynamics almost always contains magic in the initial state.  Does entanglement still enhance magic injection when the initial state is non-stabilizer? Our results indicate that, at leading order, the answer is yes. Consider an entangled state on $AB$ with magic injected by $U_A$. Because any local non-stabilizerness in $A$ is washed out by averaging over $U_A$, it suffices to focus on states where the non-stabilizer content resides entirely in the correlations between $A$ and $B$.
Such a state is known to have non-local magic, which is non-trivial as long as the entanglement spectrum is non-flat~\cite{grav_magical}. The simplest example of such a state consists of $k$ imperfect Bell pairs shared between $A$ and $B$: $|\psi\rangle = |0\rangle^{\otimes f_A} |\phi_{\theta}\rangle_{AB}^{\otimes k}$, where $|\phi_{\theta}\rangle = \cos\theta|00\rangle + \sin\theta|11\rangle $. 
Since $k$ copies of $|\phi_{\theta}\rangle$ each with entanglement $S$ can be unitarily distilled into $E=Sk-O(\sqrt{k})$ number of perfect Bell states tensoring a magic state on the $O(\sqrt{k})$ remaining qubits to good approximation~\cite{Lo}, our previous analysis with stabilizer initial state again applies to leading order. That is, magic injection by $U_A$ will increase in the same way as one increases the total entanglement $E$. Interestingly, this simple dependence also appears to extend to other states with non-local magic. For example, initial states with non-local magic can be parametrized in the form $|\psi\rangle = |0\rangle^{\otimes f_A} \sum_{i=1}^{2^k} \lambda_i |i\rangle_A \otimes |i\rangle_B$, where $|i\rangle$ denotes a computational basis state of $k$ qubits. We choose four arbitrary entanglement spectra: (i) $\lambda_i\propto e^{-i/2^k}$; (ii) $\lambda_i\propto i$; (iii) $\lambda_i\propto i^2$; (iv) $\lambda_i\propto i^3$ and examine the magic growth numerically.

However, more care is needed going beyond leading order. It is clear that there is non-stabilizerness dependence even in the simplest example with imperfect Bell states at the order of $O(\sqrt{k})$. Furthermore, because initial states can contain magic, the remaining magic capacity for further injection is reduced, i.e., the change of total magic per magic state injection is smaller compared to stabilizer entanglement.  We numerically calculate both the preexisting magic in the initial states and the injected magic by $U_A$, as quantified by the increase in the second SRE: $\Delta \overline{M_2}=\overline{M_2^{\rm final}}-\overline{M_2^{\rm init}}$. As shown in Fig.~\ref{fig:Fig3}(a,c), $\overline{Y^{\rm lin}}$ of both classes of initial states grow with entanglement, indicating that on average the initial state contains more magic when more entangled. 
The {\it injected} magic $\Delta {\overline{M_2}}$ also increases with the amount of entanglement [Fig.~\ref{fig:Fig3}(b,d)]. This suggests that even non-stabilizer entanglement enhances the effect of magic injection, albeit at a much smaller rate. Indeed, as seen from Fig.~\ref{fig:Fig3}, the amount of injected magic $\Delta \overline{M_2}$ becomes smaller as the initial state becomes more magical at the same $E$.

{\it Discussion.-} In this work, we demonstrate that entanglement enhances the effect of locally injected magic. We showed this in two ways --- by locally injecting magic using a Haar random unitary, and by directly computing the upper limit of maximally extractable magic using unitary SRE. The former shows a precise increase of magic as a function of entanglement while the latter proves that the upper limit is attained when the initial state is maximally entangled. We further generalize our results to tripartite entanglement and non-stabilizer entanglement.  As any stabilizer state is equivalent to the tensor product of Bell states, GHZ states and unentangled ancilla through local Clifford rotations, our results apply generally to any bipartite and tripartite stabilizer state. On the resource-theoretic front, we define a new computable measure $H_{\alpha}(U)$ for estimating $t(U)$, which leads to a natural interpretation of entanglement-enhanced magic injection as finding an initial stabilizer state that most efficiently extract the magic contained in a given unitary.
Furthermore, it does not rely on optimization and can be computed far more efficiently via tensor network methods, unlike unitary or state nullity. This makes $H_{\alpha}$ a more computable unitary magic measure and a powerful estimate of unitary magic.

Several open questions remain. Our analysis of imperfect Bell-pair states suggests that the average amount of injected magic decreases as the target state itself becomes more magical. This points to a possible notion of an intrinsic magic capacity for a quantum state, determined jointly by its entanglement and existing magic. A systematic extension of our results to generic non-stabilizer states would be needed to make this precise. Another direction concerns replacing the Haar-random $U_A$ with a structured brickwork circuit: what minimal circuit depth is required for the injected magic to match that of a Haar-random unitary? This question is likely connected to the growth of complexity in local random unitary circuits as a function of depth.

{\it Acknowledgement.-} We thank Zi-Wen Liu, Jong-Yeon Lee, Cheng Wang, Yingfei Gu, and Yuzhen Zhang for helpful discussions, and Xhek Turkeshi for useful comments on the manuscript. We especially thank Daniele Iannotti, Gianluca Esposito, Lorenzo Campos Venuti, and Alioscia Hamma for sharing their note on obtaining the exact expression of average stabilizer purity at a fixed amount of entanglement. Z.-C.Y. is supported by Grant No. 12375027 from the National Natural Science Foundation of China. Numerical simulations were performed on the High-performance Computing Platform of Peking University.

\bibliography{ref}

\begin{thebibliography}{54}%
\makeatletter
\providecommand \@ifxundefined [1]{%
 \@ifx{#1\undefined}
}%
\providecommand \@ifnum [1]{%
 \ifnum #1\expandafter \@firstoftwo
 \else \expandafter \@secondoftwo
 \fi
}%
\providecommand \@ifx [1]{%
 \ifx #1\expandafter \@firstoftwo
 \else \expandafter \@secondoftwo
 \fi
}%
\providecommand \natexlab [1]{#1}%
\providecommand \enquote  [1]{``#1''}%
\providecommand \bibnamefont  [1]{#1}%
\providecommand \bibfnamefont [1]{#1}%
\providecommand \citenamefont [1]{#1}%
\providecommand \href@noop [0]{\@secondoftwo}%
\providecommand \href [0]{\begingroup \@sanitize@url \@href}%
\providecommand \@href[1]{\@@startlink{#1}\@@href}%
\providecommand \@@href[1]{\endgroup#1\@@endlink}%
\providecommand \@sanitize@url [0]{\catcode `\\12\catcode `\$12\catcode
  `\&12\catcode `\#12\catcode `\^12\catcode `\_12\catcode `\%12\relax}%
\providecommand \@@startlink[1]{}%
\providecommand \@@endlink[0]{}%
\providecommand \url  [0]{\begingroup\@sanitize@url \@url }%
\providecommand \@url [1]{\endgroup\@href {#1}{\urlprefix }}%
\providecommand \urlprefix  [0]{URL }%
\providecommand \Eprint [0]{\href }%
\providecommand \doibase [0]{https://doi.org/}%
\providecommand \selectlanguage [0]{\@gobble}%
\providecommand \bibinfo  [0]{\@secondoftwo}%
\providecommand \bibfield  [0]{\@secondoftwo}%
\providecommand \translation [1]{[#1]}%
\providecommand \BibitemOpen [0]{}%
\providecommand \bibitemStop [0]{}%
\providecommand \bibitemNoStop [0]{.\EOS\space}%
\providecommand \EOS [0]{\spacefactor3000\relax}%
\providecommand \BibitemShut  [1]{\csname bibitem#1\endcsname}%
\let\auto@bib@innerbib\@empty
\bibitem [{\citenamefont {Hamma}\ \emph {et~al.}(2005)\citenamefont {Hamma},
  \citenamefont {Ionicioiu},\ and\ \citenamefont {Zanardi}}]{Hamma_2005}%
  \BibitemOpen
  \bibfield  {author} {\bibinfo {author} {\bibfnamefont {A.}~\bibnamefont
  {Hamma}}, \bibinfo {author} {\bibfnamefont {R.}~\bibnamefont {Ionicioiu}},\
  and\ \bibinfo {author} {\bibfnamefont {P.}~\bibnamefont {Zanardi}},\
  }\bibfield  {title} {\bibinfo {title} {Ground state entanglement and
  geometric entropy in the kitaev model},\ }\href
  {https://doi.org/10.1016/j.physleta.2005.01.060} {\bibfield  {journal}
  {\bibinfo  {journal} {Physics Letters A}\ }\textbf {\bibinfo {volume}
  {337}},\ \bibinfo {pages} {22–28} (\bibinfo {year} {2005})}\BibitemShut
  {NoStop}%
\bibitem [{\citenamefont {Levin}\ and\ \citenamefont {Wen}(2006)}]{Levin_2006}%
  \BibitemOpen
  \bibfield  {author} {\bibinfo {author} {\bibfnamefont {M.}~\bibnamefont
  {Levin}}\ and\ \bibinfo {author} {\bibfnamefont {X.-G.}\ \bibnamefont
  {Wen}},\ }\bibfield  {title} {\bibinfo {title} {Detecting topological order
  in a ground state wave function},\ }\bibfield  {journal} {\bibinfo  {journal}
  {Physical Review Letters}\ }\textbf {\bibinfo {volume} {96}},\ \href
  {https://doi.org/10.1103/physrevlett.96.110405}
  {10.1103/physrevlett.96.110405} (\bibinfo {year} {2006})\BibitemShut
  {NoStop}%
\bibitem [{\citenamefont {Kitaev}\ and\ \citenamefont
  {Preskill}(2006)}]{PreskillKitaev}%
  \BibitemOpen
  \bibfield  {author} {\bibinfo {author} {\bibfnamefont {A.}~\bibnamefont
  {Kitaev}}\ and\ \bibinfo {author} {\bibfnamefont {J.}~\bibnamefont
  {Preskill}},\ }\bibfield  {title} {\bibinfo {title} {Topological entanglement
  entropy},\ }\href {https://doi.org/10.1103/PhysRevLett.96.110404} {\bibfield
  {journal} {\bibinfo  {journal} {Phys. Rev. Lett.}\ }\textbf {\bibinfo
  {volume} {96}},\ \bibinfo {pages} {110404} (\bibinfo {year}
  {2006})}\BibitemShut {NoStop}%
\bibitem [{\citenamefont {Skinner}\ \emph {et~al.}(2019)\citenamefont
  {Skinner}, \citenamefont {Ruhman},\ and\ \citenamefont
  {Nahum}}]{Skinner_2019}%
  \BibitemOpen
  \bibfield  {author} {\bibinfo {author} {\bibfnamefont {B.}~\bibnamefont
  {Skinner}}, \bibinfo {author} {\bibfnamefont {J.}~\bibnamefont {Ruhman}},\
  and\ \bibinfo {author} {\bibfnamefont {A.}~\bibnamefont {Nahum}},\ }\bibfield
   {title} {\bibinfo {title} {Measurement-induced phase transitions in the
  dynamics of entanglement},\ }\bibfield  {journal} {\bibinfo  {journal}
  {Physical Review X}\ }\textbf {\bibinfo {volume} {9}},\ \href
  {https://doi.org/10.1103/physrevx.9.031009} {10.1103/physrevx.9.031009}
  (\bibinfo {year} {2019})\BibitemShut {NoStop}%
\bibitem [{\citenamefont {Ryu}\ and\ \citenamefont
  {Takayanagi}(2006)}]{Ryu_2006}%
  \BibitemOpen
  \bibfield  {author} {\bibinfo {author} {\bibfnamefont {S.}~\bibnamefont
  {Ryu}}\ and\ \bibinfo {author} {\bibfnamefont {T.}~\bibnamefont
  {Takayanagi}},\ }\bibfield  {title} {\bibinfo {title} {Holographic derivation
  of entanglement entropy from the anti–de sitter space/conformal field
  theory correspondence},\ }\bibfield  {journal} {\bibinfo  {journal} {Physical
  Review Letters}\ }\textbf {\bibinfo {volume} {96}},\ \href
  {https://doi.org/10.1103/physrevlett.96.181602}
  {10.1103/physrevlett.96.181602} (\bibinfo {year} {2006})\BibitemShut
  {NoStop}%
\bibitem [{\citenamefont {Gottesman}(1998)}]{gottesmanknill}%
  \BibitemOpen
  \bibfield  {author} {\bibinfo {author} {\bibfnamefont {D.}~\bibnamefont
  {Gottesman}},\ }\bibfield  {title} {\bibinfo {title} {The heisenberg
  representation of quantum computers},\ }\href@noop {} {\bibfield  {journal}
  {\bibinfo  {journal} {arXiv preprint quant-ph/9807006}\ } (\bibinfo {year}
  {1998})}\BibitemShut {NoStop}%
\bibitem [{\citenamefont {Aaronson}\ and\ \citenamefont
  {Gottesman}(2004)}]{Aaronson_2004}%
  \BibitemOpen
  \bibfield  {author} {\bibinfo {author} {\bibfnamefont {S.}~\bibnamefont
  {Aaronson}}\ and\ \bibinfo {author} {\bibfnamefont {D.}~\bibnamefont
  {Gottesman}},\ }\bibfield  {title} {\bibinfo {title} {Improved simulation of
  stabilizer circuits},\ }\bibfield  {journal} {\bibinfo  {journal} {Physical
  Review A}\ }\textbf {\bibinfo {volume} {70}},\ \href
  {https://doi.org/10.1103/physreva.70.052328} {10.1103/physreva.70.052328}
  (\bibinfo {year} {2004})\BibitemShut {NoStop}%
\bibitem [{\citenamefont {Howard}\ and\ \citenamefont
  {Campbell}(2017)}]{HowardCampbell}%
  \BibitemOpen
  \bibfield  {author} {\bibinfo {author} {\bibfnamefont {M.}~\bibnamefont
  {Howard}}\ and\ \bibinfo {author} {\bibfnamefont {E.}~\bibnamefont
  {Campbell}},\ }\bibfield  {title} {\bibinfo {title} {Application of a
  resource theory for magic states to fault-tolerant quantum computing},\
  }\href {https://doi.org/10.1103/PhysRevLett.118.090501} {\bibfield  {journal}
  {\bibinfo  {journal} {Phys. Rev. Lett.}\ }\textbf {\bibinfo {volume} {118}},\
  \bibinfo {pages} {090501} (\bibinfo {year} {2017})}\BibitemShut {NoStop}%
\bibitem [{\citenamefont {Veitch}\ \emph {et~al.}(2014)\citenamefont {Veitch},
  \citenamefont {Hamed~Mousavian}, \citenamefont {Gottesman},\ and\
  \citenamefont {Emerson}}]{Veitch_2014}%
  \BibitemOpen
  \bibfield  {author} {\bibinfo {author} {\bibfnamefont {V.}~\bibnamefont
  {Veitch}}, \bibinfo {author} {\bibfnamefont {S.~A.}\ \bibnamefont
  {Hamed~Mousavian}}, \bibinfo {author} {\bibfnamefont {D.}~\bibnamefont
  {Gottesman}},\ and\ \bibinfo {author} {\bibfnamefont {J.}~\bibnamefont
  {Emerson}},\ }\bibfield  {title} {\bibinfo {title} {The resource theory of
  stabilizer quantum computation},\ }\href
  {https://doi.org/10.1088/1367-2630/16/1/013009} {\bibfield  {journal}
  {\bibinfo  {journal} {New Journal of Physics}\ }\textbf {\bibinfo {volume}
  {16}},\ \bibinfo {pages} {013009} (\bibinfo {year} {2014})}\BibitemShut
  {NoStop}%
\bibitem [{\citenamefont {Heinrich}\ and\ \citenamefont
  {Gross}(2019)}]{Heinrich_2019}%
  \BibitemOpen
  \bibfield  {author} {\bibinfo {author} {\bibfnamefont {M.}~\bibnamefont
  {Heinrich}}\ and\ \bibinfo {author} {\bibfnamefont {D.}~\bibnamefont
  {Gross}},\ }\bibfield  {title} {\bibinfo {title} {Robustness of magic and
  symmetries of the stabiliser polytope},\ }\href
  {https://doi.org/10.22331/q-2019-04-08-132} {\bibfield  {journal} {\bibinfo
  {journal} {Quantum}\ }\textbf {\bibinfo {volume} {3}},\ \bibinfo {pages}
  {132} (\bibinfo {year} {2019})}\BibitemShut {NoStop}%
\bibitem [{\citenamefont {Pashayan}\ \emph {et~al.}(2015)\citenamefont
  {Pashayan}, \citenamefont {Wallman},\ and\ \citenamefont
  {Bartlett}}]{magicMC}%
  \BibitemOpen
  \bibfield  {author} {\bibinfo {author} {\bibfnamefont {H.}~\bibnamefont
  {Pashayan}}, \bibinfo {author} {\bibfnamefont {J.~J.}\ \bibnamefont
  {Wallman}},\ and\ \bibinfo {author} {\bibfnamefont {S.~D.}\ \bibnamefont
  {Bartlett}},\ }\bibfield  {title} {\bibinfo {title} {Estimating outcome
  probabilities of quantum circuits using quasiprobabilities},\ }\href
  {https://doi.org/10.1103/PhysRevLett.115.070501} {\bibfield  {journal}
  {\bibinfo  {journal} {Phys. Rev. Lett.}\ }\textbf {\bibinfo {volume} {115}},\
  \bibinfo {pages} {070501} (\bibinfo {year} {2015})}\BibitemShut {NoStop}%
\bibitem [{\citenamefont {Howard}\ \emph {et~al.}(2014)\citenamefont {Howard},
  \citenamefont {Wallman}, \citenamefont {Veitch},\ and\ \citenamefont
  {Emerson}}]{Howard_2014}%
  \BibitemOpen
  \bibfield  {author} {\bibinfo {author} {\bibfnamefont {M.}~\bibnamefont
  {Howard}}, \bibinfo {author} {\bibfnamefont {J.}~\bibnamefont {Wallman}},
  \bibinfo {author} {\bibfnamefont {V.}~\bibnamefont {Veitch}},\ and\ \bibinfo
  {author} {\bibfnamefont {J.}~\bibnamefont {Emerson}},\ }\bibfield  {title}
  {\bibinfo {title} {Contextuality supplies the ‘magic’ for quantum
  computation},\ }\href {https://doi.org/10.1038/nature13460} {\bibfield
  {journal} {\bibinfo  {journal} {Nature}\ }\textbf {\bibinfo {volume} {510}},\
  \bibinfo {pages} {351–355} (\bibinfo {year} {2014})}\BibitemShut {NoStop}%
\bibitem [{\citenamefont {Bravyi}\ and\ \citenamefont
  {Kitaev}(2005)}]{Bravyi_2005}%
  \BibitemOpen
  \bibfield  {author} {\bibinfo {author} {\bibfnamefont {S.}~\bibnamefont
  {Bravyi}}\ and\ \bibinfo {author} {\bibfnamefont {A.}~\bibnamefont
  {Kitaev}},\ }\bibfield  {title} {\bibinfo {title} {Universal quantum
  computation with ideal clifford gates and noisy ancillas},\ }\bibfield
  {journal} {\bibinfo  {journal} {Physical Review A}\ }\textbf {\bibinfo
  {volume} {71}},\ \href {https://doi.org/10.1103/physreva.71.022316}
  {10.1103/physreva.71.022316} (\bibinfo {year} {2005})\BibitemShut {NoStop}%
\bibitem [{\citenamefont {Bravyi}\ and\ \citenamefont
  {Haah}(2012)}]{Bravyi_2012}%
  \BibitemOpen
  \bibfield  {author} {\bibinfo {author} {\bibfnamefont {S.}~\bibnamefont
  {Bravyi}}\ and\ \bibinfo {author} {\bibfnamefont {J.}~\bibnamefont {Haah}},\
  }\bibfield  {title} {\bibinfo {title} {Magic-state distillation with low
  overhead},\ }\bibfield  {journal} {\bibinfo  {journal} {Physical Review A}\
  }\textbf {\bibinfo {volume} {86}},\ \href
  {https://doi.org/10.1103/physreva.86.052329} {10.1103/physreva.86.052329}
  (\bibinfo {year} {2012})\BibitemShut {NoStop}%
\bibitem [{\citenamefont {Sarkar}\ \emph {et~al.}(2020)\citenamefont {Sarkar},
  \citenamefont {Mukhopadhyay},\ and\ \citenamefont {Bayat}}]{Sarkar_2020}%
  \BibitemOpen
  \bibfield  {author} {\bibinfo {author} {\bibfnamefont {S.}~\bibnamefont
  {Sarkar}}, \bibinfo {author} {\bibfnamefont {C.}~\bibnamefont
  {Mukhopadhyay}},\ and\ \bibinfo {author} {\bibfnamefont {A.}~\bibnamefont
  {Bayat}},\ }\bibfield  {title} {\bibinfo {title} {Characterization of an
  operational quantum resource in a critical many-body system},\ }\href
  {https://doi.org/10.1088/1367-2630/aba919} {\bibfield  {journal} {\bibinfo
  {journal} {New Journal of Physics}\ }\textbf {\bibinfo {volume} {22}},\
  \bibinfo {pages} {083077} (\bibinfo {year} {2020})}\BibitemShut {NoStop}%
\bibitem [{\citenamefont {White}\ and\ \citenamefont
  {Wilson}(2020)}]{white2020manahaarrandomstates}%
  \BibitemOpen
  \bibfield  {author} {\bibinfo {author} {\bibfnamefont {C.~D.}\ \bibnamefont
  {White}}\ and\ \bibinfo {author} {\bibfnamefont {J.~H.}\ \bibnamefont
  {Wilson}},\ }\bibfield  {title} {\bibinfo {title} {Mana in haar-random
  states},\ }\href@noop {} {\bibfield  {journal} {\bibinfo  {journal} {arXiv
  preprint arXiv:2011.13937}\ } (\bibinfo {year} {2020})}\BibitemShut {NoStop}%
\bibitem [{\citenamefont {White}\ \emph {et~al.}(2021)\citenamefont {White},
  \citenamefont {Cao},\ and\ \citenamefont {Swingle}}]{White:2020zoz}%
  \BibitemOpen
  \bibfield  {author} {\bibinfo {author} {\bibfnamefont {C.~D.}\ \bibnamefont
  {White}}, \bibinfo {author} {\bibfnamefont {C.}~\bibnamefont {Cao}},\ and\
  \bibinfo {author} {\bibfnamefont {B.}~\bibnamefont {Swingle}},\ }\bibfield
  {title} {\bibinfo {title} {{Conformal field theories are magical}},\ }\href
  {https://doi.org/10.1103/PhysRevB.103.075145} {\bibfield  {journal} {\bibinfo
   {journal} {Phys. Rev. B}\ }\textbf {\bibinfo {volume} {103}},\ \bibinfo
  {pages} {075145} (\bibinfo {year} {2021})},\ \Eprint
  {https://arxiv.org/abs/2007.01303} {arXiv:2007.01303 [quant-ph]} \BibitemShut
  {NoStop}%
\bibitem [{\citenamefont {Liu}\ and\ \citenamefont {Winter}(2022)}]{Liu_2022}%
  \BibitemOpen
  \bibfield  {author} {\bibinfo {author} {\bibfnamefont {Z.-W.}\ \bibnamefont
  {Liu}}\ and\ \bibinfo {author} {\bibfnamefont {A.}~\bibnamefont {Winter}},\
  }\bibfield  {title} {\bibinfo {title} {Many-body quantum magic},\ }\bibfield
  {journal} {\bibinfo  {journal} {PRX Quantum}\ }\textbf {\bibinfo {volume}
  {3}},\ \href {https://doi.org/10.1103/prxquantum.3.020333}
  {10.1103/prxquantum.3.020333} (\bibinfo {year} {2022})\BibitemShut {NoStop}%
\bibitem [{\citenamefont {Turkeshi}\ \emph {et~al.}(2025)\citenamefont
  {Turkeshi}, \citenamefont {Dymarsky},\ and\ \citenamefont
  {Sierant}}]{Paulispec_haarrand}%
  \BibitemOpen
  \bibfield  {author} {\bibinfo {author} {\bibfnamefont {X.}~\bibnamefont
  {Turkeshi}}, \bibinfo {author} {\bibfnamefont {A.}~\bibnamefont {Dymarsky}},\
  and\ \bibinfo {author} {\bibfnamefont {P.}~\bibnamefont {Sierant}},\
  }\bibfield  {title} {\bibinfo {title} {Pauli spectrum and nonstabilizerness
  of typical quantum many-body states},\ }\href
  {https://doi.org/10.1103/PhysRevB.111.054301} {\bibfield  {journal} {\bibinfo
   {journal} {Phys. Rev. B}\ }\textbf {\bibinfo {volume} {111}},\ \bibinfo
  {pages} {054301} (\bibinfo {year} {2025})}\BibitemShut {NoStop}%
\bibitem [{\citenamefont {Chen}\ \emph {et~al.}(2024)\citenamefont {Chen},
  \citenamefont {Yan},\ and\ \citenamefont {Zhou}}]{chen2024magic}%
  \BibitemOpen
  \bibfield  {author} {\bibinfo {author} {\bibfnamefont {J.}~\bibnamefont
  {Chen}}, \bibinfo {author} {\bibfnamefont {Y.}~\bibnamefont {Yan}},\ and\
  \bibinfo {author} {\bibfnamefont {Y.}~\bibnamefont {Zhou}},\ }\bibfield
  {title} {\bibinfo {title} {Magic of quantum hypergraph states},\ }\href@noop
  {} {\bibfield  {journal} {\bibinfo  {journal} {Quantum}\ }\textbf {\bibinfo
  {volume} {8}},\ \bibinfo {pages} {1351} (\bibinfo {year} {2024})}\BibitemShut
  {NoStop}%
\bibitem [{\citenamefont {Bravyi}\ \emph {et~al.}(2019)\citenamefont {Bravyi},
  \citenamefont {Browne}, \citenamefont {Calpin}, \citenamefont {Campbell},
  \citenamefont {Gosset},\ and\ \citenamefont {Howard}}]{bravyi2019simulation}%
  \BibitemOpen
  \bibfield  {author} {\bibinfo {author} {\bibfnamefont {S.}~\bibnamefont
  {Bravyi}}, \bibinfo {author} {\bibfnamefont {D.}~\bibnamefont {Browne}},
  \bibinfo {author} {\bibfnamefont {P.}~\bibnamefont {Calpin}}, \bibinfo
  {author} {\bibfnamefont {E.}~\bibnamefont {Campbell}}, \bibinfo {author}
  {\bibfnamefont {D.}~\bibnamefont {Gosset}},\ and\ \bibinfo {author}
  {\bibfnamefont {M.}~\bibnamefont {Howard}},\ }\bibfield  {title} {\bibinfo
  {title} {Simulation of quantum circuits by low-rank stabilizer
  decompositions},\ }\href@noop {} {\bibfield  {journal} {\bibinfo  {journal}
  {Quantum}\ }\textbf {\bibinfo {volume} {3}},\ \bibinfo {pages} {181}
  (\bibinfo {year} {2019})}\BibitemShut {NoStop}%
\bibitem [{\citenamefont {Gu}\ \emph {et~al.}(2025)\citenamefont {Gu},
  \citenamefont {Oliviero},\ and\ \citenamefont {Leone}}]{PRXQuantum.6.020324}%
  \BibitemOpen
  \bibfield  {author} {\bibinfo {author} {\bibfnamefont {A.}~\bibnamefont
  {Gu}}, \bibinfo {author} {\bibfnamefont {S.~F.}\ \bibnamefont {Oliviero}},\
  and\ \bibinfo {author} {\bibfnamefont {L.}~\bibnamefont {Leone}},\ }\bibfield
   {title} {\bibinfo {title} {Magic-induced computational separation in
  entanglement theory},\ }\href {https://doi.org/10.1103/PRXQuantum.6.020324}
  {\bibfield  {journal} {\bibinfo  {journal} {PRX Quantum}\ }\textbf {\bibinfo
  {volume} {6}},\ \bibinfo {pages} {020324} (\bibinfo {year}
  {2025})}\BibitemShut {NoStop}%
\bibitem [{\citenamefont {Niroula}\ \emph {et~al.}(2024)\citenamefont
  {Niroula}, \citenamefont {White}, \citenamefont {Wang}, \citenamefont
  {Johri}, \citenamefont {Zhu}, \citenamefont {Monroe}, \citenamefont {Noel},\
  and\ \citenamefont {Gullans}}]{Niroula:2023meg}%
  \BibitemOpen
  \bibfield  {author} {\bibinfo {author} {\bibfnamefont {P.}~\bibnamefont
  {Niroula}}, \bibinfo {author} {\bibfnamefont {C.~D.}\ \bibnamefont {White}},
  \bibinfo {author} {\bibfnamefont {Q.}~\bibnamefont {Wang}}, \bibinfo {author}
  {\bibfnamefont {S.}~\bibnamefont {Johri}}, \bibinfo {author} {\bibfnamefont
  {D.}~\bibnamefont {Zhu}}, \bibinfo {author} {\bibfnamefont {C.}~\bibnamefont
  {Monroe}}, \bibinfo {author} {\bibfnamefont {C.}~\bibnamefont {Noel}},\ and\
  \bibinfo {author} {\bibfnamefont {M.~J.}\ \bibnamefont {Gullans}},\
  }\bibfield  {title} {\bibinfo {title} {{Phase transition in magic with random
  quantum circuits}},\ }\href {https://doi.org/10.1038/s41567-024-02637-3}
  {\bibfield  {journal} {\bibinfo  {journal} {Nature Phys.}\ }\textbf {\bibinfo
  {volume} {20}},\ \bibinfo {pages} {1786} (\bibinfo {year} {2024})},\ \Eprint
  {https://arxiv.org/abs/2304.10481} {arXiv:2304.10481 [quant-ph]} \BibitemShut
  {NoStop}%
\bibitem [{\citenamefont {Catalano}\ \emph {et~al.}(2024)\citenamefont
  {Catalano}, \citenamefont {Odavi{\'c}}, \citenamefont {Torre}, \citenamefont
  {Hamma}, \citenamefont {Franchini},\ and\ \citenamefont
  {Giampaolo}}]{magic_transition}%
  \BibitemOpen
  \bibfield  {author} {\bibinfo {author} {\bibfnamefont {A.}~\bibnamefont
  {Catalano}}, \bibinfo {author} {\bibfnamefont {J.}~\bibnamefont
  {Odavi{\'c}}}, \bibinfo {author} {\bibfnamefont {G.}~\bibnamefont {Torre}},
  \bibinfo {author} {\bibfnamefont {A.}~\bibnamefont {Hamma}}, \bibinfo
  {author} {\bibfnamefont {F.}~\bibnamefont {Franchini}},\ and\ \bibinfo
  {author} {\bibfnamefont {S.}~\bibnamefont {Giampaolo}},\ }\bibfield  {title}
  {\bibinfo {title} {Magic phase transition and non-local complexity in
  generalized $ w $ state},\ }\href@noop {} {\bibfield  {journal} {\bibinfo
  {journal} {arXiv preprint arXiv:2406.19457}\ } (\bibinfo {year}
  {2024})}\BibitemShut {NoStop}%
\bibitem [{\citenamefont {Zhou}\ \emph {et~al.}(2020)\citenamefont {Zhou},
  \citenamefont {Yang}, \citenamefont {Hamma},\ and\ \citenamefont
  {Chamon}}]{zhou2020single}%
  \BibitemOpen
  \bibfield  {author} {\bibinfo {author} {\bibfnamefont {S.}~\bibnamefont
  {Zhou}}, \bibinfo {author} {\bibfnamefont {Z.-C.}\ \bibnamefont {Yang}},
  \bibinfo {author} {\bibfnamefont {A.}~\bibnamefont {Hamma}},\ and\ \bibinfo
  {author} {\bibfnamefont {C.}~\bibnamefont {Chamon}},\ }\bibfield  {title}
  {\bibinfo {title} {Single t gate in a clifford circuit drives transition to
  universal entanglement spectrum statistics},\ }\href@noop {} {\bibfield
  {journal} {\bibinfo  {journal} {SciPost Physics}\ }\textbf {\bibinfo {volume}
  {9}},\ \bibinfo {pages} {087} (\bibinfo {year} {2020})}\BibitemShut {NoStop}%
\bibitem [{\citenamefont {Sewell}\ and\ \citenamefont
  {White}(2022)}]{manadynamics}%
  \BibitemOpen
  \bibfield  {author} {\bibinfo {author} {\bibfnamefont {T.~J.}\ \bibnamefont
  {Sewell}}\ and\ \bibinfo {author} {\bibfnamefont {C.~D.}\ \bibnamefont
  {White}},\ }\bibfield  {title} {\bibinfo {title} {Mana and thermalization:
  Probing the feasibility of near-clifford hamiltonian simulation},\ }\href
  {https://doi.org/10.1103/PhysRevB.106.125130} {\bibfield  {journal} {\bibinfo
   {journal} {Phys. Rev. B}\ }\textbf {\bibinfo {volume} {106}},\ \bibinfo
  {pages} {125130} (\bibinfo {year} {2022})}\BibitemShut {NoStop}%
\bibitem [{\citenamefont {Rattacaso}\ \emph {et~al.}(2023)\citenamefont
  {Rattacaso}, \citenamefont {Leone}, \citenamefont {Oliviero},\ and\
  \citenamefont {Hamma}}]{SREdynamics}%
  \BibitemOpen
  \bibfield  {author} {\bibinfo {author} {\bibfnamefont {D.}~\bibnamefont
  {Rattacaso}}, \bibinfo {author} {\bibfnamefont {L.}~\bibnamefont {Leone}},
  \bibinfo {author} {\bibfnamefont {S.~F.~E.}\ \bibnamefont {Oliviero}},\ and\
  \bibinfo {author} {\bibfnamefont {A.}~\bibnamefont {Hamma}},\ }\bibfield
  {title} {\bibinfo {title} {Stabilizer entropy dynamics after a quantum
  quench},\ }\href {https://doi.org/10.1103/PhysRevA.108.042407} {\bibfield
  {journal} {\bibinfo  {journal} {Phys. Rev. A}\ }\textbf {\bibinfo {volume}
  {108}},\ \bibinfo {pages} {042407} (\bibinfo {year} {2023})}\BibitemShut
  {NoStop}%
\bibitem [{\citenamefont {Zhang}\ and\ \citenamefont
  {Gu}(2024)}]{zhang2024quantummagicdynamicsrandom}%
  \BibitemOpen
  \bibfield  {author} {\bibinfo {author} {\bibfnamefont {Y.}~\bibnamefont
  {Zhang}}\ and\ \bibinfo {author} {\bibfnamefont {Y.}~\bibnamefont {Gu}},\
  }\bibfield  {title} {\bibinfo {title} {Quantum magic dynamics in random
  circuits},\ }\href@noop {} {\bibfield  {journal} {\bibinfo  {journal} {arXiv
  preprint arXiv:2410.21128}\ } (\bibinfo {year} {2024})}\BibitemShut {NoStop}%
\bibitem [{\citenamefont {Bejan}\ \emph {et~al.}(2024)\citenamefont {Bejan},
  \citenamefont {McLauchlan},\ and\ \citenamefont
  {B\'eri}}]{monitored_magic_transit}%
  \BibitemOpen
  \bibfield  {author} {\bibinfo {author} {\bibfnamefont {M.}~\bibnamefont
  {Bejan}}, \bibinfo {author} {\bibfnamefont {C.}~\bibnamefont {McLauchlan}},\
  and\ \bibinfo {author} {\bibfnamefont {B.}~\bibnamefont {B\'eri}},\
  }\bibfield  {title} {\bibinfo {title} {Dynamical magic transitions in
  monitored clifford+$t$ circuits},\ }\href
  {https://doi.org/10.1103/PRXQuantum.5.030332} {\bibfield  {journal} {\bibinfo
   {journal} {PRX Quantum}\ }\textbf {\bibinfo {volume} {5}},\ \bibinfo {pages}
  {030332} (\bibinfo {year} {2024})}\BibitemShut {NoStop}%
\bibitem [{\citenamefont {Wang}\ \emph {et~al.}(2025)\citenamefont {Wang},
  \citenamefont {Yang}, \citenamefont {Zhou},\ and\ \citenamefont
  {Chen}}]{wang2025magic}%
  \BibitemOpen
  \bibfield  {author} {\bibinfo {author} {\bibfnamefont {C.}~\bibnamefont
  {Wang}}, \bibinfo {author} {\bibfnamefont {Z.-C.}\ \bibnamefont {Yang}},
  \bibinfo {author} {\bibfnamefont {T.}~\bibnamefont {Zhou}},\ and\ \bibinfo
  {author} {\bibfnamefont {X.}~\bibnamefont {Chen}},\ }\bibfield  {title}
  {\bibinfo {title} {Magic transition in monitored free fermion dynamics},\
  }\href@noop {} {\bibfield  {journal} {\bibinfo  {journal} {arXiv preprint
  arXiv:2507.10688}\ } (\bibinfo {year} {2025})}\BibitemShut {NoStop}%
\bibitem [{\citenamefont {Fux}\ \emph {et~al.}(2024)\citenamefont {Fux},
  \citenamefont {B{\'e}ri}, \citenamefont {Fazio},\ and\ \citenamefont
  {Tirrito}}]{clifford_disentangling}%
  \BibitemOpen
  \bibfield  {author} {\bibinfo {author} {\bibfnamefont {G.~E.}\ \bibnamefont
  {Fux}}, \bibinfo {author} {\bibfnamefont {B.}~\bibnamefont {B{\'e}ri}},
  \bibinfo {author} {\bibfnamefont {R.}~\bibnamefont {Fazio}},\ and\ \bibinfo
  {author} {\bibfnamefont {E.}~\bibnamefont {Tirrito}},\ }\bibfield  {title}
  {\bibinfo {title} {Disentangling unitary dynamics with classically simulable
  quantum circuits},\ }\href@noop {} {\bibfield  {journal} {\bibinfo  {journal}
  {arXiv preprint arXiv:2410.09001}\ } (\bibinfo {year} {2024})}\BibitemShut
  {NoStop}%
\bibitem [{\citenamefont {Fan}\ \emph {et~al.}(2025)\citenamefont {Fan},
  \citenamefont {Qian}, \citenamefont {Zhang}, \citenamefont {Huang},
  \citenamefont {Qin},\ and\ \citenamefont {Xiang}}]{clifford_disentangling2}%
  \BibitemOpen
  \bibfield  {author} {\bibinfo {author} {\bibfnamefont {C.}~\bibnamefont
  {Fan}}, \bibinfo {author} {\bibfnamefont {X.}~\bibnamefont {Qian}}, \bibinfo
  {author} {\bibfnamefont {H.-C.}\ \bibnamefont {Zhang}}, \bibinfo {author}
  {\bibfnamefont {R.-Z.}\ \bibnamefont {Huang}}, \bibinfo {author}
  {\bibfnamefont {M.}~\bibnamefont {Qin}},\ and\ \bibinfo {author}
  {\bibfnamefont {T.}~\bibnamefont {Xiang}},\ }\bibfield  {title} {\bibinfo
  {title} {Disentangling critical quantum spin chains with clifford circuits},\
  }\href {https://doi.org/10.1103/PhysRevB.111.085121} {\bibfield  {journal}
  {\bibinfo  {journal} {Phys. Rev. B}\ }\textbf {\bibinfo {volume} {111}},\
  \bibinfo {pages} {085121} (\bibinfo {year} {2025})}\BibitemShut {NoStop}%
\bibitem [{\citenamefont {Masot-Llima}\ and\ \citenamefont
  {Garcia-Saez}(2024)}]{PhysRevLett.133.230601}%
  \BibitemOpen
  \bibfield  {author} {\bibinfo {author} {\bibfnamefont {S.}~\bibnamefont
  {Masot-Llima}}\ and\ \bibinfo {author} {\bibfnamefont {A.}~\bibnamefont
  {Garcia-Saez}},\ }\bibfield  {title} {\bibinfo {title} {Stabilizer tensor
  networks: Universal quantum simulator on a basis of stabilizer states},\
  }\href {https://doi.org/10.1103/PhysRevLett.133.230601} {\bibfield  {journal}
  {\bibinfo  {journal} {Phys. Rev. Lett.}\ }\textbf {\bibinfo {volume} {133}},\
  \bibinfo {pages} {230601} (\bibinfo {year} {2024})}\BibitemShut {NoStop}%
\bibitem [{\citenamefont {Dowling}\ \emph {et~al.}(2025)\citenamefont
  {Dowling}, \citenamefont {Modi},\ and\ \citenamefont
  {White}}]{dowling2025bridging}%
  \BibitemOpen
  \bibfield  {author} {\bibinfo {author} {\bibfnamefont {N.}~\bibnamefont
  {Dowling}}, \bibinfo {author} {\bibfnamefont {K.}~\bibnamefont {Modi}},\ and\
  \bibinfo {author} {\bibfnamefont {G.~A.}\ \bibnamefont {White}},\ }\bibfield
  {title} {\bibinfo {title} {Bridging entanglement and magic resources through
  operator space},\ }\href@noop {} {\bibfield  {journal} {\bibinfo  {journal}
  {arXiv preprint arXiv:2501.18679}\ } (\bibinfo {year} {2025})}\BibitemShut
  {NoStop}%
\bibitem [{\citenamefont {Andreadakis}\ and\ \citenamefont
  {Zanardi}(2025)}]{andreadakis2025exact}%
  \BibitemOpen
  \bibfield  {author} {\bibinfo {author} {\bibfnamefont {F.}~\bibnamefont
  {Andreadakis}}\ and\ \bibinfo {author} {\bibfnamefont {P.}~\bibnamefont
  {Zanardi}},\ }\bibfield  {title} {\bibinfo {title} {An exact link between
  nonlocal magic and operator entanglement},\ }\href@noop {} {\bibfield
  {journal} {\bibinfo  {journal} {arXiv preprint arXiv:2504.09360}\ } (\bibinfo
  {year} {2025})}\BibitemShut {NoStop}%
\bibitem [{\citenamefont {Cao}(2024)}]{Cao:2023mzo}%
  \BibitemOpen
  \bibfield  {author} {\bibinfo {author} {\bibfnamefont {C.}~\bibnamefont
  {Cao}},\ }\bibfield  {title} {\bibinfo {title} {{Non-trivial area operators
  require non-local magic}},\ }\href {https://doi.org/10.1007/JHEP11(2024)105}
  {\bibfield  {journal} {\bibinfo  {journal} {JHEP}\ }\textbf {\bibinfo
  {volume} {11}},\ \bibinfo {pages} {105}},\ \Eprint
  {https://arxiv.org/abs/2306.14996} {arXiv:2306.14996 [hep-th]} \BibitemShut
  {NoStop}%
\bibitem [{\citenamefont {Cao}\ \emph {et~al.}(2024)\citenamefont {Cao},
  \citenamefont {Cheng}, \citenamefont {Hamma}, \citenamefont {Leone},
  \citenamefont {Munizzi},\ and\ \citenamefont {Oliviero}}]{grav_magical}%
  \BibitemOpen
  \bibfield  {author} {\bibinfo {author} {\bibfnamefont {C.}~\bibnamefont
  {Cao}}, \bibinfo {author} {\bibfnamefont {G.}~\bibnamefont {Cheng}}, \bibinfo
  {author} {\bibfnamefont {A.}~\bibnamefont {Hamma}}, \bibinfo {author}
  {\bibfnamefont {L.}~\bibnamefont {Leone}}, \bibinfo {author} {\bibfnamefont
  {W.}~\bibnamefont {Munizzi}},\ and\ \bibinfo {author} {\bibfnamefont {S.~F.}\
  \bibnamefont {Oliviero}},\ }\bibfield  {title} {\bibinfo {title}
  {Gravitational back-reaction is magical},\ }\href@noop {} {\bibfield
  {journal} {\bibinfo  {journal} {arXiv preprint arXiv:2403.07056}\ } (\bibinfo
  {year} {2024})}\BibitemShut {NoStop}%
\bibitem [{\citenamefont {Tirrito}\ \emph {et~al.}(2024)\citenamefont
  {Tirrito}, \citenamefont {Tarabunga}, \citenamefont {Lami}, \citenamefont
  {Chanda}, \citenamefont {Leone}, \citenamefont {Oliviero}, \citenamefont
  {Dalmonte}, \citenamefont {Collura},\ and\ \citenamefont
  {Hamma}}]{magic_spec}%
  \BibitemOpen
  \bibfield  {author} {\bibinfo {author} {\bibfnamefont {E.}~\bibnamefont
  {Tirrito}}, \bibinfo {author} {\bibfnamefont {P.~S.}\ \bibnamefont
  {Tarabunga}}, \bibinfo {author} {\bibfnamefont {G.}~\bibnamefont {Lami}},
  \bibinfo {author} {\bibfnamefont {T.}~\bibnamefont {Chanda}}, \bibinfo
  {author} {\bibfnamefont {L.}~\bibnamefont {Leone}}, \bibinfo {author}
  {\bibfnamefont {S.~F.~E.}\ \bibnamefont {Oliviero}}, \bibinfo {author}
  {\bibfnamefont {M.}~\bibnamefont {Dalmonte}}, \bibinfo {author}
  {\bibfnamefont {M.}~\bibnamefont {Collura}},\ and\ \bibinfo {author}
  {\bibfnamefont {A.}~\bibnamefont {Hamma}},\ }\bibfield  {title} {\bibinfo
  {title} {Quantifying nonstabilizerness through entanglement spectrum
  flatness},\ }\href {https://doi.org/10.1103/PhysRevA.109.L040401} {\bibfield
  {journal} {\bibinfo  {journal} {Phys. Rev. A}\ }\textbf {\bibinfo {volume}
  {109}},\ \bibinfo {pages} {L040401} (\bibinfo {year} {2024})}\BibitemShut
  {NoStop}%
\bibitem [{\citenamefont {Viscardi}\ \emph {et~al.}(2025)\citenamefont
  {Viscardi}, \citenamefont {Dalmonte}, \citenamefont {Hamma},\ and\
  \citenamefont {Tirrito}}]{stabent25}%
  \BibitemOpen
  \bibfield  {author} {\bibinfo {author} {\bibfnamefont {M.}~\bibnamefont
  {Viscardi}}, \bibinfo {author} {\bibfnamefont {M.}~\bibnamefont {Dalmonte}},
  \bibinfo {author} {\bibfnamefont {A.}~\bibnamefont {Hamma}},\ and\ \bibinfo
  {author} {\bibfnamefont {E.}~\bibnamefont {Tirrito}},\ }\href
  {https://arxiv.org/abs/2503.08620} {\bibinfo {title} {Interplay of
  entanglement structures and stabilizer entropy in spin models}} (\bibinfo
  {year} {2025}),\ \Eprint {https://arxiv.org/abs/2503.08620} {arXiv:2503.08620
  [quant-ph]} \BibitemShut {NoStop}%
\bibitem [{\citenamefont {Iannotti}\ \emph {et~al.}(2025)\citenamefont
  {Iannotti}, \citenamefont {Esposito}, \citenamefont {Venuti},\ and\
  \citenamefont {Hamma}}]{iannotti2025}%
  \BibitemOpen
  \bibfield  {author} {\bibinfo {author} {\bibfnamefont {D.}~\bibnamefont
  {Iannotti}}, \bibinfo {author} {\bibfnamefont {G.}~\bibnamefont {Esposito}},
  \bibinfo {author} {\bibfnamefont {L.~C.}\ \bibnamefont {Venuti}},\ and\
  \bibinfo {author} {\bibfnamefont {A.}~\bibnamefont {Hamma}},\ }\href
  {https://arxiv.org/abs/2501.19261} {\bibinfo {title} {Entanglement and
  stabilizer entropies of random bipartite pure quantum states}} (\bibinfo
  {year} {2025}),\ \Eprint {https://arxiv.org/abs/2501.19261} {arXiv:2501.19261
  [quant-ph]} \BibitemShut {NoStop}%
\bibitem [{\citenamefont {Huang}\ \emph {et~al.}(2024)\citenamefont {Huang},
  \citenamefont {Qian},\ and\ \citenamefont {Qin}}]{magicalent}%
  \BibitemOpen
  \bibfield  {author} {\bibinfo {author} {\bibfnamefont {J.}~\bibnamefont
  {Huang}}, \bibinfo {author} {\bibfnamefont {X.}~\bibnamefont {Qian}},\ and\
  \bibinfo {author} {\bibfnamefont {M.}~\bibnamefont {Qin}},\ }\href
  {https://arxiv.org/abs/2409.16895} {\bibinfo {title} {Non-stabilizerness
  entanglement entropy: a measure of hardness in the classical simulation of
  quantum many-body systems}} (\bibinfo {year} {2024}),\ \Eprint
  {https://arxiv.org/abs/2409.16895} {arXiv:2409.16895 [quant-ph]} \BibitemShut
  {NoStop}%
\bibitem [{\citenamefont {Gu}\ \emph {et~al.}(2024)\citenamefont {Gu},
  \citenamefont {Oliviero},\ and\ \citenamefont {Leone}}]{gu2024}%
  \BibitemOpen
  \bibfield  {author} {\bibinfo {author} {\bibfnamefont {A.}~\bibnamefont
  {Gu}}, \bibinfo {author} {\bibfnamefont {S.~F.~E.}\ \bibnamefont
  {Oliviero}},\ and\ \bibinfo {author} {\bibfnamefont {L.}~\bibnamefont
  {Leone}},\ }\href {https://arxiv.org/abs/2403.19610} {\bibinfo {title}
  {Magic-induced computational separation in entanglement theory}} (\bibinfo
  {year} {2024}),\ \Eprint {https://arxiv.org/abs/2403.19610} {arXiv:2403.19610
  [quant-ph]} \BibitemShut {NoStop}%
\bibitem [{\citenamefont {Bao}\ \emph {et~al.}(2022)\citenamefont {Bao},
  \citenamefont {Cao},\ and\ \citenamefont {Su}}]{Bao_2022}%
  \BibitemOpen
  \bibfield  {author} {\bibinfo {author} {\bibfnamefont {N.}~\bibnamefont
  {Bao}}, \bibinfo {author} {\bibfnamefont {C.}~\bibnamefont {Cao}},\ and\
  \bibinfo {author} {\bibfnamefont {V.~P.}\ \bibnamefont {Su}},\ }\bibfield
  {title} {\bibinfo {title} {Magic state distillation from entangled states},\
  }\bibfield  {journal} {\bibinfo  {journal} {Physical Review A}\ }\textbf
  {\bibinfo {volume} {105}},\ \href
  {https://doi.org/10.1103/physreva.105.022602} {10.1103/physreva.105.022602}
  (\bibinfo {year} {2022})\BibitemShut {NoStop}%
\bibitem [{\citenamefont {Wei}\ and\ \citenamefont {Liu}(2024)}]{wei2024noise}%
  \BibitemOpen
  \bibfield  {author} {\bibinfo {author} {\bibfnamefont {F.}~\bibnamefont
  {Wei}}\ and\ \bibinfo {author} {\bibfnamefont {Z.-W.}\ \bibnamefont {Liu}},\
  }\bibfield  {title} {\bibinfo {title} {Noise robustness and threshold of
  many-body quantum magic},\ }\href@noop {} {\bibfield  {journal} {\bibinfo
  {journal} {arXiv preprint arXiv:2410.21215}\ } (\bibinfo {year}
  {2024})}\BibitemShut {NoStop}%
\bibitem [{\citenamefont {Odavi{\'c}}\ \emph {et~al.}(2024)\citenamefont
  {Odavi{\'c}}, \citenamefont {Viscardi},\ and\ \citenamefont
  {Hamma}}]{SREnonintegrable}%
  \BibitemOpen
  \bibfield  {author} {\bibinfo {author} {\bibfnamefont {J.}~\bibnamefont
  {Odavi{\'c}}}, \bibinfo {author} {\bibfnamefont {M.}~\bibnamefont
  {Viscardi}},\ and\ \bibinfo {author} {\bibfnamefont {A.}~\bibnamefont
  {Hamma}},\ }\bibfield  {title} {\bibinfo {title} {Stabilizer entropy in
  non-integrable quantum evolutions},\ }\href@noop {} {\bibfield  {journal}
  {\bibinfo  {journal} {arXiv preprint arXiv:2412.10228}\ } (\bibinfo {year}
  {2024})}\BibitemShut {NoStop}%
\bibitem [{\citenamefont {Turkeshi}\ \emph {et~al.}(2024)\citenamefont
  {Turkeshi}, \citenamefont {Tirrito},\ and\ \citenamefont
  {Sierant}}]{magicspread_randomcircuit}%
  \BibitemOpen
  \bibfield  {author} {\bibinfo {author} {\bibfnamefont {X.}~\bibnamefont
  {Turkeshi}}, \bibinfo {author} {\bibfnamefont {E.}~\bibnamefont {Tirrito}},\
  and\ \bibinfo {author} {\bibfnamefont {P.}~\bibnamefont {Sierant}},\
  }\bibfield  {title} {\bibinfo {title} {Magic spreading in random quantum
  circuits},\ }\href@noop {} {\bibfield  {journal} {\bibinfo  {journal} {arXiv
  preprint arXiv:2407.03929}\ } (\bibinfo {year} {2024})}\BibitemShut {NoStop}%
\bibitem [{\citenamefont {Leone}\ \emph {et~al.}(2022)\citenamefont {Leone},
  \citenamefont {Oliviero},\ and\ \citenamefont {Hamma}}]{SRE}%
  \BibitemOpen
  \bibfield  {author} {\bibinfo {author} {\bibfnamefont {L.}~\bibnamefont
  {Leone}}, \bibinfo {author} {\bibfnamefont {S.~F.~E.}\ \bibnamefont
  {Oliviero}},\ and\ \bibinfo {author} {\bibfnamefont {A.}~\bibnamefont
  {Hamma}},\ }\bibfield  {title} {\bibinfo {title} {Stabilizer r\'enyi
  entropy},\ }\href {https://doi.org/10.1103/PhysRevLett.128.050402} {\bibfield
   {journal} {\bibinfo  {journal} {Phys. Rev. Lett.}\ }\textbf {\bibinfo
  {volume} {128}},\ \bibinfo {pages} {050402} (\bibinfo {year}
  {2022})}\BibitemShut {NoStop}%
\bibitem [{\citenamefont {Leone}\ and\ \citenamefont
  {Bittel}(2024)}]{Monotone}%
  \BibitemOpen
  \bibfield  {author} {\bibinfo {author} {\bibfnamefont {L.}~\bibnamefont
  {Leone}}\ and\ \bibinfo {author} {\bibfnamefont {L.}~\bibnamefont {Bittel}},\
  }\bibfield  {title} {\bibinfo {title} {Stabilizer entropies are monotones for
  magic-state resource theory},\ }\href
  {https://doi.org/10.1103/PhysRevA.110.L040403} {\bibfield  {journal}
  {\bibinfo  {journal} {Phys. Rev. A}\ }\textbf {\bibinfo {volume} {110}},\
  \bibinfo {pages} {L040403} (\bibinfo {year} {2024})}\BibitemShut {NoStop}%
\bibitem [{SM()}]{SM}%
  \BibitemOpen
  \href@noop {} {}\bibinfo {note} {See Supplemental Material for additional
  numerical results on $\overline{M_2}$, analytical and numerical results on
  tripartite entanglement, detailed derivations of Eqs.(2) and (5) in the main
  text as well as their exact closed form expressions, proof of the Theorems
  stated in the main text, magic injection with shallow-depth brickwork
  circuits, and additional details on the numerical simulation, which includes
  Refs.~\cite{Brouwer_1996,diagram}}\BibitemShut {NoStop}%
\bibitem [{\citenamefont {Bravyi}\ \emph {et~al.}(2006)\citenamefont {Bravyi},
  \citenamefont {Fattal},\ and\ \citenamefont {Gottesman}}]{bravyi2006ghz}%
  \BibitemOpen
  \bibfield  {author} {\bibinfo {author} {\bibfnamefont {S.}~\bibnamefont
  {Bravyi}}, \bibinfo {author} {\bibfnamefont {D.}~\bibnamefont {Fattal}},\
  and\ \bibinfo {author} {\bibfnamefont {D.}~\bibnamefont {Gottesman}},\
  }\bibfield  {title} {\bibinfo {title} {Ghz extraction yield for multipartite
  stabilizer states},\ }\href@noop {} {\bibfield  {journal} {\bibinfo
  {journal} {Journal of Mathematical Physics}\ }\textbf {\bibinfo {volume}
  {47}} (\bibinfo {year} {2006})}\BibitemShut {NoStop}%
\bibitem [{\citenamefont {Jiang}\ and\ \citenamefont
  {Wang}(2023)}]{Jiang_2023}%
  \BibitemOpen
  \bibfield  {author} {\bibinfo {author} {\bibfnamefont {J.}~\bibnamefont
  {Jiang}}\ and\ \bibinfo {author} {\bibfnamefont {X.}~\bibnamefont {Wang}},\
  }\bibfield  {title} {\bibinfo {title} {{Lower Bound for the T Count Via
  Unitary Stabilizer Nullity}},\ }\href
  {https://doi.org/10.1103/PhysRevApplied.19.034052} {\bibfield  {journal}
  {\bibinfo  {journal} {Phys. Rev. Applied}\ }\textbf {\bibinfo {volume}
  {19}},\ \bibinfo {pages} {034052} (\bibinfo {year} {2023})},\ \Eprint
  {https://arxiv.org/abs/2103.09999} {arXiv:2103.09999 [quant-ph]} \BibitemShut
  {NoStop}%
\bibitem [{\citenamefont {Lo}\ and\ \citenamefont {Popescu}(1999)}]{Lo}%
  \BibitemOpen
  \bibfield  {author} {\bibinfo {author} {\bibfnamefont {H.-K.}\ \bibnamefont
  {Lo}}\ and\ \bibinfo {author} {\bibfnamefont {S.}~\bibnamefont {Popescu}},\
  }\bibfield  {title} {\bibinfo {title} {Classical communication cost of
  entanglement manipulation: Is entanglement an interconvertible resource?},\
  }\href {https://doi.org/10.1103/PhysRevLett.83.1459} {\bibfield  {journal}
  {\bibinfo  {journal} {Phys. Rev. Lett.}\ }\textbf {\bibinfo {volume} {83}},\
  \bibinfo {pages} {1459} (\bibinfo {year} {1999})}\BibitemShut {NoStop}%
\bibitem [{\citenamefont {Brouwer}\ and\ \citenamefont
  {Beenakker}(1996)}]{Brouwer_1996}%
  \BibitemOpen
  \bibfield  {author} {\bibinfo {author} {\bibfnamefont {P.~W.}\ \bibnamefont
  {Brouwer}}\ and\ \bibinfo {author} {\bibfnamefont {C.~W.~J.}\ \bibnamefont
  {Beenakker}},\ }\bibfield  {title} {\bibinfo {title} {Diagrammatic method of
  integration over the unitary group, with applications to quantum transport in
  mesoscopic systems},\ }\href {https://doi.org/10.1063/1.531667} {\bibfield
  {journal} {\bibinfo  {journal} {Journal of Mathematical Physics}\ }\textbf
  {\bibinfo {volume} {37}},\ \bibinfo {pages} {4904–4934} (\bibinfo {year}
  {1996})}\BibitemShut {NoStop}%
\bibitem [{\citenamefont {Samuel}(1980)}]{diagram}%
  \BibitemOpen
  \bibfield  {author} {\bibinfo {author} {\bibfnamefont {S.}~\bibnamefont
  {Samuel}},\ }\bibfield  {title} {\bibinfo {title} {U(n) integrals, 1/n, and
  the de wit–’t hooft anomalies},\ }\href
  {https://doi.org/10.1063/1.524386} {\bibfield  {journal} {\bibinfo  {journal}
  {Journal of Mathematical Physics}\ }\textbf {\bibinfo {volume} {21}},\
  \bibinfo {pages} {2695} (\bibinfo {year} {1980})},\ \Eprint
  {https://arxiv.org/abs/https://pubs.aip.org/aip/jmp/article-pdf/21/12/2695/19003229/2695\_1\_online.pdf}
  {https://pubs.aip.org/aip/jmp/article-pdf/21/12/2695/19003229/2695\_1\_online.pdf}
  \BibitemShut {NoStop}%
\end{thebibliography}%

\clearpage
\onecolumngrid
\appendix 

\subsection*{Supplemental Material for ``Stabilizer Entanglement Enhances Magic Injection"}\label{sec:sup}

\section{Additional numerical results on the second stabilizer R\'enyi entropy $\overline{M_2}$}

In this section, we provide additional numerical results on the Haar-average value of the second stabilizer R\'enyi entropy $\overline{M_2}$, for the setup depicted in Fig.~\ref{fig:Fig1}(a) of the main text. We find that the the average $\overline{M_2}$ is extremely close to $-{\rm log}(1-\overline{Y^{\rm lin}})$, see Fig.~\ref{fig:SM1}. This indicates that the average $\overline{Y^{\rm lin}}$ is also typical, as sample-to-sample fluctuations are small.

\begin{figure}[h]
    \centering
    \includegraphics[width=0.6\linewidth]{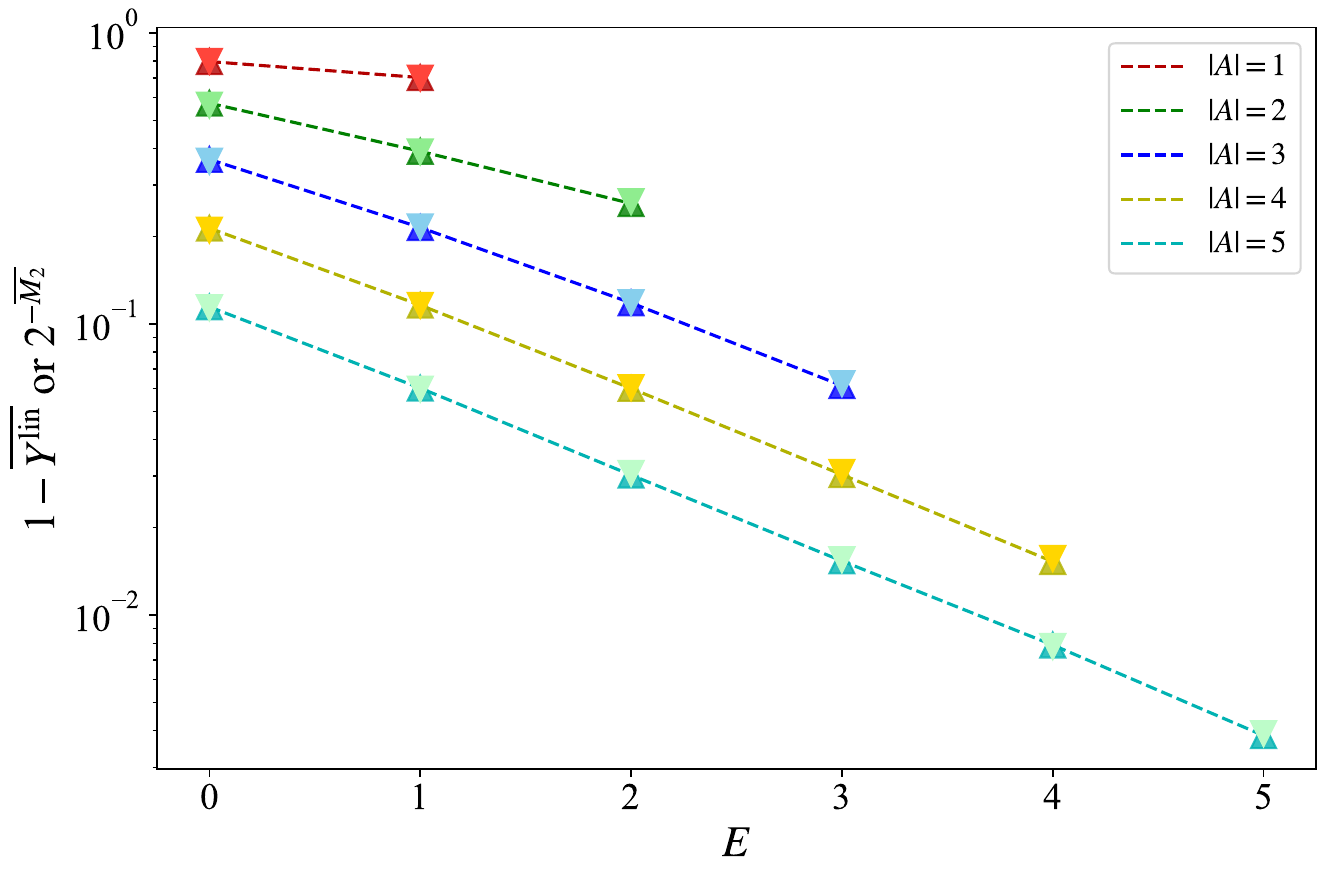}
    \caption{Numerical simulations of $\overline{Y^{\rm lin}}$ and $\overline{M_2}$ for the setup depicted in Fig.1(a) of the main text. $1-\overline{Y^{\rm lin}}$ is shown as upper triangles and $2^{-\overline{M_2}}$ is shown as lower triangles.}
    \label{fig:SM1}
\end{figure}


\section{Tripartite entanglement} \label{sec:tri}

In this section, we extend the main results presented in the main text to tripartite entanglement. Consider a stabilizer state shared among three parts $A$, $B$ and $C$. Since any tripartite stabilizer state can be transformed via local unitaries acting on $A$, $B$, and $C$ into a collection of GHZ states, Bell states, and single-qubit states~\cite{bravyi2006ghz}, we can, without loss of generality, consider initial states as depicted in Fig.~\ref{fig:SM2}(a). Specifically, the initial state consists of $g$ GHZ states shared among $A$, $B$, and $C$; $b_{AB}$, $b_{AC}$, and $b_{BC}$ Bell states shared between subsystems $AB$, $AC$, and $BC$, respectively, and $f_A$, $f_B$, $f_C$ single-qubit states in each individual subsystem. Since magic injection via unitary operations on any single subsystem in this setting mirrors the bipartite case discussed earlier, we first consider unitaries of the form $U=U_A\otimes U_B$ [Fig.~\ref{fig:SM2}(b)]. In the limit where $|A|\gg 1$ and $|B|\gg 1$, we find that the average $\overline{Y^{\rm lin}}$ after applying $U$ is given by
\begin{eqnarray}
\overline{Y^{\rm lin}}&=& \mathbb{E}_{U_A}\mathbb{E}_{U_B} Y^{\rm lin}\left(U_AU_B|\psi\rangle \langle \psi |U_A^\dagger U_B^\dagger\right)  \nonumber \\
&=& 1-4\cdot {2}^{-|A|-|B|-g-{{b}_{AC}}-{{b}_{BC}}}\left[1+3\cdot {2}^{-2{{b}_{AB}}-g}\right],
\label{eq:tripartite_UaUb}
\end{eqnarray}
where subleading corrections are of order $O(2^{-|A|}, 2^{-|B|})$. A detailed derivation of Eq.~(\ref{eq:tripartite_UaUb}) is given in Sec.~\ref{sec:1c}.
Eq.~(\ref{eq:tripartite_UaUb}) can be understood as follows. The average value $1-\overline{Y^{\rm lin}}$ is once again exponentially suppressed in both the total size of the subregion where the unitary is applied: $|A|+|B|$, and the amount of entanglement between subregion $AB$ and its complement: $g+b_{AC}+b_{BC}$. Hence the physical interpretation of the prefactor in Eq.~(\ref{eq:tripartite_UaUb}) is identical to that of Eq.~(\ref{eq:Y_bi}) in the main text. The discrepancy between these two cases lies in the additional factor $3\cdot {2}^{-2{{b}_{AB}}-g}$, which becomes negligible once $A$ and $B$ are sufficiently entangled. This is precisely what we have seen in Eq.~(\ref{eq:UaUb}) of the main text: the effect of a factorized unitary $U_A\otimes U_B$ becomes essentially indistinguishable from that of a global unitary from the standpoint of magic, once $A$ and $B$ are sufficiently entangled. Thus, Eq.~(\ref{eq:tripartite_UaUb}) encapsulates the combined effects of Eqs.~(\ref{eq:Y_bi}) and ~(\ref{eq:UaUb}) of the main text in the tripartite setting. In Sec.~\ref{sec:3}, we further consider a factorized unitary acting on all three subregions: $U=U_A\otimes U_B \otimes U_C$, where we confirm that in this case the 
effect of a factorized Haar random unitary on magic injection again becomes indistinguishable from that of a global Haar random unitary, provided that the subregions are sufficiently entangled with one another. 

\begin{figure}[h]
    \centering
    \includegraphics[width=0.6\linewidth]{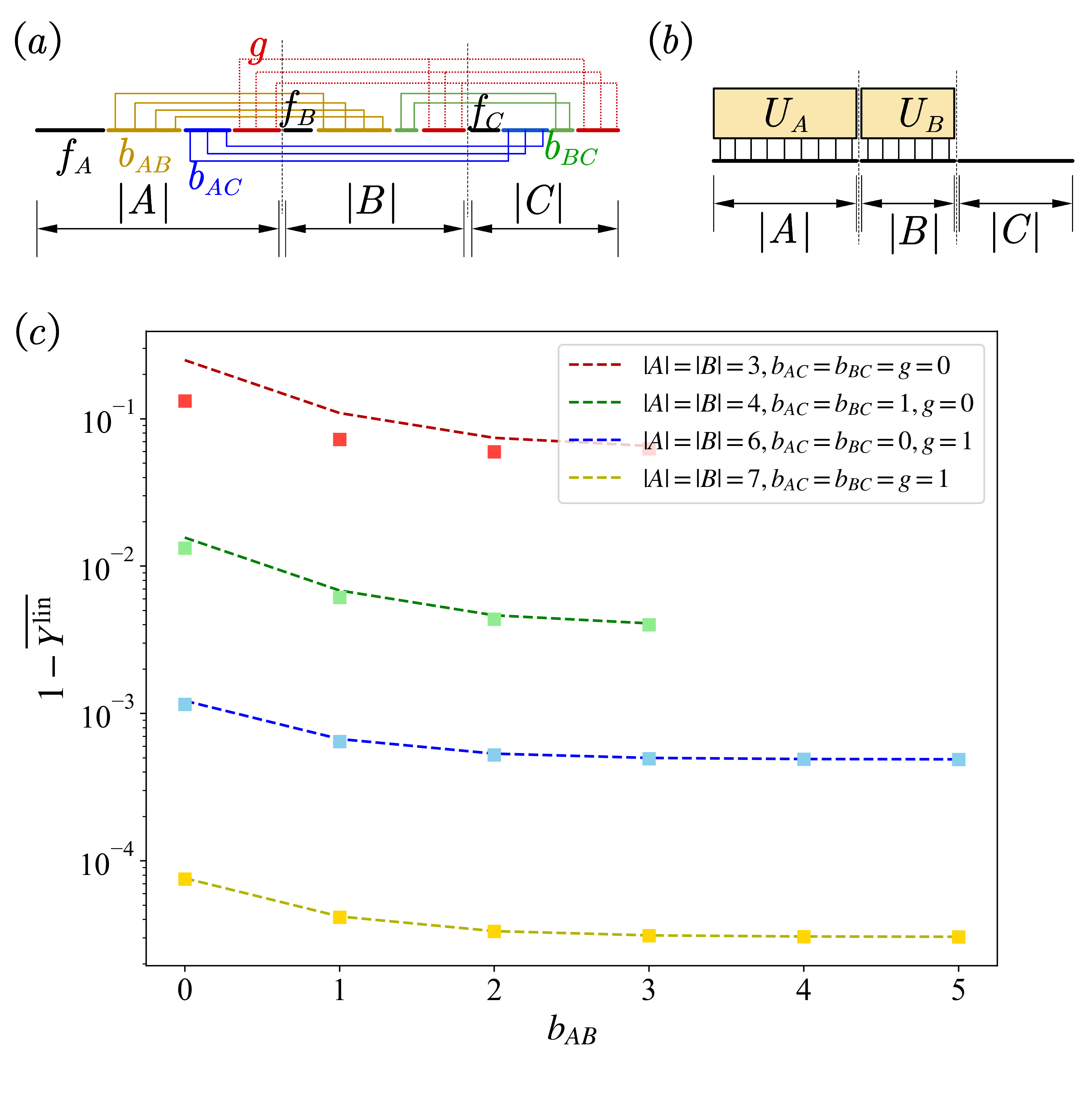}
    \caption{(a) Schematic of the initial stabilizer state used for tripartite systems. Notice that any tripartite stabilizer state can be brought into this form via local Clifford unitaries acting on subsystems $A$, $B$ and $C$. (b) Schematic of the setup for \(U_A\otimes U_B\) acting on tripartite systems. (c) Numerical results for \(U_A\otimes U_B\) acting on tripartite systems. The dashed lines represent the quantity shown in Eq.~(\ref{eq:tripartite_UaUb}) and the squares represent the exact result Eq.~(\ref{eq:SMp_23}).}
    \label{fig:SM2}
\end{figure}


\section{Analytical method}
In this section, we present the analytical method used to derive the main results quoted in the main text. In Section~\ref{sec:1a}, we provide a brief introduction to a diagrammatic approach for computing polynomials of matrix elements of unitary matrices averaged over the Haar ensemble. Section~\ref{sec:1b} introduces a key quantity used throughout our analysis and outlines approximations made to simplify both the calculation and the final results. Finally, in Section~\ref{sec:1c}, we detail the step-by-step derivation of the results presented in the main text. 
Exact results of all setups considered in the main text (i.e. without any approximation such as $|A|\gg 1$) can also be obtained using an alternative analytical method, as provided in Appendix~\ref{sec:5}. This alternative approach, while more compatible with symbolic computations, offers less physical intuition.

\subsection{A diagrammatic approach}\label{sec:1a}
In this subsection, we outline a diagrammatic method for systematically computing expectation values of polynomials in the matrix elements of unitary matrices under the Haar ensemble. Our presentation closely follows the original work of Brouwer (1996)~\cite{Brouwer_1996}. Consider a polynomial function of the form
\begin{equation}
f(U)=U_{a_1 b_1}\dots U_{a_n b_n}U^{*}_{\alpha_1 \beta_2}\dots U^{*}_{\alpha_m \beta_m}. 
\end{equation}
where $U_{ij}$ and $U_{kl}^*$ denote matrix elements of a unitary matrix and its complex conjugate, respectively. The Haar measure expectation value of such functions, denoted by $\overline{f(U)} := \mathbb{E}_{\rm Haar} [f(U)]$, is given by
\begin{equation}
    \overline{f(U)}=\delta_{nm}\sum_{\sigma,\pi}V_{\sigma,\pi}\prod\limits_{j=1}^{n}\delta_{a_j\alpha_{\sigma(j)}}\delta_{b_j\beta_{\pi(j)}}
\label{eq:SM1}\end{equation}
where the summation is over all permutations \(P\) and \(P'\) of the integers \((1,\dots,n)\). The coefficients \(V_{\sigma,\pi}\), also known as the Weingarten function, depend only on the cycle structure of the permutation \(\sigma^{-1}\pi\) \cite{diagram}. In other words, it only depends on the lengths \(c_1,\dots,c_k\) of the cycles in the factorization of \(\sigma^{-1}\pi\). So below we write \(V_{c_1,\dots,c_k}\) instead of \(V_{\sigma,\pi}\). 

The diagrams consist of the building blocks shown in Fig.~\ref{fig:SM3}. The matrix elements \(U_{ab}\) or \(U^{*}_{\alpha\beta}\) are represented by thick dotted lines. The first index (\(a\) or \(\alpha\)) is represented by a black dot, the second index (\(b\) or \(\beta\)) is represented by a white dot. 
Matrix element \(A_{ij}\) of a fixed matrix $A$ (i.e., a matrix that is not part of the average) is represented by a directed thick solid line, pointing from the first to the second index, without dots at the endpoints. The Kronecker delta is represented by an undirected thin solid line, without dots at the endpoints. Two dots connected by a solid line indicate contractions of their corresponding matrix indices.
As an example, the functions \(f(U)=\text{Tr}(AUBU^{\dagger})\) and \(g(U)=\text{Tr}(AUBUCU^{\dagger}DU^{\dagger})\) are represented in Fig.~\ref{fig:SM4}.

\begin{figure}[!tbh]
    \centering
    \includegraphics[width=0.25\linewidth]{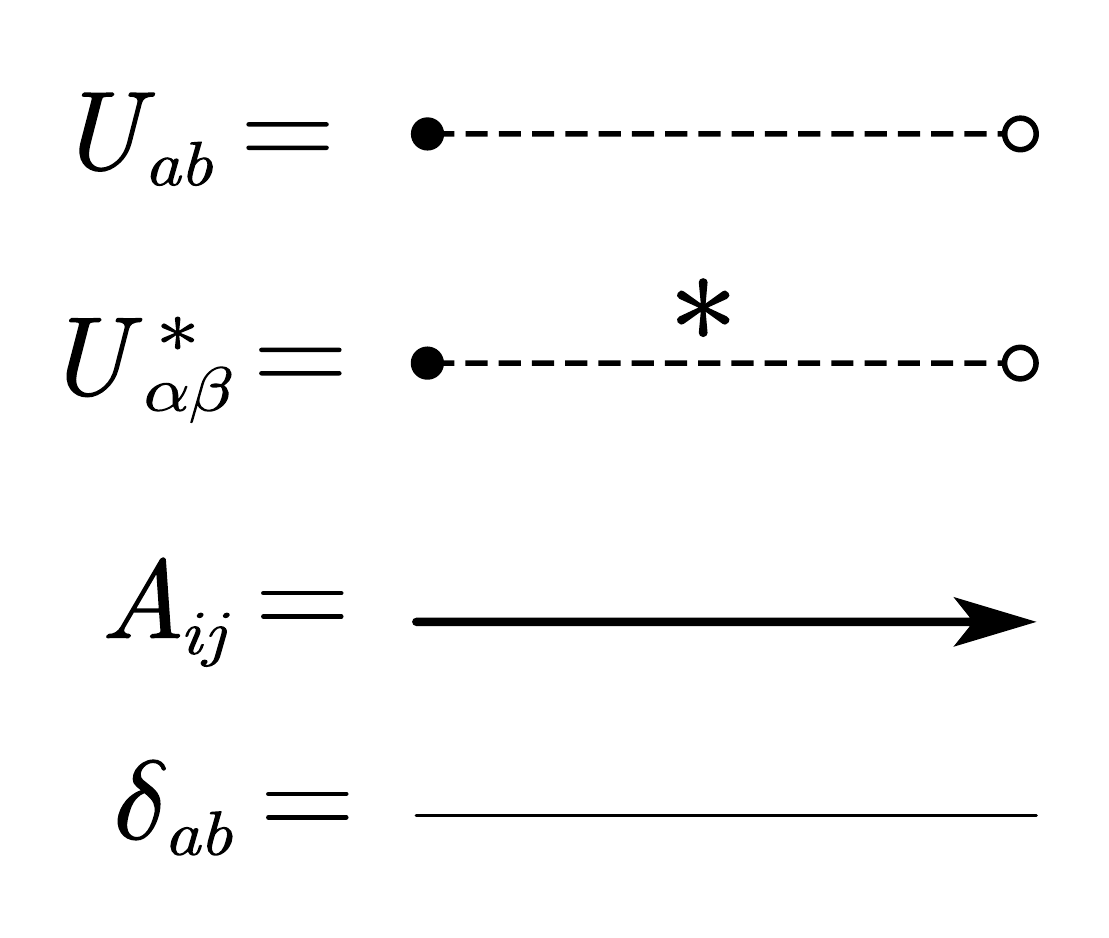}
    \caption{Diagrammatic representations for the unitary matrix \(U\) and \(U^*\), the fixed matrix \(A\) and the Kronecker delta.}
    \label{fig:SM3}
\end{figure}

\begin{figure}[!tbh]
    \centering
    \includegraphics[width=0.8\linewidth]{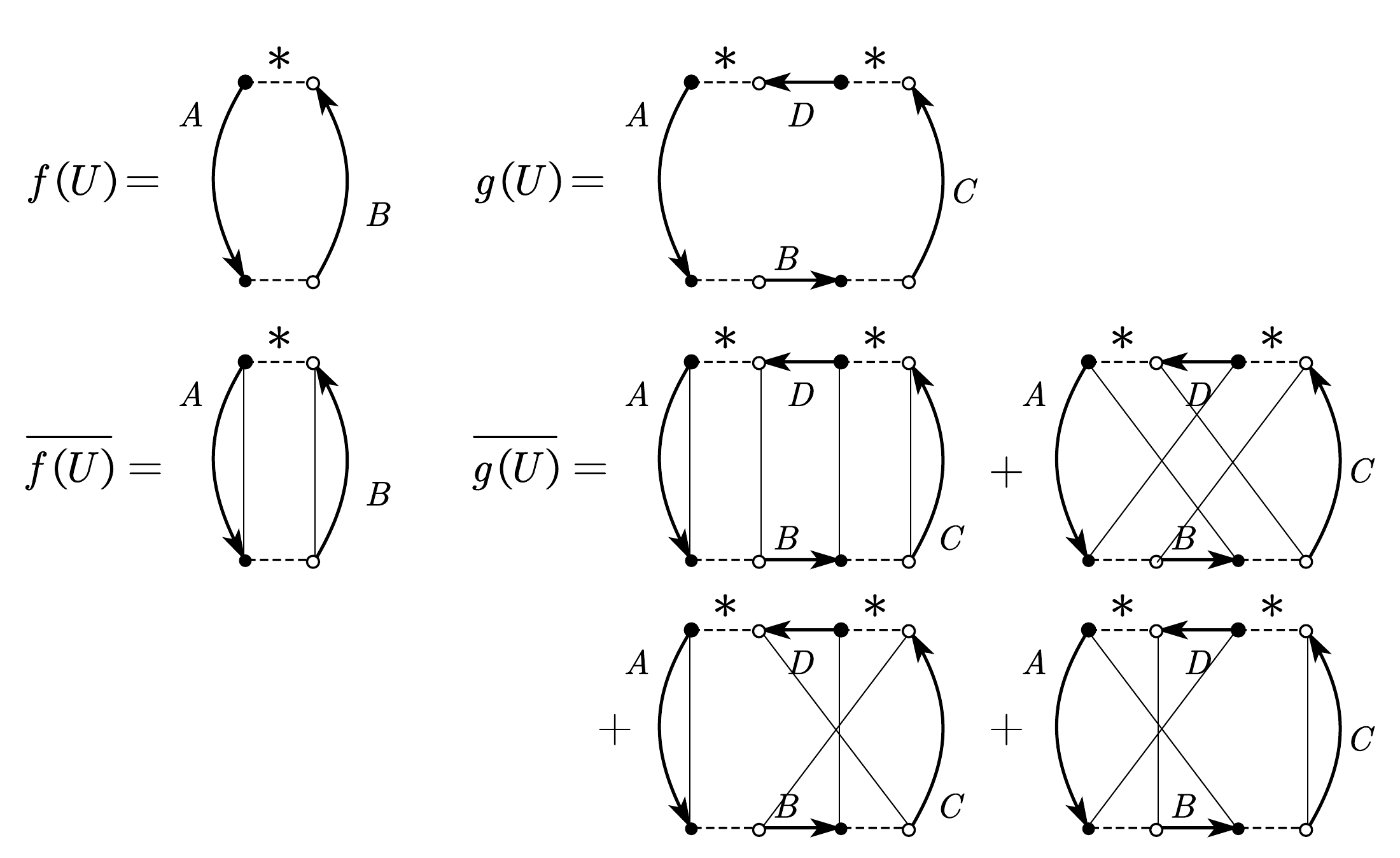}
    \caption{Top: diagrams representing \(f(U)=\text{Tr}(AUBU^{\dagger})\) and \(g(U)=\text{Tr}(AUBUCU^{\dagger}DU^{\dagger})\), respectively. Bottom: diagrams that keep track of all terms after taking the average over Haar random $U$. The average of $\overline{f(U)}$ contains only one term, whereas $\overline{g(U)}$ contains four terms.}
    \label{fig:SM4}
\end{figure}

Since the average over Haar random ensemble leads to a summation of permutations, we need to keep track of each term in the summation in Eq.~(\ref{eq:SM1}) by keeping track of $P$ and $P'$ in the diagram.
Each $P$ represents one way of pairing the first index of $U$ with that of $U^*$, and $P'$ pairs the second index of $U$ with that of $U^*$. This pairing is represented by adding additional thin lines connecting the endpoints of $U$ and that of $U^*$ in the diagram, with black dots connected to black dots, and white dots connected to white dots, as depicted in Fig.~\ref{fig:SM4}.
To associate each diagram with a specific value, we follow the following rules: 

(i) Cycle structure: A closed circuit consisting of alternating dotted or thin lines correspond to a cycle in \( P^{-1} P' \). The length \( c_k \) of it is half the number of dotted lines in the circuit. We call the circuit a \( U \)-cycle of length \( c_k \).

(ii) Matrix trace: A closed circuit consisting of alternating thick or thin lines is called a \(T\)-cycle. A \( T \)-cycle containing matrices \( A^{(1)}, A^{(2)}, \dots, A^{(k)} \) (ordered when going in one direction of the closed circuit) corresponds to \( \text{Tr}A^{(1)} A^{(2)} \dots A^{(k)} \) (recall that thin lines represent $\delta_{ab}$ that locks the two indices they connect). If a thick line corresponding to matrix \( A \) is traversed in the opposite direction, the matrix is replaced by its transpose \( A^{T} \).

To illustrate this procedure, we consider the averages of the functions \( \overline{f(U)} = \overline{\text{Tr}(AUBU^{\dagger}) }\) and \( \overline{g(U)} = \overline{\text{Tr}(AUBUCU^{\dagger}DU^{\dagger} )}\). By connecting the dots with thin lines, we obtain the diagrams shown in Fig.~\ref{fig:SM4}. For \( f \), there is only one diagram, which consists of a single \( U \)-cycle of length 1 (with weight \( V_1 \)) and two \( T \)-cycles, which generate \( \text{Tr}A \) and \( \text{Tr}B \). Thus, the expected value is:
\begin{equation}
    \overline{f(U)}=V_1 \text{Tr}A \ \text{Tr}B.
\label{eq:SM2}\end{equation}
As for \(g\), there are four diagrams that contribute. The first diagram contains two \(U\)-cycles of length \(1\), and three \(T\)-cycles. Its contribution is \(V_{1,1}\text{Tr}A\ \text{Tr}(BD) \ \text{Tr}C\). The second diagram contains two \(U\)-cycles of length \(1\) and a single \(T\)-cycle. Its contribution is \(V_{1,1}\text{Tr}(ABCD)\). The third and fourth diagram each contain a single \(U\)-cycle of length \(2\) and two \(T\)-cycles. Their contributions are \(V_2\text{Tr}A\ \text{Tr}(BDC)\) and \(V_2\text{Tr}(ADB)\ \text{Tr}C\). So we obtain:
\begin{equation}
    \overline{g(U)}=V_{1,1}\left[\text{Tr}(ABCD)+\text{Tr}A\ \text{Tr}(BD) \ \text{Tr}C\right]+V_2\left[\text{Tr}A\ \text{Tr}(BDC)+\text{Tr}(ADB) \ \text{Tr}C\right].
\label{eq:SM3}\end{equation}

\subsection{A useful quantity and its approximations}\label{sec:1b}
 In this subsection, we define a quantity that will be used intensively in evaluating $\overline{Y^{\rm lin}}$. We will calculate the value of this quantity in its most general form, to leading order in $2^{|A|+E}$, where $E$ is the entanglement entropy. 
 Let \(\mathbb{S}_{A}\) denote an \(|A|\)-qubit stabilizer group whose size is \(2^{f_A}\) with $f_A < |A|$; then let \(u_i,i=1,2,3,4 \) denote four Pauli strings supported on $A$ (where some of them can be identical) that commute with every element in \(\mathbb{S}_{A}\) (i.e. \(u_i\) is in the normalizer group of \(\mathbb{S}_{A}\)). Notice that we allow $u_i = I_A$, but otherwise $u_i \notin \mathbb{S}_A$.
 Since we assume $f_A<|A|$, one can always find a nontrivial $u_i$. And because \(u_i\mathbb{S}_A=u_ig\mathbb{S}_A,\forall g \in \mathbb{S}_A\), we can choose one specific \(u_i\). Then we define the following function:
\begin{equation}
    {{\alpha }_{{{P}_{A}}}}({{u}_{1}},{{u}_{2}},{{u}_{3}},{{u}_{4}}):= \overline{\prod\limits_{i=1}^{4}{\text{Tr}({{P}_{A}}{{U}_{A}}{{u}_{i}} \mathbb{S}_A {U}_A^\dagger)}},
\label{eq:SM4}\end{equation}
where $P_A$ is a Pauli string supported on subsystem $A$. In the above definition, we also abuse notation by using $\mathbb{S}_A$ to represent an equal weight summation of all elements in $\mathbb{S}_A$: $\sum_{g\in \mathbb{S}_A} g$.
The connection between the above quantity and $\overline{Y^{\rm lin}}$ will be explained in detail in the next subsection. Basically it naturally arises when computing ${\rm Tr}(P U_A \rho U_A^\dagger)^4$, with $\rho$ represented as a projector onto the stabilizer subspace.

Eq.~(\ref{eq:SM4}) can be represented using the diagrammatic method introduced in Sec.~\ref{sec:1a} as follows:
\begin{figure}[!tbh]
    \centering
    \includegraphics[width=\linewidth]{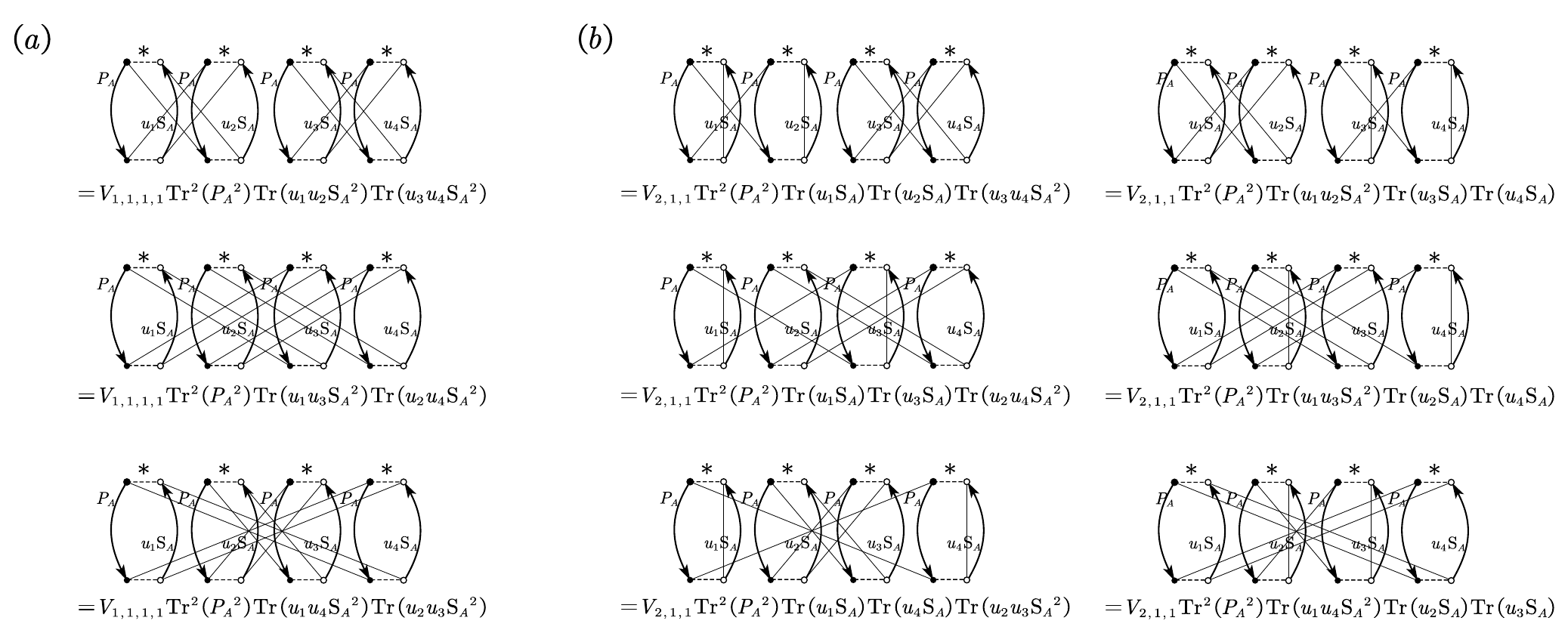}
    \caption{(a) Leading order and the corresponding diagrams for $\alpha_{P_A}(u_1,u_2,u_3,u_4)$. There are \({4!}^{2}=576\) diagrams in total. If \(P_A\ne I_A\), then $\text{Tr}(P_A)=\text{Tr}(P_A^{3})=0$, and there are 216 diagrams with non-zero contributions. (b) Subleading order and the corresponding diagrams for $\alpha_{P_A}(u_1,u_2,u_3,u_4)$.}
    \label{fig:SM5}
\end{figure}

First of all, the factors \(V_{c_1, c_2, \dots}\) can be readily obtained using known results of Weingarten functions. Let $N:=2^{|A|}$, we have:
\begin{equation}
\begin{aligned}
  & V_{1,1,1,1} =\frac{N^4 - 8N^2 + 6}{N^2(N^2-1)(N^2-4)(N^2-9)} = 2^{-4|A|} + 6 \cdot 2^{-6|A|} + O(2^{-8|A|}), \\
 & V_{2,1,1} = \frac{-N^3 + 4N}{N^2(N^2-1)(N^2-4)(N^2-9)} = -2^{-5|A|} + 18 \cdot 2^{-7|A|} + O(2^{-9|A|}), \\
 & V_{2,2} = \frac{N^2 + 6}{N^2(N^2-1)(N^2-4)(N^2-9)} = 2^{-6|A|} + 20 \cdot 2^{-8|A|} + O(2^{-10|A|}), \\
 & V_{3,1} = \frac{2N^2 - 6}{N^2(N^2-1)(N^2-4)(N^2-9)} = 2 \cdot 2^{-6|A|} + 8 \cdot 2^{-8|A|} + O(2^{-10|A|}), \\
 & V_{4} = \frac{-5N}{N^2(N^2-1)(N^2-4)(N^2-9)} = -5 \cdot 2^{-7|A|} + O(2^{-9|A|}).
\end{aligned}    
\label{eq:SM5}
\end{equation}

Careful inspections of all diagrams reveals that factors involving $P_A$ are of the form \(\text{Tr}^4 (P_A)\), \(\text{Tr}(P_A^3)\text{Tr}({P_A})\), \(\text{Tr}(P_A^2)\text{Tr}^2({P_A})\), \(\text{Tr}^2(P_A^2)\), and \(\text{Tr}(P_A^4)\). Notice that $P_A^2 = P_A^4= I_A$, $P_A^3 = P_A$. Also, if $P_A \neq I_A$, ${\rm Tr}(P_A)=0$. Thus, in this case, the only non-zero contributions come from terms involving \(\text{Tr}^2({P_A}^2) = 2^{2|A|}\) and \(\text{Tr}({P_A}^4) = 2^{|A|}\). Similarly, terms involving $u_i$ and $\mathbb{S}_A$ are of the form \(\text{Tr} (u_1 u_2 u_3 u_4 {\mathbb{S}}_A^4)\), \(\text{Tr}(u_1 u_2 u_3 {\mathbb{S}}_A^3)\text{Tr}(u_4 {\mathbb{S}}_A)\), \(\text{Tr}(u_1 u_2 {\mathbb{S}}_A^2)\text{Tr}(u_3 u_4 {\mathbb{S}}_A^2)\), \(\text{Tr}(u_1 u_2 {\mathbb{S}}_A^2)\text{Tr}(u_3 {\mathbb{S}}_A)\text{Tr}(u_4 {\mathbb{S}}_A)\), and \(\text{Tr}(u_1 {\mathbb{S}}_A)\text{Tr}(u_2 {\mathbb{S}}_A)\text{Tr}(u_3 {\mathbb{S}}_A)\text{Tr}(u_4 {\mathbb{S}}_A)\) (and also permutations of the $u_i$'s). It is also useful to remember ${\rm Tr}(\mathbb{S}_A)=2^{|A|}$, and ${\rm Tr}(\mathbb{S}_A^2)=2^{|A|+f_A}$.

When \(P_A = I_A\), $\alpha_{I_A}$ trivially becomes $\prod_{i=1}^4 {\rm Tr}(u_i \mathbb{S}_A)$, which vanishes unless $u_i=I_A$, $\forall i$. We thus have:
\begin{equation}
    {{\alpha }_{{{I}_{A}}}}({{u}_{1}},{{u}_{2}},{{u}_{3}},{{u}_{4}}) = 2^{4|A|} \delta_1(u_1) \delta_1(u_2) \delta_1(u_3) \delta_1(u_4),
\label{eq:SM6}
\end{equation}
where $\delta_1(u_i)=1$ iff $u_i=I_A$ and zero otherwise.

When \(P_A \neq I_A\), by carefully examining the structures of all diagrams, we find that the leading order contributions to $\alpha_{P_A}$ come from the diagrams shown in Fig.~\ref{fig:SM5}. 
\begin{equation}
\begin{aligned}
    &{{\alpha }_{{{P}_{A}}}}({{u}_{1}},{{u}_{2}},{{u}_{3}},{{u}_{4}}) \\
    &= 2^{2f_A} \left( \underbrace{{{\delta }_{2}}({{u}_{1}},{{u}_{2}}){{\delta }_{2}}({{u}_{3}},{{u}_{4}}) + \text{permutations}}_{3 \text{ terms}} \right)  \\
    &\quad -2^{f_A} \left( \underbrace{{{\delta }_{2}}({{u}_{1}},{{u}_{2}}){{\delta }_{1}}({{u}_{3}}){{\delta }_{1}}({{u}_{4}}) + \text{permutations}}_{\text{6 terms}} \right) \\
    & \quad+O \left( 2^{3f_A - 2|A|} \delta_4(u_1, u_2, u_3, u_4) \right),
\end{aligned}
\label{eq:SM7}
\end{equation}
where \(\delta_2(u_1, u_2) = 1\) iff \(u_1 = u_2\), and zero otherwise; \(\delta_4(u_1, u_2, u_3, u_4) = \frac{\text{Tr}(u_1 u_2 u_3 u_4)}{2^{|A|}}\). In the above equation, the first line comes from terms like $V_{1,1,1,1} {\rm Tr}^2(P_A^2) {\rm Tr}(u_1 u_2 \mathbb{S}_A^2) {\rm Tr}(u_3 u_4 \mathbb{S}_A^2)$, and the 2nd line comes from $V_{2,1,1} {\rm Tr}^2(P_A^2) {\rm Tr}(u_1 u_2 \mathbb{S}_A^2) {\rm Tr}(u_3 \mathbb{S}_A) {\rm Tr}(u_4\mathbb{S}_A)$.

\subsection{Derivation of the key results presented in the main text}\label{sec:1c}

In this subsection, we provide a detailed derivation of Eqs.~(\ref{eq:Y_bi}),~(\ref{eq:UaUb}) in the main text, as well as Eq.~(\ref{eq:tripartite_UaUb}) in Sec.~\ref{sec:tri}. Since what we essentially need to compute is $1-\overline{Y^{\rm lin}}$, we will focus on this part below.

\subsubsection{Eq.~(\ref{eq:Y_bi})}

We start with Eq.~(\ref{eq:Y_bi}). The key to our derivation is to represent the initial state as a projector onto the stabilizer subspace, and a decomposition of the stabilizer group of the full system in terms of the cosets of stabilizer subgroups of subsystems $A$ and $B$. Let $\mathbb{S}_{AB}$ denote the stabilizer group of the initial state. Then initial density matrix can be written as a projector
\begin{equation}
\rho_0 = \frac{1}{2^N} \sum_{g \in \mathbb{S}_{AB}} g.
\end{equation}
Let $\mathbb{S}_A$ and $\mathbb{S}_B$ denote the stabilizer subgroups of subsystem $A$ and $B$, respectively. Let $E$ denote the entanglement entropy between subsystem $A$ and $B$. We have $E=|A|-{\rm dim}(\mathbb{S}_A) = |B|-{\rm dim}(\mathbb{S}_B)$.
$\mathbb{S}_{AB}$ can then be decomposed in terms of the cosets of $\mathbb{S}_A$ and $\mathbb{S}_B$ as follows:
\begin{equation}
\mathbb{S}_{AB} = \underbrace{\mathbb{S}_A \otimes \mathbb{S}_B \ \cup \ a_1 \mathbb{S}_A \otimes b_1 \mathbb{S}_B \ \cup \ a_2 \mathbb{S}_A \otimes b_2 \mathbb{S}_B \ \cup \ldots}_{4^{E}\ \rm{terms}}
\end{equation}
where $a_i \notin \mathbb{S}_A$ and $b_i \notin \mathbb{S}_B$ are the logical operators of subsystems $A$ and $B$, respectively. To ensure the proper commutation relation, the logical operators must be paired up into $(a_i, b_i)$ such that: (i) if $[a_i, a_j]=0$, then $[b_i, b_j]=0$; or (ii) if $\{a_i, a_j\}=0$, then $\{b_i, b_j\}=0$. 
While it is not hard to convince oneself that the above decomposition is correct for pure states with ${\rm dim}(\mathbb{S}_{AB})=N$, let us check that the dimensions work out. First of all, since the entanglement between $A$ and $B$ is $E$, there are precisely $E$ logical qubits encoded in both subsystems, with $4^E$ logical operators in each subsystem, and hence $4^E$ terms in the decomposition. Secondly, we can count the total number of group elements in $\mathbb{S}_{AB}$ from the right-hand side of the decomposition. We have $4^E \times 2^{{\rm dim}(\mathbb{S}_A)} \times 2^{{\rm dim}(\mathbb{S}_B)}=2^{|A|+|B|}=2^N$, as it should. We will give a concrete example of such a decomposition in Appendix~\ref{sec:2}.

We can now express the initial state using the above coset decomposition of $\mathbb{S}_{AB}$:
\begin{equation}
    {{\rho }_{0}}=\frac{1}{{2}^N}{\mathbb{S}_{AB}}=  \frac{1}{2^N} \left( \underbrace{\mathbb{S}_A \otimes \mathbb{S}_B \ + \ a_1 \mathbb{S}_A \otimes b_1 \mathbb{S}_B \ + \ a_2 \mathbb{S}_A \otimes b_2 \mathbb{S}_B \ + \ldots}_{4^{E}\ \rm{terms}} \right)
\label{eq:SM9}\end{equation}

Armed with the above representation of $\rho_0$, we can cast the average value $\overline{Y^{\rm lin}}$ as:
\begin{equation}
\begin{aligned}&
    1-\overline{Y^{\rm lin}}=\frac{1}{{{2}^{5N}}}\sum\limits_{P \in P_N}{\overline{{{\text{Tr}}^{4}}(P{{U}_{A}}{\mathbb{S}_{AB}}{{U}_{A}}^{\dagger })}}\\
    &=\frac{1}{{{2}^{5N}}}\sum_{P\in P_N}{\sum\limits_{k=0}^{{{2}^{2E}-1}}{{{\alpha }_{{{P}_{A}}}}({{a}_{k}},{{a}_{k}},{{a}_{k}},{{a}_{k}})\ {{\text{Tr}}^{4}}({{P}_{B}}{{b}_{k}}{{\mathbb{S}}_{B}})}}.
\end{aligned}
\label{eq:SM10}\end{equation}
In going from the first to the second line, notice that the Pauli string $P=P_A \otimes P_B$, and that the unitary $U_A$ is supported on subsystem $A$ and hence leaves $P_B$ invariant. Moreover, since different cosets $b_j \mathbb{S}_B$ do not overlap, for a given $P$ (and $P_B$), there is at most one coset such that ${\text{Tr}^{4}({{P}_{B}}{{b}_{k}}{{\mathbb{S}}_{B}})}$ does not vanish. The trace involving $P_A$ is now exactly of the form in Eq.~(\ref{eq:SM4}), with $u_1=u_2=u_3=u_4=a_k$ for some $k$. We can now directly apply our general results in Eqs.~(\ref{eq:SM6}) and~(\ref{eq:SM7}).

Consider first a fixed Pauli string $P$ and $k$, which corresponds to one term of the summation in $\overline{Y^{\rm lin}}$. We have the following result:

\begin{eqnarray}
&&{{{\alpha }_{{{P}_{A}}}}({{a}_{k}},{{a}_{k}},{{a}_{k}},{{a}_{k}})\ \text{Tr}^{4}({{P}_{B}}{{b}_{k}}{{\mathbb{S}}_{B}})} \nonumber \\
&=& \delta(P_B, b_k \mathbb{S}_B) \times
\begin{cases}
   3\cdot {{2}^{2{|A|}+4{|B|}-2E}} \left[ 1 - 2\cdot {{2}^{E-{|A|}}} + O({{2}^{-E-{|A|}}}) \right], & \quad {{P}_{A}}\ne {{I}_{A}}, \quad k=0  \\
   3\cdot {{2}^{2{|A|}+4{|B|}-2E}} \left[ 1 + O({{2}^{-E-{|A|}}}) \right], & \quad {{P}_{A}}\ne {{I}_{A}}, \quad k \neq 0 \\
   {{2}^{4{|A|}+4{|B|}}}, & \quad {{P}_{A}}={{I}_{A}}, \quad k=0  \\
   0, & \quad {{P}_{A}}={{I}_{A}}, \quad k \neq 0  \\
\end{cases}
\label{eq:SM11}
\end{eqnarray}
In the above equation, $k=0$ corresponds to the first term in the coset decomposition, with $a_0=b_0= I$; $\delta(P_B, b_k \mathbb{S}_B)=1$ iff $P_B \in b_k \mathbb{S}_B$, and zero otherwise. Summing over $P\in P_N$ and $k$, we have:
\begin{eqnarray}
&&\sum\limits_{P\in P_N}{\sum\limits_{k=0}^{{{2}^{2E}-1}}{{{\alpha }_{{{P}_{A}}}}({{a}_{k}},{{a}_{k}},{{a}_{k}},{{a}_{k}})\ {{\text{Tr}}^{4}}({{P}_{B}}{{b}_{k}}{{\mathbb{S}}_{B}})}} \nonumber\\
=&&\underbrace{{2}^{|B|-E}\cdot(2^{2|A|}-1)\cdot3\cdot {{2}^{2{|A|}+4{|B|}-2E}} \left[ 1 - 2\cdot {{2}^{E-{|A|}}}  O({{2}^{-E-{|A|}}}) \right]}_{P_A\ne I_A,\quad k=0} \nonumber\\
+&&\underbrace{{2}^{|B|-E}\cdot(2^{2E}-1)\cdot(2^{2|A|}-1)\cdot3\cdot {{2}^{2{|A|}+4{|B|}-2E}} \left[ 1 + O({{2}^{-E-{|A|}}}) \right]}_{P_A\ne I_A,\quad k\ne 0} \nonumber\\
+&&\underbrace{{2}^{|B|-E}\cdot{2}^{4|A|+4|B|}}_{P_A=I_A,\quad k= 0} \nonumber\\
=&&4\cdot{2}^{4|A|+5|B|-E}\left[1+O({2}^{-|A|-E})\right].
\label{eq:SM11_1}
\end{eqnarray}
In the above summation, the prefactor $(2^{2|A|}-1)$ counts the total number of $P_A \neq I_A$; $(2^{|B|-E})$ comes from the size of each coset $|b_k \mathbb{S}_B|$. The final result is precisely Eq.~(\ref{eq:Y_bi}) quoted in the main text after dividing by $2^{5N}$.

The qualitative features of Eq.~(\ref{eq:Y_bi}), particularly the factor $2^{-|A|-E}$, can also be understood---since the unitary acts only on subsystem $A$, stabilizers supported entirely on $B$ remain unchanged under $U_A$ and contribute to $1-Y^{\rm lin}(\rho)$. A simple counting shows that the number of surviving stabilizers is precisely $2^{|B|-E}$, leading to the factor $2^{-|A|-E}$ in Eq.~(\ref{eq:Y_bi}). The prefactor 4 in front of $2^{-|A|-E}$ is, in some sense, non-universal. To see this, let us define the generalized linear stabilizer entropy $Y_{\alpha}^{\rm lin}(\rho) := 1-2^{-N}\sum_{P\in P_N}{\rm tr}(P \rho)^{2\alpha}$. Eq.~(\ref{eq:Y_bi}) thus corresponds to the specific case $\alpha=2$. Similarly, $1-Y_{\alpha}^{\rm lin}(\rho)$ is related to the $\alpha$-th stabilizer Renyi entropy: $M_\alpha(\rho)=\frac{1}{1-\alpha}{\rm log} (1-Y_{\alpha}^{\rm lin}(\rho))$. In the limit $\alpha \rightarrow \infty$, only the surviving stabilizers supported on $B$ contribute, yielding $\overline{Y_{\infty}^{\rm lin}} = 1-2^{-|A|-E}$. For arbitrary $\alpha$, we have $1-\overline{Y_{\alpha}^{\rm lin}}= [1+c_{\alpha}(|A|,E)] 2^{-|A|-E}$, with the one coming from the surviving stabilizers, and $c_{\alpha}(|A|,E)$ coming from contributions of all other Pauli strings, which in general depends on both $|A|$ and $E$~\footnote{We thank Xhek Turkeshi for pointing this out to us.}. Specifically, Eq.~(\ref{eq:Y_bi}) implies that for $\alpha=2$, we have $c_2=3$.

\subsubsection{Eq.~(\ref{eq:UaUb})}

We provide detailed calculations for Eq.~(\ref{eq:UaUb}) in the main text. Since the unitary acting on $\rho_0$ takes a factorized form $U=U_A \otimes U_B$, the average $\overline{Y^{\rm lin}}$ similarly factorizes into
\begin{equation}
\begin{aligned}
1-\overline{Y^{\rm lin}}=\frac{1}{{{2}^{5N}}}\sum\limits_{P\in P_N}{\sum\limits_{{{k}_{1}},{{k}_{2}},{{k}_{3}},{{k}_{4}}}{{{\alpha }_{{{P}_{A}}}}({{a}_{{{k}_{1}}}},{{a}_{{{k}_{2}}}},{{a}_{{{k}_{3}}}},{{a}_{{{k}_{4}}}}){{\beta }_{{{P}_{B}}}}({{b}_{{{k}_{1}}}},{{b}_{{{k}_{2}}}},{{b}_{{{k}_{3}}}},{{b}_{{{k}_{4}}}})}},
\end{aligned}
\label{eq:SM12}\end{equation}
where $\beta_{P_B} \left( b_{k_1}, b_{k_2}, b_{k_3}, b_{k_4} \right)$ is defined analogously to Eq.~(\ref{eq:SM4}). 

For convenience, we define the following functions: \[f_1:=\underbrace{{{\delta }_{2}}({{a}_{k_1}},{{a}_{k_2}}){{\delta }_{2}}({{a}_{k_3}},{{a}_{k_4}})+\text{permutations}}_{3\text{ terms}},\]
\[f_2:= \underbrace{{{\delta }_{2}}({{a}_{k_1}},{{a}_{k_2}}){{\delta }_{1}}({{a}_{k_3}}){{\delta }_{1}}({{a}_{k_4}})+\text{permutations}}_{\text{6 terms}},\] \[f_3:= \delta_4({a}_{k_1},{a}_{k_2},{a}_{k_3},{a}_{k_4}),\]
where the definition of $\delta_2$ and $\delta_4$ was given below Eq.~(\ref{eq:SM7}). The above functions $f_1, f_2,$ and $f_3$ correspond to the three leading order contributions in $\alpha_{P_A\neq I_A}$ as given in Eq.~(\ref{eq:SM7}). We suppress the indices $k_1,\ldots, k_4$ in the definition of $f_i$'s to simplify notation.
We then have the following summations:
\begin{equation*}
    \sum\limits_{{{k}_{1}},{{k}_{2}},{{k}_{3}},{{k}_{4}}}{{{f}_{1}}}=3\cdot 2^{4E},\quad \sum\limits_{{{k}_{1}},{{k}_{2}},{{k}_{3}},{{k}_{4}}}{{{f}_{2}}}=6\cdot 2^{2E},\quad \sum\limits_{{{k}_{1}},{{k}_{2}},{{k}_{3}},{{k}_{4}}}{{{f}_{3}}}\leq2^{6E}.
\end{equation*}

Now, Eq.~(\ref{eq:SM12}) can be readily evaluated by summing over contributions from (i) $P_A \neq I_A$, $P_B \neq I_B$; (ii) $P_A \neq I_A$, $P_B=I_B$; (iii) $P_A=I_A$, $P_B \neq I_B$; (iv) $P_A=I_A$, $P_B=I_B$, using Eq.~(\ref{eq:SM7}). Let us consider contributions from each case.

(i) \(P_A\ne I_A\) and \(P_B\ne I_B\), we have:

\begin{equation}
\begin{aligned}
    &\sum\limits_{{{k}_{1}},{{k}_{2}},{{k}_{3}},{{k}_{4}}}{{{\alpha }_{{{P}_{A}}}}({{a}_{{{k}_{1}}}},{{a}_{{{k}_{2}}}},{{a}_{{{k}_{3}}}},{{a}_{{{k}_{4}}}}){{\beta }_{{{P}_{B}}}}({{b}_{{{k}_{1}}}},{{b}_{{{k}_{2}}}},{{b}_{{{k}_{3}}}},{{b}_{{{k}_{4}}}})}\\
    &={{2}^{2{|A|}+2{|B|}-4E}}\sum\limits_{{{k}_{1}},{{k}_{2}},{{k}_{3}},{{k}_{4}}}{{{f}_{1}}{{f}_{1}}}-{{2}^{{|A|}+2{|B|}-3E}}\sum\limits_{{{k}_{1}},{{k}_{2}},{{k}_{3}},{{k}_{4}}}{{{f}_{1}}{{f}_{2}}}+\\&-{{2}^{2{|A|}+{|B|}-3E}}\sum\limits_{{{k}_{1}},{{k}_{2}},{{k}_{3}},{{k}_{4}}}{{{f}_{1}}{{f}_{2}}}+{{2}^{{|A|}+{|B|}-2E}}\sum\limits_{{{k}_{1}},{{k}_{2}},{{k}_{3}},{{k}_{4}}}{{{f}_{2}}{{f}_{2}}}+\\&+{{c}_{1}}\cdot {{2}^{{|A|}+2{|B|}-5E}}\sum\limits_{{{k}_{1}},{{k}_{2}},{{k}_{3}},{{k}_{4}}}{{{f}_{1}}{{f}_{3}}}+{{c}_{1}}\cdot {{2}^{2{|A|}+{|B|}-5E}}\sum\limits_{{{k}_{1}},{{k}_{2}},{{k}_{3}},{{k}_{4}}}{{{f}_{1}}{{f}_{3}}} +\\&+{{c}_{2}}\cdot {{2}^{{|A|}+{|B|}-6E}}\sum\limits_{{{k}_{1}},{{k}_{2}},{{k}_{3}},{{k}_{4}}}{{{f}_{3}}{{f}_{3}}}+{{c}_{3}}\cdot {{2}^{{|A|}+{|B|}-4E}}\sum\limits_{{{k}_{1}},{{k}_{2}},{{k}_{3}},{{k}_{4}}}{{{f}_{2}}{{f}_{3}}} 
\end{aligned}\label{eq:SM13}
\end{equation}
We now analyze the order of magnitude of each term. Since \(\sum\limits_{{{k}_{1}},{{k}_{2}},{{k}_{3}},{{k}_{4}}}f_i\cdot (...)\leq \sum\limits_{{{k}_{1}},{{k}_{2}},{{k}_{3}},{{k}_{4}}}f_i\), where \((...)\) represents any product of \(\delta_1,\delta_2\), or \(\delta_4\) (as additional $\delta$-functions further reduce the number of unconstrained summations over $k_1,\ldots,k_4$), we have
\begin{equation}
\sum_{k_1, k_2, k_3, k_4} f_i^2 = \sum_{k_1, k_2, k_3, k_4} f_i + O(2^{-E}), \quad   \sum_{k_1, k_2, k_3, k_4} f_i f_j \sim O(2^{-E}) \times {\rm min}(\sum f_i, \sum f_j).
\end{equation}
A careful inspection of Eq.~(\ref{eq:SM13}) reveals that, in the limit $|A|\gg1, |B| \gg 1$, the leading-order contribution comes from the term involving $\sum f_1^2$. We then have
\begin{equation}
\begin{aligned}
&\sum\limits_{{{k}_{1}},{{k}_{2}},{{k}_{3}},{{k}_{4}}}{{{\alpha }_{{{P}_{A}}}}({{a}_{{{k}_{1}}}},{{a}_{{{k}_{2}}}},{{a}_{{{k}_{3}}}},{{a}_{{{k}_{4}}}}){{\beta }_{{{P}_{B}}}}({{b}_{{{k}_{1}}}},{{b}_{{{k}_{2}}}},{{b}_{{{k}_{3}}}},{{b}_{{{k}_{4}}}})}\\
&=3\cdot{{2}^{2{|A|}+2{|B|}}} \left[ 1+2\cdot 2^{-2E}+O(2^{-|A|-E})+O(2^{-|B|-E}) \right].
\end{aligned}\label{eq:SM14}
\end{equation}

(ii) \(P_A= I_A\) and \(P_B\ne I_B\). Recall $\alpha_{I_A}(a_{k_1}, a_{k_2}, a_{k_3}, a_{k_4}) = 2^{4|A|} \delta_1(a_{k_1}) \delta_1(a_{k_2}) \delta_1(a_{k_3}) \delta_1(a_{k_4})$. We have:
\begin{equation}
\begin{aligned}
    &\sum\limits_{{{k}_{1}},{{k}_{2}},{{k}_{3}},{{k}_{4}}}{{{\alpha }_{{{I}_{A}}}}({{a}_{{{k}_{1}}}},{{a}_{{{k}_{2}}}},{{a}_{{{k}_{3}}}},{{a}_{{{k}_{4}}}}){{\beta }_{{{P}_{B}}}}({{b}_{{{k}_{1}}}},{{b}_{{{k}_{2}}}},{{b}_{{{k}_{3}}}},{{b}_{{{k}_{4}}}})}\\
    &=3\cdot 2^{4|A|+2|B|-2E}\left[1+O(2^{E-|B|})+O(2^{-E-|B|})\right]
\end{aligned}\label{eq:SM15}
\end{equation}

A similar result holds when \(P_A\ne I_A\) and \(P_B=I_B\), with $A$ and $B$ exchanged. 

(iii) \(P_A= I_A\) and \(P_B= I_B\). In this case the result is trivially \(2^{4|A|+4|B|}\). 

After combining all these contributions and including the multiplicity factors coming from the number of Pauli strings in each case, we obtain the final expression in Eq.~(\ref{eq:UaUb}) of the main text.

\subsubsection{Eq.~(\ref{eq:tripartite_UaUb})}
We now present the calculation details of Eq.~(\ref{eq:tripartite_UaUb}) in Sec.~\ref{sec:tri}. Let us first recall the setup we use for this case. We consider a tripartite system $A$, $B$, and $C$. Since any tripartite stabilizer state can be transformed into a set of single-qubit states, Bell states, and GHZ states via local Clifford unitaries, we consider initial state consisting of $g$ GHZ states shared among $A$, $B$, and $C$; $b_{AB}$, $b_{AC}$, and $b_{BC}$ Bell pairs shared between subsystems $AB$, $AC$, and $BC$, respectively, and $f_A$, $f_B$, $f_C$ single-qubit states in each individual subsystem, shown in Fig.~\ref{fig:SM2}(a). Let us start by explicitly writing down a set of stabilizer generators for this state. 

For the single-qubit states, we have 
\begin{equation}
\mathbb{S}_A = \langle X_1, X_2, X_3, \ldots, X_{f_A} \rangle,
\end{equation}
and similarly for $\mathbb{S}_B$ and $\mathbb{S}_C$. As the notation suggests, these are also the stabilizer subgroups on each individual subsystem. For the Bell-pair states shared between subsystem $A$ and $B$, we have
\begin{equation}
L_{AB}=\langle X_{f_A+1}X_{|A|+f_B+1}, Z_{f_A+1}Z_{|A|+f_B+1}, \dots, X_{f_A+b_{AB}}X_{|A|+f_B+b_{AB}}, Z_{f_A+b_{AB}}Z_{|A|+f_B+b_{AB}} \rangle,
\end{equation}
and similarly for $L_{BC}$ and $L_{AC}$. Finally, for the GHZ state, we define
\begin{eqnarray}
H_X &=& \langle X_{|A|-g+1}X_{|A|+|B|-g+1}X_{|A|+|B|+|C|-g}, \dots \rangle  \nonumber  \\
H_{AB} &=& \langle Z_{|A|-g+1}Z_{|A|+|B|-g+1}, \dots \rangle   \nonumber \\
H_{AC} &=& \langle Z_{|A|-g+1}Z_{|A|+|B|+|C|-g+1}, \dots \rangle   \nonumber \\
H_{BC} &=& \langle Z_{|A|+|B|-g+1}Z_{|A|+|B|+|C|-g+1}, \dots \rangle.
\end{eqnarray}
The set $H_X$, combined with any two of \(H_{AB}\), \(H_{AC}\), and \(H_{BC}\), generates the GHZ states shared among subsystems \(A\), \(B\), and \(C\).

Our approach of moving forward parallels that in the bipartite case. Namely, we decompose the stabilizer group into cosets of $\mathbb{S}_{AB}$ and $\mathbb{S}_C$, the stabilizer subgroups of subsystem $AB$ and $C$, respectively. Then we express the initial state in terms of the coset decomposition.
The stabilizer group of subsystem \({\mathbb{S}}_{AB}\) is given by
\begin{equation}
 {\mathbb{S}_{AB}} =  \langle \mathbb{S}_A,\  \mathbb{S}_B, \ L_{AB}, \ H_{AB} \rangle.
\label{eq:SM21}\end{equation}
The stabilizer group of total system then can be decomposed as
\begin{equation}
    {\mathbb{S}_{ABC}}=\underbrace{{{\mathbb{S}}_{AB}}\otimes {{\mathbb{S}}_{C}}\ \cup \ l_{AB}^{1}{{\mathbb{S}}_{AB}}\otimes l_{C}^{1}{{\mathbb{S}}_{C}}\ \cup \ ...}_{{{4}^{{{b}_{AC}}+{{b}_{BC}}+g}}\text{ terms}},
\label{eq:SM22}\end{equation}
where $l_{AB}^k$ and $l_C^k$ are logical operators on $AB$ and $C$ as before. In this case, since we are dealing with initial state with a factorized structure, it is easy to directly identify the logical operators. For example, $\{l_{AB}^k\}$ consists of Pauli operators acting on $b_{AC}+b_{BC}$ Bell pairs, as well as Pauli operators acting on $g$ GHZ states, hence $4^{b_{AC}+b_{BC}+g}$ operators in total. As before, to ensure elements across different cosets have the correct commutation relation as stabilizer group elements, we require (i) if $[l_{AB}^i, l_{AB}^j]=0$, then $[l_C^i, l_C^j]=0$; or (ii) if $\{l_{AB}^i, l_{AB}^j\}=0$, then $\{l_{C}^i, l_C^j\}=0$.

The average $\overline{Y^{\rm lin}}$ can be written as
\begin{equation}
\begin{aligned}&
    1-\overline{Y^{\rm lin}}=\frac{1}{{{2}^{5N}}}\sum\limits_{P \in P_N}{\overline{{{\text{Tr}}^{4}}(P{{U}_{AB}\otimes I_C}{\mathbb{S}_{ABC}}{U}_{AB}^{\dagger }\otimes I_C)}}\\
    &=\frac{1}{{{2}^{5N}}}\sum_{P\in P_N}{\sum\limits_{k}{{\overline{{{\text{Tr}}^{4}}(P_{AB}{{U}_{AB}}l_{AB}^k{\mathbb{S}_{AB}}{U}_{AB}^{\dagger})}}\ {{\text{Tr}}^{4}}({{P}_{C}}{{l}_{C}^k}{{\mathbb{S}}_{C}})}},
\end{aligned}
\label{eq:SM23_1}\end{equation}
where $U_{AB}=U_A \otimes U_B$.

Now recall that $U_{AB}=U_A \otimes U_B$ in the current setup. Therefore, the term ${{\overline{{{\text{Tr}}^{4}}(P_{AB}{{U}_{AB}}l_{AB}^k{\mathbb{S}_{AB}}{U}_{AB}^{\dagger})}}}$ has a similar factorization as in Eq.~(\ref{eq:SM12}). However, additional complications occur in this case, since, depending on $l_{AB}^k$, the set $l_{AB}^k \mathbb{S}_{AB}$ is further decomposed in different ways. In what follows, we use \(t_{ij}\) to represent the value of \(\overline{{{\text{Tr}}^{4}}({P}_{AB}{U}_{AB}l_{AB}^{k}{\mathbb{S}}_{AB}{U}_{AB}^{\dagger })}\) in each case.

(1) \(l_{AB}^{k}=I_{AB}\). We can decompose \({\mathbb{S}}_{AB}\) into:
\begin{equation}
    {\mathbb{S}}_{AB}=\underbrace{{\mathbb{S}}_{A}\otimes {\mathbb{S}}_{B} \ \cup \
    u_1 {\mathbb{S}}_{A} \otimes v_1 {\mathbb{S}}_{B}\ \cup \dots}_{2^{2b_{AB}+g}\text{ terms}}
\label{eq:SM23}\end{equation}
In this case, the quantity \(\overline{{{\text{Tr}}^{4}}({P}_{AB}{U}_{AB}l_{AB}^{k}{\mathbb{S}}_{AB}{U}_{AB}^{\dagger })}\) factorizes into $\alpha_{P_A}(u_{k_1}, u_{k_2}, u_{k_3}, u_{k_4})$ and $\beta_{P_B}(v_{k_1}, v_{k_2}, v_{k_3}, v_{k_4})$ in a similar way as Eq.~(\ref{eq:SM12}). We can thus directly apply our previous results Eqs.~(\ref{eq:SM7}), (\ref{eq:SM13}), (\ref{eq:SM14}), and (\ref{eq:SM15}), noticing that
\begin{equation}
{\rm dim}(\mathbb{S}_A) = f_A, \quad {\rm dim}(\mathbb{S}_B) = f_B,  \quad  \sum_{k_1, k_2, k_3, k_4} f_1 = 3\cdot 2^{4b_{AB} + 2g}.
\end{equation}

\begin{itemize}
 \item If \({{P}_{A}}\ne {{I}_{A}},{{P}_{B}}\ne {{I}_{B}}\), we have:
\begin{equation}
\begin{aligned}
    t_{11}=3\cdot {{2}^{2{{f}_{A}}+2{{f}_{B}}+4{{b}_{AB}}+2g}}\left( \begin{aligned}
  & 1+2\cdot {{2}^{-2{{b}_{AB}}-g}}+\\
  &O({{2}^{-{|A|}+{{b}_{AC}}-{{b}_{AB}}}})+O({{2}^{-{|B|}+{{b}_{BC}}-{{b}_{AB}}}}) \\ 
\end{aligned} \right)
\end{aligned}
\label{eq:SM24}\end{equation}

\item If \({{P}_{A}}= {{I}_{A}},{{P}_{B}}\ne {{I}_{B}}\), we have:
\begin{equation}
    t_{12}=3\cdot {2}^{4{|A|}+2{{f}_{B}}}\left[1+O({{2}^{b_{AB}+b_{BC}+g-|B|}})\right]
\label{eq:SM25}\end{equation}

\item If \({{P}_{A}}\ne {{I}_{A}},{{P}_{B}}= {{I}_{B}}\), we have:
\begin{equation}
    t_{13}=3\cdot {2}^{2{{f}_{A}}+4{|B|}}\left[1+O({{2}^{b_{AB}+b_{AC}+g-|A|}})\right]
\label{eq:SM26}\end{equation}

\item If \({{P}_{A}}= {{I}_{A}},{{P}_{B}}= {{I}_{B}}\), we have:
\begin{equation}
    t_{14}=2^{4|A|+4|B|}.
\label{eq:SM27}\end{equation}

\end{itemize}

(2) \(l_{AB}^{k} \otimes l_{C}^{k} \in  H_{AC}  - I\). There are \(N_2:=2^g-1\) terms, and we can decompose \(l_{AB}^{k}{{\mathbb{S}}_{AB}}\) into:

\begin{equation}
l_{AB}^{k}{{\mathbb{S}}_{AB}}=\underbrace{{\mathbb{S}}_{A}\otimes v_1 {\mathbb{S}}_{B}+
    u_2 {\mathbb{S}}_{A} \otimes {\mathbb{S}}_{B}+u_3 {\mathbb{S}}_{A} \otimes v_3 {\mathbb{S}}_{B}...}_{2^{2b_{AB}+g}\text{ terms}}.
\label{eq:SM28}\end{equation}
To see that the decomposition above is correct, first notice that in this case $l_{AB}^k$ are supported on subsystem $A$ only, so there must be a term $u_2\mathbb{S}_A\otimes \mathbb{S}_B$ where $\mathbb{S}_B$ is left untouched. Secondly, since $H_{AB} \subset \mathbb{S}_{AB}$, for $l_{AB}^{k} \otimes l_{C}^{k} \in  H_{AC}  - I$, it must share a common Pauli-$Z$ operator with an element in $H_{AB}$ acting on the same GHZ state spanning $A$, $B$, and $C$. Thus, there will also be a term $\mathbb{S}_A \otimes v_1 \mathbb{S}_B$ where $\mathbb{S}_A$ is recovered. 

We can then immediately obtain the following results depending on the operator content of the Pauli string $P$ on subsystem $A$ and $B$, respectively.

\begin{itemize}
\item If \({{P}_{A}}\ne {{I}_{A}},{{P}_{B}}\ne {{I}_{B}}\), we have:
\begin{equation}
\begin{aligned}
    t_{21}=3\cdot {{2}^{2{{f}_{A}}+2{{f}_{B}}+4{{b}_{AB}}+2g}}\left( \begin{aligned}
  & 1+2\cdot {{2}^{-2{{b}_{AB}}-g}}+\\
  &O({{2}^{-{|A|}+{{b}_{AC}}-{{b}_{AB}}}})+O({{2}^{-{|B|}+{{b}_{BC}}-{{b}_{AB}}}}) \\ 
\end{aligned} \right)
\end{aligned}
\label{eq:SM29}\end{equation}

\item If \({{P}_{A}}= {{I}_{A}},{{P}_{B}}\ne {{I}_{B}}\), we have:
\begin{equation}
    t_{22}=3\cdot {2}^{4{|A|}+2{{f}_{B}}}\left[1+O({{2}^{-b_{AB}-b_{BC}-g-|B|}})\right]
\label{eq:SM30}\end{equation}

\item If \({{P}_{A}}\ne {{I}_{A}},{{P}_{B}}= {{I}_{B}}\), we have:
\begin{equation}
    t_{23}=3\cdot {2}^{2{{f}_{A}}+4{|B|}}\left[1+O({{2}^{-b_{AB}-b_{AC}-g-|A|}})\right]
\label{eq:SM31}\end{equation}

\item If \({{P}_{A}}= {{I}_{A}},{{P}_{B}}= {{I}_{B}}\), we have:
\begin{equation}
    t_{24}=0,
\label{eq:SM32}\end{equation}
since $I_{AB} \notin l_{AB}^k S_{AB}$.

\end{itemize}

(3) \(l_{AB}^{k} \otimes l_{C}^{k} \in \left\langle L_{BC},\ H_{BC}\right\rangle-  H_{BC} \). There are \(N_3:=2^{2b_{BC}+g}-2^g\) terms. The reason we must exclude $H_{BC}$ is that we have already included generators $H_{AC}$ in case (2) discussed above, which, together with $H_{AB}$ will generate $H_{BC}$ already. In other words, elements in $H_{BC}$ are already included in case (2) above, so we should avoid double counting.
We can decompose \(l_{AB}^{k}{{\mathbb{S}}_{AB}}\) into:

\begin{equation}
l_{AB}^{k}{{\mathbb{S}}_{AB}}=\underbrace{{\mathbb{S}}_{A}\otimes v_1 {\mathbb{S}}_{B}+
    u_2 {\mathbb{S}}_{A} \otimes v_2 {\mathbb{S}}_{B}+u_3 {\mathbb{S}}_{A} \otimes v_3 {\mathbb{S}}_{B}...}_{2^{2b_{AB}+g}\text{ terms}},
\label{eq:SM33}\end{equation}
since $l_{AB}^k$ only acts on subsystem $B$. Similarly, we obtain the following results.

\begin{itemize}
\item If \({{P}_{A}}\ne {{I}_{A}},{{P}_{B}}\ne {{I}_{B}}\), we have:
\begin{equation}
    t_{31}=3\cdot {{2}^{2{{f}_{A}}+2{{f}_{B}}+4{{b}_{AB}}+2g}}\left[ 1+2\cdot {{2}^{-2{{b}_{AB}}-g}}+O({{2}^{-{|A|}+{{b}_{AC}}-{{b}_{AB}}}})\right]
\label{eq:SM34}\end{equation}

\item If \({{P}_{A}}= {{I}_{A}},{{P}_{B}}\ne {{I}_{B}}\), we have:
\begin{equation}
    t_{32}=3\cdot {2}^{4{|A|}+2{{f}_{B}}}\left[1+O({{2}^{-b_{AB}-b_{BC}-g-|B|}})\right]
\label{eq:SM35}\end{equation}

\item If \({{P}_{A}}\ne {{I}_{A}},{{P}_{B}}= {{I}_{B}}\), we have:
\begin{equation}
    t_{33}=0
\label{eq:SM36}\end{equation}

\item If \({{P}_{A}}= {{I}_{A}},{{P}_{B}}= {{I}_{B}}\), we have:
\begin{equation}
    t_{34}=0
\label{eq:SM37}\end{equation}

\end{itemize}

(4) \(l_{AB}^{k} \otimes l_{C}^{k} \in \left\langle L_{AC},\ H_{AC}\right\rangle- H_{AC}\), there are \(N_4:=2^{2b_{AC}+g}-2^g\) terms. As before, since $l_{AB}^k \in H_{AC}$ has already been included in case (2) above, we should exclude them to avoid double counting. We can decompose \(l_{AB}^{k}{{\mathbb{S}}_{AB}}\) into:

\begin{equation}
l_{AB}^{k}{{\mathbb{S}}_{AB}}=\underbrace{{u_1 \mathbb{S}}_{A}\otimes {\mathbb{S}}_{B}+
    u_2 {\mathbb{S}}_{A} \otimes v_2 {\mathbb{S}}_{B}+u_3 {\mathbb{S}}_{A} \otimes v_3 {\mathbb{S}}_{B}...}_{2^{2b_{AB}+g}\text{ terms}}
\label{eq:SM38}\end{equation}
since in this case $l_{AB}^k$ leaves subsystem $B$ and hence $\mathbb{S}_B$ untouched. Depending on the operator content of the Pauli string $P$ on subsystem $A$ and $B$, we obtain the following results.

\begin{itemize}

\item If \({{P}_{A}}\ne {{I}_{A}},{{P}_{B}}\ne {{I}_{B}}\), we have:
\begin{equation}
    t_{41}=3\cdot {{2}^{2{{f}_{A}}+2{{f}_{B}}+4{{b}_{AB}}+2g}}\left[1+2\cdot {{2}^{-2{{b}_{AB}}-g}}+O({{2}^{-{|B|}+{{b}_{BC}}-{{b}_{AB}}}})\right]
\label{eq:SM39}\end{equation}

\item If \({{P}_{A}}= {{I}_{A}},{{P}_{B}}\ne {{I}_{B}}\), we have:
\begin{equation}
    t_{42}=0
\label{eq:SM40}\end{equation}

\item If \({{P}_{A}}\ne {{I}_{A}},{{P}_{B}}= {{I}_{B}}\), we have:
\begin{equation}
    t_{43}=3\cdot {2}^{2{{f}_{A}}+4{|B|}}\left[1+O({{2}^{-b_{AB}-b_{AC}-g-|A|}})\right]
\label{eq:SM41}\end{equation}

\item If \({{P}_{A}}= {{I}_{A}},{{P}_{B}}= {{I}_{B}}\), we have:
\begin{equation}
    t_{44}=0
\label{eq:SM42}\end{equation}

\end{itemize}

(5) Finally,  \(l_{AB}^{k} \otimes l_{C}^{k}\) does not belong to any of the four cases listed above. There are \(N_5:=2^{2b_{AC}+2b_{BC}+2g}-2^{2b_{AC}+g}-2^{2b_{BC}+g}+2^g\) terms. In this case, $l_{AB}^k$ does not leave either of $\mathbb{S}_A$ and $\mathbb{S}_B$ invariant, and the decomposition of  \(l_{AB}^{k}{{\mathbb{S}}_{AB}}\) reads:
\begin{equation}
l_{AB}^{k}{{\mathbb{S}}_{AB}}=\underbrace{{u_1 \mathbb{S}}_{A}\otimes v_1 {\mathbb{S}}_{B}+
    u_2 {\mathbb{S}}_{A} \otimes v_2 {\mathbb{S}}_{B}+u_3 {\mathbb{S}}_{A} \otimes v_3 {\mathbb{S}}_{B}...}_{2^{2b_{AB}+g}\text{ terms}}
\label{eq:SM43}\end{equation}

\begin{itemize}
\item If \({{P}_{A}}\ne {{I}_{A}},{{P}_{B}}\ne {{I}_{B}}\), we have:
\begin{equation}
\begin{aligned}
    t_{51}=3\cdot {{2}^{2{{f}_{A}}+2{{f}_{B}}+4{{b}_{AB}}+2g}}\left( \begin{aligned}
  & 1+2\cdot {{2}^{-2{{b}_{AB}}-g}}+\\
  &+O({{2}^{-{|A|}-{{b}_{AB}}-{{b}_{AC}}-g}})+O({{2}^{-{|B|}-{{b}_{BC}}-{{b}_{AB}}-g}}) \\ 
\end{aligned} \right)
\end{aligned}
\label{eq:SM44}\end{equation}

\item If \({{P}_{A}}= {{I}_{A}},{{P}_{B}}\ne {{I}_{B}}\), we have:
\begin{equation}
    t_{52}=0
\label{eq:SM45}\end{equation}

\item If \({{P}_{A}}\ne {{I}_{A}},{{P}_{B}}= {{I}_{B}}\), we have:
\begin{equation}
    t_{53}=0
\label{eq:SM46}\end{equation}

\item If \({{P}_{A}}= {{I}_{A}},{{P}_{B}}= {{I}_{B}}\), we have:
\begin{equation}
    t_{54}=0
\label{eq:SM47}\end{equation}

\end{itemize}

The \(\overline{Y^{\rm lin}}\) is then expressed in Eq.~(\ref{eq:SM47_1}):

\begin{equation}
\begin{aligned}
    &2^{5|A|+5|B|+5|C|}\cdot 2^{-4|C|} \cdot2^{-|C|+b_{AC}+b_{BC}+g} \cdot(1-\overline{Y^{\rm lin}})\\
    &=t_{11}+N_2t_{21}+N_3t_{31}+N_4t_{41}+N_5t_{51}\\
    &+(4^{|B|}-1)\left[t_{12}+N_2t_{22}+N_3t_{32}+N_4t_{42}+N_5t_{52}\right]\\
    &+(4^{|A|}-1)\left[t_{13}+N_2t_{23}+N_3t_{33}+N_4t_{43}+N_5t_{53}\right]\\
    &+(4^{|A|}-1)(4^{|B|}-1)\left[t_{14}+N_2t_{24}+N_3t_{34}+N_4t_{44}+N_5t_{54}\right]
\end{aligned}
\label{eq:SM47_1}
\end{equation}


In the above summation, the prefactor \(2^{-|C|+b_{AC}+b_{BC}+g}\) comes from the size of each coset \(|l_C^k\mathbb{S}_C|\). We observe that some terms may include contributions of the form \(t_{41}=3\cdot {{2}^{2{{f}_{A}}+2{{f}_{B}}+4{{b}_{AB}}+2g}}[ 1+2\cdot {{2}^{-2{{b}_{AB}}-g}}+O({{2}^{-{|B|}+{{b}_{BC}}-{{b}_{AB}}}})]\), which raises concerns about whether these subleading terms \(O({{2}^{-{|B|}+{{b}_{BC}}-{{b}_{AB}}}})\) can be neglected when the entanglement is large. However, we can verify that the number of \(t_{41}\) is  \( N_4\), while the number of \(t_{51}\) is \(N_5\), and we have \(N_4/N_5\approx2^{-2b_{BC}-g}\). Therefore, the subleading terms contributes \(O(2^{-|B|-b_{BC}-b_{AB}-g})\) to the final result.

\section{An example of the decomposition of the stabilizer group}  \label{sec:2} 

In this section, we give a concrete example of the coset decomposition of a stabilizer group that has been widely used in this work. Consider a 5-qubit GHZ state:  

\begin{equation}
    \frac{1}{\sqrt{2}}\left(|00000\rangle + |11111\rangle \right)
\end{equation}

The elements of the corresponding stabilizer group, denoted as \(\mathbb{S}_{AB}\), are listed in Table~\ref{table:table1}. We partition the system into two subsystems: subsystem \(A\) consisting of the first three qubits and subsystem \(B\) consisting of the remaining two qubits. The reduced density matrices for these subsystems are given by Eq.~(\ref{eq:AppB1}):  

\begin{equation}
    \rho_A=\frac{1}{2^3}({III}_A+{IZZ}_A+{ZIZ}_A+{ZZI}_A),\quad \rho_B=\frac{1}{2^2}({II}_B+{ZZ}_B)
\label{eq:AppB1}
\end{equation}  

Defining the stabilizer subgroups as  

\begin{equation}
\mathbb{S}_A:=\{{III}_A,{IZZ}_A,{ZIZ}_A,{ZZI}_A\}, \quad \mathbb{S}_B:=\{{II}_B,{ZZ}_B\},    
\end{equation}
we observe that \(\mathbb{S}_A\otimes\mathbb{S}_B\) forms a subgroup of \(\mathbb{S}_{AB}\). Consequently, \(\mathbb{S}_{AB}\) can be decomposed into this subgroup and its corresponding cosets, as expressed in Eq.~(\ref{eq:AppB2}):  

\begin{equation}
    \mathbb{S}_{AB}=\mathbb{S}_A\otimes\mathbb{S}_B\ \cup\ a_1\mathbb{S}_A\otimes b_1\mathbb{S}_B\ \cup\ \ a_2\mathbb{S}_A\otimes b_2\mathbb{S}_B\ \cup\ a_3\mathbb{S}_A\otimes b_3\mathbb{S}_B
\label{eq:AppB2}
\end{equation}  

The states stabilized by \(\mathbb{S}_A\) form a stabilizer code defined on subsystem \(A\), specifically the code space spanned by \(\{|000\rangle_A, |111\rangle_A\}\). Similarly, the stabilizer code for subsystem \(B\) is spanned by \(\{|00\rangle_B, |11\rangle_B\}\). The operators \(a_k\) and \(b_k\) correspond to the logical operators of the stabilizer codes on \(A\) and \(B\), respectively. Notably, these logical operators are combined in a manner that ensures all \(a_k \otimes b_k\) commute with one another.

\begin{table}[!tb]
\centering
\begin{tabular}{@{\extracolsep{\fill}}|c|c|c|c|}
\hline
$\quad\quad \mathbb{S}_A \otimes \mathbb{S}_B\quad\quad$ & $XXX\mathbb{S}_A \otimes XX\mathbb{S}_B$ & $IIZ\mathbb{S}_A \otimes ZI\mathbb{S}_B$ & $-XXY\mathbb{S}_A \otimes YX\mathbb{S}_B$ \\
\hline
IIIII & XXXXX & IIZZI & -XXYYX \\
\hline
IZZII & -XYYXX & IZIZI & -XYXYX \\
\hline
ZZIII & -YYXXX & ZZZZI & YYYYX \\
\hline
ZIZII & -YXYXX & ZIIZI & -YXXYX \\
\hline
IIIZZ & -XXXYY & IIZIZ & -XXYXY \\
\hline
IZZZZ & XYYYY & IZIIZ & -XYXXY \\
\hline
ZZIZZ & YYXYY & ZZZIZ & YYYXY \\
\hline
ZIZZZ & YXYYY & ZIIIZ & -YXXXY \\
\hline
\end{tabular}
\caption{Stabilizer group of 5-qubit GHZ state and its decomposition.}
\label{table:table1}
\end{table}

\section{Factorized unitary acting on three subregions}\label{sec:3}
In this section, we consider the effect of a factorized unitary acting on all three subregions of a tripartite state: $U=U_A\otimes U_B \otimes U_C$. In this case, the final result reads: 
\begin{equation}
\begin{aligned}
& \overline{Y^{\rm lin}}=1-4\cdot 2^{-N}\big[1+3\cdot {{2}^{-2{{b}_{AB}}-2{{b}_{AC}}-2{g}}}+3\cdot {{2}^{-2{{b}_{AB}}-2{{b}_{BC}}-2{g}}}+3\cdot {{2}^{-2{{b}_{AC}}-2{{b}_{BC}}-2{g}}}\\
&+\frac{3}{2}\cdot {{2}^{-2{{b}_{AB}}-2{{b}_{AC}}-2{{b}_{BC}}-2{g}}}+\frac{9}{2}\cdot {{2}^{-2{{b}_{AB}}-2{{b}_{AC}}-2{{b}_{BC}}-3{g}}}\big]
\end{aligned}
\label{eq:SM50}\end{equation}
where once again subleading corrections are of the order $O(2^{-|A|}, 2^{-|B|}, 2^{-|C|})$. Similarly to the bipartite result Eq.~(\ref{eq:UaUb}), we find that the resulting state also achieves maximal amount of magic, provided that the subsystems are sufficiently entangled. Interestingly, this doesn't require that the $A$, $B$, and $C$ are pairwise entangled. As Eq.~(\ref{eq:SM50}) suggests, if $A\&B$, $B\&C$ are highly entangled while $A\&C$ are not entangled (i.e. \(b_{AB},b_{BC}\) is large, but \(b_{AC}\) is zero), the average $\overline{Y^{\rm lin}}$ still approaches that of a Haar random state.

As a consistency check, we consider a special case of Eq.~(\ref{eq:SM50}), namely, when only $A\&B$ are entangled, and $C$ is not entangled with $AB$. In this case, we can check that the average $1-\overline{Y^{\rm lin}}$ coincides with a product of Eq.~(\ref{eq:UaUb}) in the main text for the bipartite case and a single Haar random state on $C$, which is consistent with our previous results.

We now detail the calculation of Eq.~(\ref{eq:SM50}). Our strategy parallels that used for the bipartite case. We first decompose the stabilizer group for the system $\mathbb{S}_{ABC}$ as
\begin{equation}
    {\mathbb{S}_{ABC}} = \underbrace{{{\mathbb{S}}_{A}} \otimes {{\mathbb{S}}_{B}} \otimes {{\mathbb{S}}_{C}} \ \cup \ {{a}_{1}}{{\mathbb{S}}_{A}} \otimes {{b}_{1}}{{\mathbb{S}}_{B}} \otimes {{c}_{1}}{{\mathbb{S}}_{C}} \ \cup \ \dots}_{d:={2}^{2b_{AB} + 2b_{AC} + 2b_{BC} + 3g}\ {\rm terms}}
\label{eq:SM16}
\end{equation}
where \(a_k \otimes b_k \otimes c_k \in \langle L_{AB},L_{BC},L_{AC},H_X,H_{AB},H_{BC} \rangle\). Notice, however, since our initial state now factorizes into single-qubit, Bell pair and GHZ states, the logical operators $\{a_k, b_k, c_k\}$ have more structures. For example, the $2^{2b_{AB}+g}$ elements in $\langle L_{AB}, H_{AB}\rangle$ are only supported on subsystems $A$ and $B$, hence there are $d_C := 2^{2b_{AB}+g}$ terms in the coset decomposition~(\ref{eq:SM16}) that have the same $c_k$'s. Similarly, there are $d_A:=2^{2b_{BC}+g}$ cosets with the same $a_k$'s, and $d_B:= 2^{2b_{AC}+g}$ cosets with the same $b_k$'s.

After applying \(U := U_A \otimes U_B \otimes U_C\) and averaging over Haar random unitaries, we obtain:

\begin{equation}
    1-\overline{Y^{\rm lin}} = \frac{1}{{2^{5N}}} \sum\limits_{P \in P_N} \sum\limits_{k_1, k_2, k_3, k_4} \alpha_{P_A}(a_{k_1}, a_{k_2}, a_{k_3}, a_{k_4}) \beta_{P_B}(b_{k_1}, b_{k_2}, b_{k_3}, b_{k_4}) \gamma_{P_C}(c_{k_1}, c_{k_2}, c_{k_3}, c_{k_4}).
\label{eq:SM17}
\end{equation}
Now, depending on the operator contents of the Pauli string $P$ on subsystems $A$, $B$ and $C$ (identity or not), we use Eq.~(\ref{eq:SM6}) and Eq.~(\ref{eq:SM7}) to evaluate the above expression. We give an example here. Consider when \(P_A \neq I_A\), \(P_B \neq I_B\), and \(P_C \neq I_C\), using Eq.~(\ref{eq:SM6}) and Eq.~(\ref{eq:SM7}), we expand Eq.~(\ref{eq:SM17}) as follows:

\begin{equation}
    \begin{aligned}
    & \overline{{{{\text{Tr}}^{4}}}(PU{\mathbb{S}_{ABC}}{{U}^{\dagger}})} \\
    & = 2^{2f_A + 2f_B + 2f_C} \left( 3d^2 + 6d d_A + 6d d_B + 6d d_C + 6d \cdot 2^g \right) \\
    & \quad + O(2^{3f_A + 2f_B + 2f_C - 2|A|} d^2) + O(2^{2f_A + 3f_B + 2f_C - 2|B|} d^2) \\
    & \quad + O(2^{2f_A + 2f_B + 3f_C - 2|C|} d^2) + O(2^{3f_A + 3f_B + 3f_C - 2|A| - 2|B| - 2|C|} d^3) \\
    & = 3 \cdot 2^{2f_A + 2f_B + 2f_C} d \left( d + 2d_A + 2d_B + 2d_C + 2 \cdot 2^g + O(2^{-|A|}) + O(2^{-|B|}) + O(2^{-|C|}) \right),
    \end{aligned}
\label{eq:SM18}
\end{equation}
where $d:= 2^{2b_{AB} + 2b_{AC} + 2b_{BC} + 3g}$ is the total number of cosets. 
The first line can be obtained when only leading order terms, namely, the first line in Eq.~(\ref{eq:SM7}) is considered. Then we need to calculate the following summations over $\{k_1, k_2, k_3, k_4\}$, constrained by products of $\delta$-functions: 
\begin{eqnarray}
    &\sum\limits_{k_1, k_2, k_3, k_4} \delta_2(a_{k_1}, a_{k_2}) \delta_2(a_{k_3}, a_{k_4}) \delta_2(b_{k_1}, b_{k_2}) \delta_2(b_{k_3}, b_{k_4}) \delta_2(c_{k_1}, c_{k_2}) \delta_2(c_{k_3}, c_{k_4})=d^2\\
    &\sum\limits_{k_1, k_2, k_3, k_4} \delta_2(a_{k_1}, a_{k_2}) \delta_2(a_{k_3}, a_{k_4}) \delta_2(b_{k_1}, b_{k_2}) \delta_2(b_{k_3}, b_{k_4}) \delta_2(c_{k_1}, c_{k_3}) \delta_2(c_{k_2}, c_{k_4})=dd_C\\
    &\sum\limits_{k_1, k_2, k_3, k_4} \delta_2(a_{k_1}, a_{k_2}) \delta_2(a_{k_3}, a_{k_4}) \delta_2(b_{k_1}, b_{k_3}) \delta_2(b_{k_2}, b_{k_4}) \delta_2(c_{k_1}, c_{k_4}) \delta_2(c_{k_2}, c_{k_3})=2^gd
\label{eq:SM19}
\end{eqnarray}
This is where we need to be careful with the number of cosets having identical $a_k$'s, $b_k$'s, or $c_k$'s.  The first summation is straightforward, if we fix an arbitrary \(a_{k_1} \otimes b_{k_1} \otimes c_{k_1}\), then we find no more constraints except for \(a_{k_2} \otimes b_{k_2} \otimes c_{k_2}=a_{k_1} \otimes b_{k_1} \otimes c_{k_1}\) and \(a_{k_3} \otimes b_{k_3} \otimes c_{k_3}=a_{k_4} \otimes b_{k_4} \otimes c_{k_4}\), so the summation over $k_1$ and $k_3$ can be chosen independently, leading to $d^2$.

For the second summation, because of the factors \(\delta_2(a_{k_1},a_{k_2}),\delta_2 (b_{k_1},b_{k_2})\), we have \(a_{k_1} \otimes b_{k_1} \otimes c_{k_1}=a_{k_2} \otimes b_{k_2} \otimes c_{k_2}\), similarly we have \(a_{k_3} \otimes b_{k_3} \otimes c_{k_3}=a_{k_4} \otimes b_{k_4} \otimes c_{k_4}\). And considering further the factor \(\delta_2(c_{k_1},c_{k_3})\), we find once we fix \(a_{k_1} \otimes b_{k_1} \otimes c_{k_1}\), there are only \(d_C=2^{2b_{AB}+g}\) choices for \(a_{k_3} \otimes b_{k_3} \otimes c_{k_3}\), thus the summation is \(d d_C\).

For the third summation, once we fix \(a_{k_1} \otimes b_{k_1} \otimes c_{k_1}\), because of the factor \(\delta_2(a_{k_1},a_{k_2})\), we need \(a_{k_2} \otimes b_{k_2} \otimes c_{k_2}\in a_{k_1} \otimes b_{k_1} \otimes c_{k_1} \cdot \left\langle L_{BC},H_{BC}\right\rangle\); because of the factor \(\delta_2(b_{k_1},b_{k_3})\), we can choose \(a_{k_3} \otimes b_{k_3} \otimes c_{k_3}\in a_{k_1} \otimes b_{k_1} \otimes c_{k_1}\cdot \left\langle L_{AC},H_{AC}\right\rangle\); considering further the factor \(\delta_2( c_{k_2},c_{k_3})\), we can only choose \(a_{k_2} \otimes b_{k_2} \otimes c_{k_2}\in a_{k_1} \otimes b_{k_1} \otimes c_{k_1}\cdot \left\langle H_{BC}\right\rangle\). Furthermore, once \(a_{k_2} \otimes b_{k_2} \otimes c_{k_2}\) is fixed, \(a_{k_3} \otimes b_{k_3} \otimes c_{k_3}\) is also determined, because $\delta_2 (c_{k_2},c_{k_3})$ implies that $a_{k_3}\otimes b_{k_3} \otimes c_{k_3}$ can only be obtained by acting on $a_{k_1}\otimes b_{k_1}\otimes c_{k_1}$ with stabilizer group element of the same GHZ state as in $a_{k_2}\otimes b_{k_2}\otimes c_{k_2}$. Finally, $a_{k_4}\otimes b_{k_4} \otimes c_{k_4}$ is also fully fixed by the $\delta$-functions. Therefore, there are only two independent summations in the choices for ${k_1}$ and the GHZ state in $H_{BC}$, leading to the final result $2^g d$.

After similar analysis, we obtain:
\begin{eqnarray}
&&\overline{{{\text{Tr}}^{\text{4}}}(PU{\mathbb{S}_{ABC}}{{U}^{\dagger }})}\nonumber \\
&=& \begin{cases}
   3\cdot{{\text{2}}^{\text{2}{{f}_{A}}+\text{2}{{f}_{B}}+\text{2}{{f}_{C}}}}d({d}+2{{d}_{A}}+2{{d}_{B}}+2{{d}_{C}}+2\cdot {{2}^{g}}),&{{P}_{A}}\ne {{I}_{A}},{{P}_{B}}\ne {{I}_{B}},{{P}_{C}}\ne {{I}_{C}} \\
   {{\text{2}}^{4{|A|}+\text{2}{{f}_{B}}+\text{2}{{f}_{C}}}}(3 {d}_{A}^{2}+6{{d}_{A}}),&{{P}_{A}}={{I}_{A}},{{P}_{B}}\ne {{I}_{B}},{{P}_{C}}\ne {{I}_{C}}\\
   3\cdot{{\text{2}}^{4{|A|}+4{|B|}+\text{2}{{f}_{C}}}} ,&{{P}_{A}}={{I}_{A}},{{P}_{B}}={{I}_{B}},{{P}_{C}}\ne {{I}_{C}}  \\
   {{\text{2}}^{4{|A|}+4{|B|}+4{|C|}}},&{{P}_{A}}={{I}_{A}},{{P}_{B}}={{I}_{B}},{{P}_{C}}={{I}_{C}}  \\
\end{cases}.
\label{eq:SM20}
\end{eqnarray}
The subleading corrections are of the order $O(2^{-|A|}, 2^{-|B|}, 2^{-|C|})$ in each line. Cases not listed in Eq.~(\ref{eq:SM20}) are similar. After combining all terms together, we obtain Eq.~(\ref{eq:SM50}).

\begin{figure}[!tb]
    \centering
    \includegraphics[width=0.6\linewidth]{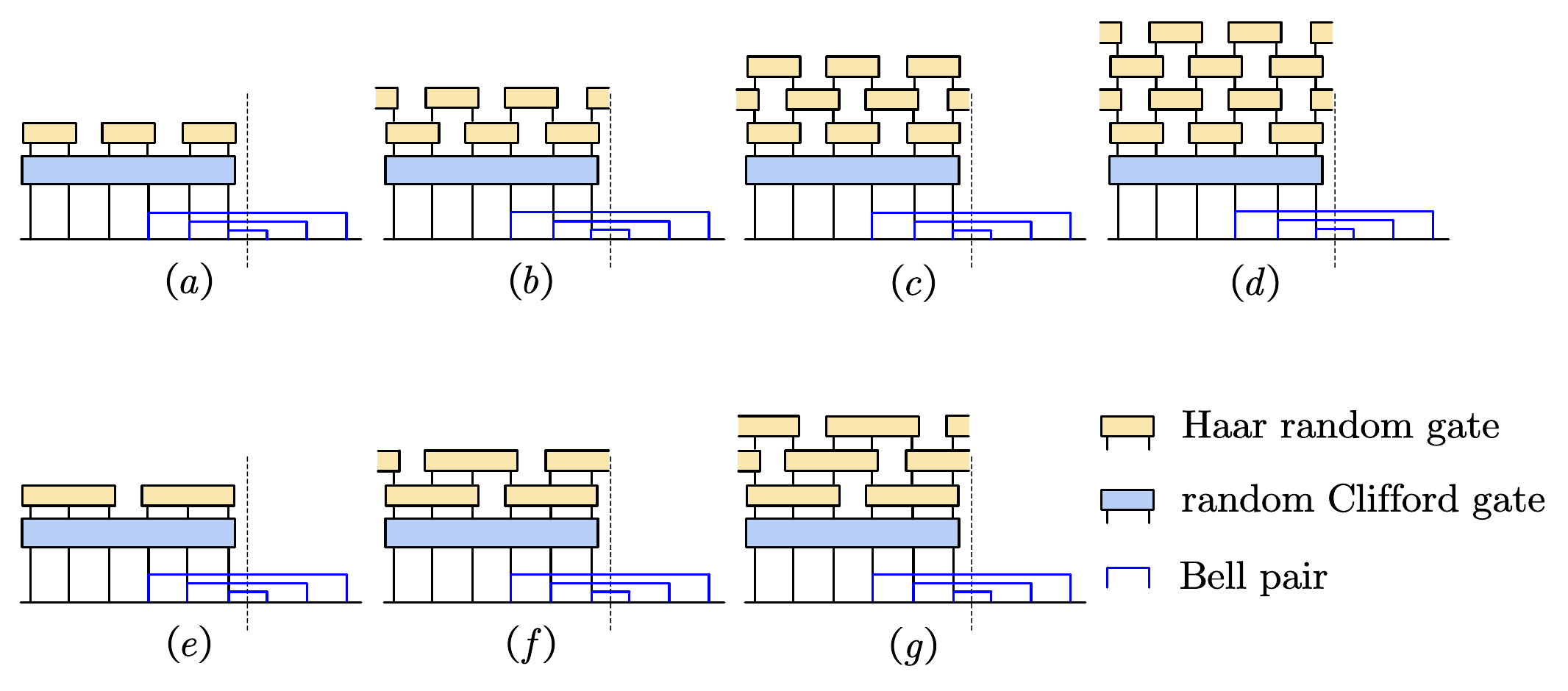}
    \caption{The circuits used to simulate the case where the gate acting on A is not sampled from a global Haar ensemble. The blue lines represent Bell pairs. The blue bricks represent unrelated random Clifford gates, and the yellow bricks represent unrelated random magical gates.}
    \label{fig:SM6}
\end{figure}

\begin{figure}[!tb]
    \centering
    \includegraphics[width=0.5\linewidth]{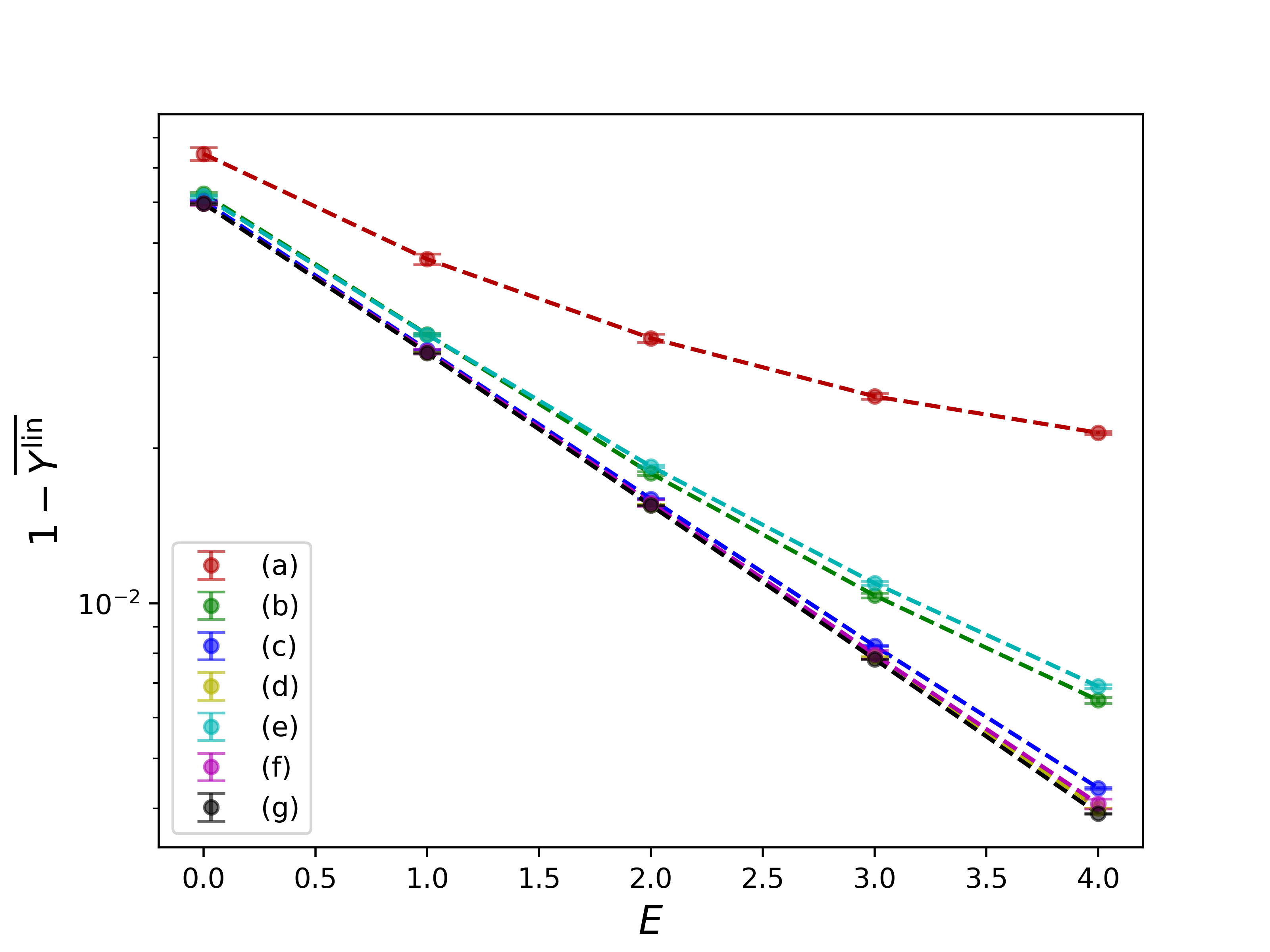}
    \caption{Numerical results for $\overline{Y^{\rm lin}}$ under brickwork circuits of varying depths and gate ranges, as depicted in Fig.~\ref{fig:SM6}.}
    \label{fig:SM7}
\end{figure}

\section{Magic injection with shallow-depth brickwork circuits}\label{sec:4}

In this section, we replace the Haar random unitary gate acting on $A$ with a shallow-depth brickwork circuit. First, we still consider the initial state shown in Fig. 1(d) of the main text, followed by one layer of local gates acting on subregion $A$. Since the initial state factorizes into single-qubit states and Bell-pair states, the average $1-\overline{Y^{\rm lin}}$ also factorizes into a product of contributions from single-qubit and Bell states. Thus, it is easy to see that, \(1-\overline{Y^{\rm lin}}\propto 2^{-E}\) still holds true in this case.

However, if $U_{A}$ is not Haar random, the specific choice of initial state in Fig. 1(d) is no longer the most generic one. To construct the most general stabilizer initial state with tunable entanglement, notice that since any bipartite stabilizer state is local-unitary equivalent to the state in Fig. 1(d), we can simply recover a generic stabilizer state by acting on Fig. 1(d) with random Clifford unitaries $U^{\rm Cliff}_A \otimes U^{\rm Cliff}_B$. Since $U^{\rm Cliff}_B$ commutes with subsequent magic injection gates supported on $A$ and leaves $Y$ invariant, we can simply apply $U^{\rm Cliff}_A$. We then consider brickwork circuits of varying depths and gate ranges, as summarized in Fig.~\ref{fig:SM6}. We then numerically compute the average $\overline{Y^{\rm lin}}$ corresponding to each circuit architecture, and the results are shown in Fig.~\ref{fig:SM7}. We find that the qualitative picture of our analytical result is still valid for the brickwork circuit, in that the average $1-\overline{Y^{\rm lin}}$ decays exponentially with $E$. However, notice that for extremely shallow depths (single layer and two layers), the curve bends upward and saturates, which indicates magic injection that is below capacity. As the circuit depth increases, the behavior of $\overline{Y^{\rm lin}}$ approaches that of a Haar random $U_A$.

\section{Unitary stabilizer R\'enyi entropy}
Here we develop the formalism for unitary magic and T count based on ideas similar to the stabilizer R\'enyi entropy and unitary nullity. 

\begin{definition}
    Let $U$ be an $N$-qubit unitary, then the $\alpha-$th unitary stabilizer R\'enyi entropy of $U$ is defined as

\begin{equation}
        H_{\alpha}(U):=\frac{1}{1-\alpha}\log \Big(\frac{1}{2^{2N}}\sum_{P_{i},P_{j}}\Big[\frac{{\text{tr}}(P_{i}UP_{j}U^{\dagger})}{2^{N}}\Big]^{2\alpha}\Big)
\label{eq:cu}
\end{equation}
where $P_i \in P_N$.
\end{definition}

For a shorthand notation, let \(Q_{\alpha}:=\sum_{P_{i}\in P_N}{P_{i}}^{\otimes2\alpha}\), then we can write it as:
\begin{equation}
    H_{\alpha}(U)=\frac{1}{1-\alpha}\text{log}[{\text{tr}}(Q_{\alpha}U^{\otimes2\alpha}Q_{\alpha}{U^{\dagger\otimes2\alpha}})]-2N\frac{1+\alpha}{1-\alpha}.
\label{eq:SM_G0}
\end{equation}
As explained in the main text, the quantity $H_{\alpha}$ can be viewed as the \(\alpha\)-th R\'enyi entropy of the probability distribution associated with the weight $|c_{ij}|^2$ of $U P_i U^\dagger=\sum_j c_{ij} P_j$ over all $4^N$ Pauli strings. 

\subsection{The relationship between unitary SRE and state SRE}
Consider a $2N$-qubit maximally entangled state between the original system and an auxiliary system, where a unitary $U$ acts on the $N$-qubit subsystem. The unitary stabilizer R\'enyi entropy $H_{\alpha}(U)$ is directly related to the state stabilizer R\'enyi entropy $M_{\alpha}$, defined as:

\begin{equation}
M_{\alpha}(\rho) := 
\frac{1}{1-\alpha}\text{log}[\text{tr}({\rho}^{\otimes2\alpha}Q_{\alpha})]-N\frac{1}{1-\alpha}. 
\end{equation}

\begin{theorem}
    Suppose $U$ is an $N$-qubit unitary operator. Let $|U\rangle_{AB} = U_A\otimes I_B|{{\rm Bell}\rangle}^{\otimes N}_{AB})$ be the Choi state of $U$ where $|A|=|B|=2^N$ and $|\mathrm{Bell}\rangle=(|00\rangle+|11\rangle)/\sqrt{2}$ is maximally entangled between $A$ and $B$ Then $$M_{\alpha}(|U\rangle) = H_{\alpha}(U).$$
\end{theorem}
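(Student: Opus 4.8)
The plan is to prove the identity by directly evaluating both sides in the Pauli basis and exhibiting a term-by-term correspondence, which is the computational incarnation of channel--state duality. First I would expand the Choi state in the computational basis, writing $|U\rangle = 2^{-N/2}\sum_{i} (U|i\rangle)_A \otimes |i\rangle_B$, so that the density matrix becomes $\rho = |U\rangle\langle U| = 2^{-N}\sum_{i,j} (U|i\rangle\langle j|U^\dagger)_A \otimes (|i\rangle\langle j|)_B$. Since the $2N$-qubit Pauli group factorizes as $P = P_A \otimes P_B$ with $P_A,P_B \in P_N$, the entire state SRE reduces to evaluating $\text{tr}(\rho\, P_A\otimes P_B)$ for each such pair and then summing the $2\alpha$-th powers.

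The central computation is then to show that $\text{tr}(\rho\, P_A\otimes P_B) = 2^{-N}\,\text{tr}\bigl[(U^\dagger P_A U)\, P_B^T\bigr]$. This follows by contracting the two tensor factors: $\text{tr}(U|i\rangle\langle j|U^\dagger P_A) = \langle j|U^\dagger P_A U|i\rangle$ and $\text{tr}(|i\rangle\langle j| P_B) = (P_B)_{ji}$, after which I recognize $\sum_{i,j}(U^\dagger P_A U)_{ji}(P_B)_{ji}$ as $\text{tr}[(U^\dagger P_A U)P_B^T]$ via $(P_B)_{ji} = (P_B^T)_{ij}$. Using cyclicity of the trace this rearranges to $2^{-N}\,\text{tr}(P_A\, U P_B^T U^\dagger)$, which matches the summand $2^{-N}\text{tr}(P_i U P_j U^\dagger)$ appearing in the definition of $H_\alpha(U)$ under the identification $P_i = P_A$ and $P_j = P_B^T$.

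The remaining work is bookkeeping. I would observe that for any Pauli string $P_B^T = \pm P_B$ (since $X^T=X$, $Z^T=Z$, $Y^T=-Y$), so $P_B \mapsto P_B^T$ is a signed bijection of $P_N$; because every summand is raised to the even power $2\alpha$, these signs cancel and $\sum_{P_A,P_B}$ equals $\sum_{P_i,P_j}$. Finally I would match the normalizations: the $2N$-qubit state SRE carries the prefactor $2^{-2N}$ and the offset appropriate to $2N$ qubits, while the Choi-state summand $2^{-N}\text{tr}(P_iU P_jU^\dagger)$ already supplies the $2^{-N}$ inside the definition of $H_\alpha$, so the arguments of the logarithm on both sides coincide exactly and $M_\alpha(|U\rangle)=H_\alpha(U)$ follows.

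The step I expect to be the main, if modest, obstacle is the transpose: one must verify carefully that the identity $I_B$ together with the computational-basis structure of the Bell pairs produces $P_B^T$ rather than $P_B$, and then confirm that this transpose is harmless both because it is a signed bijection of $P_N$ and because the $2\alpha$ power is even. A secondary point requiring care is tracking the powers of $2$ across the two equivalent forms of the state SRE (the $Q_\alpha$ form versus the explicit Pauli-sum form) so that the additive offset for a $2N$-qubit system is accounted for correctly.
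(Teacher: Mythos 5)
Your proof is correct, and it reaches the identity by a mildly different route than the paper. The paper expands each Bell projector in the Pauli basis, $|\mathrm{Bell}\rangle\langle\mathrm{Bell}| = \tfrac{1}{4}(II+XX+ZZ-YY)$, so that $\rho_0 = 2^{-2N}\sum_i (\pm) P_i\otimes P_i$, and then works with the $2\alpha$-fold replica operator $Q_\alpha$: the factorization $Q_\alpha^{AB}=Q_\alpha^A\otimes Q_\alpha^B$ together with the $B$-side trace acts as a delta function forcing all $2\alpha$ Pauli indices equal, which collapses $\mathrm{tr}(\rho^{\otimes 2\alpha}Q_\alpha^{AB})$ to $2^{-2\alpha N}\mathrm{tr}\bigl(Q_\alpha U^{\otimes 2\alpha}Q_\alpha U^{\dagger\otimes 2\alpha}\bigr)$, i.e.\ the defining expression for $H_\alpha(U)$. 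You instead expand the Choi state in the computational basis and prove the pointwise identity $\mathrm{tr}(\rho\, P_A\otimes P_B)=2^{-N}\,\mathrm{tr}\bigl(P_A U P_B^T U^\dagger\bigr)$, then sum. Your calculation checks out, including the normalizations: with the paper's convention $M_\alpha(\rho)=\frac{1}{1-\alpha}\log\bigl[2^{-2N}\sum_P \mathrm{tr}(P\rho)^{2\alpha}\bigr]$ for the $2N$-qubit Choi state, your identity reproduces Eq.~(\ref{eq:hu}) exactly. What your version buys is a stronger, more transparent statement: it exhibits the Pauli spectrum of the Choi state as literally equal (up to the signed relabeling $P_j\mapsto P_j^T$) to the normalized Pauli transfer coefficients of $U$, term by term, rather than only matching the $\alpha$-dependent aggregate. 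What the paper's version buys is that the transpose bookkeeping never appears explicitly --- the $-YY$ sign in the Bell expansion is your transpose in disguise, and the paper dismisses all such signs at the outset by the same even-power observation you invoke.

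One refinement on the point you flag as the main obstacle: for non-integer $\alpha$ the exponent $2\alpha$ is not an even integer, so ``even power kills the sign'' is not literally available. The resolution is that the paper's definitions implicitly read $x^{2\alpha}$ as $(x^2)^\alpha$ (cf.\ the SM, where $p_{ij}(U):=\mathrm{tr}^2(P_iUP_jU^\dagger)/2^{4N}$ is squared before exponentiation, and note that $\mathrm{tr}(P_iUP_jU^\dagger)$ is real by cyclicity and Hermiticity); under that reading your signed bijection $P_B\mapsto P_B^T$ is harmless for all $\alpha\geq 0$, not just integer $\alpha$, and your argument goes through in full generality.
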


\begin{proof}
Consider a $2N$-qubit maximally entangled state between the original system $A$ and an auxiliary system $B$ of equal size: $\rho_0=|{\rm Bell}\rangle \langle {\rm Bell}|^{\otimes N}$. Recall that a \(2\)-qubit Bell state can be written as: 
\begin{equation}
    |{\rm Bell}\rangle\langle {\rm Bell}|=\frac{1}{4}(II+XX+ZZ-YY).
\end{equation}
Since the sign in front of each Pauli string in the expansion of $\rho_0$ is not important for calculating \(\text{tr}^{2\alpha}(...)\), below we shall only keep track of the operator content in $\rho_0$ and ignore the sign structure, and write
\begin{equation}
    \rho_0=\frac{1}{2^{2N}}\sum_{P_i \in P_N}\underbrace{P_i}_{A}\underbrace{P_i}_{B}.
\end{equation}
Now we consider the stabilizer R\'enyi entropy of state  \(|U\rangle=U_A\otimes I_B \ |{\rm Bell}\rangle_{AB}^{\otimes N}\):

\begin{equation}
    M_{\alpha}(|U\rangle)=\frac{1}{1-\alpha}\text{log}[\text{tr}({{U_A}^{\otimes 2\alpha}\otimes{I_B}^{\otimes 2\alpha} {\rho_0}^{\otimes 2\alpha} {U_A}^{\dagger\otimes 2\alpha}\otimes{I_B}^{\otimes 2\alpha}}Q^{AB}_{\alpha})]-2N\frac{1}{1-\alpha},
\end{equation}
the last term is not \(-N\frac{1}{1-\alpha}\) because \(|U\rangle\) is a $2N$-qubit state. The key problem is to calculate the quantity inside the logarithm:
\begin{equation}
\begin{aligned}
    &\text{tr}({{U_A}^{\otimes 2\alpha}\otimes{I_B}^{\otimes 2\alpha} {\rho_0}^{\otimes 2\alpha} {U_A}^{\dagger\otimes 2\alpha}\otimes{I_B}^{\otimes 2\alpha}}Q^{AB}_{\alpha})\\=&\sum_{P_{i_1},\dots,P_{i_{2\alpha}}\in P_N}2^{-4\alpha N}\text{tr}(Q^{AB}_{\alpha}{U_A}^{\otimes 2\alpha}\otimes{I_B}^{\otimes 2\alpha} \underbrace{P_{i_1}\cdots P_{i_{2\alpha}}}_{A}\otimes \underbrace{P_{i_1}\cdots P_{i_{2\alpha}}}_{B}{U_A}^{\dagger\otimes 2\alpha}\otimes{I_B}^{\otimes 2\alpha}).
\end{aligned}
\label{eq:SM_G1}
\end{equation}
From the definition of $Q_{\alpha}$, $Q^{AB}_{\alpha}$ can be written as a tensor product \(Q^{AB}_{\alpha}=Q^{A}_{\alpha}\otimes Q^{B}_{\alpha}\), then Eq.~(\ref{eq:SM_G1}) can be written as:

\begin{equation}
\begin{aligned}
    &\text{tr}({{U_A}^{\otimes 2\alpha}\otimes{I_B}^{\otimes 2\alpha} {\rho_0}^{\otimes 2\alpha} {U_A}^{\dagger\otimes 2\alpha}\otimes{I_B}^{\otimes 2\alpha}}Q^{AB}_{\alpha})\\=&
\sum_{P_{i_1},\dots,P_{i_{2\alpha}}\in P_N}2^{-4\alpha N}\text{tr}(Q_{\alpha}{U_A}^{\otimes 2\alpha} \underbrace{P_{i_1}\cdots P_{i_{2\alpha}}}_{A}{U_A}^{\dagger\otimes 2\alpha})\text{tr}(Q_{\alpha}{I_B}^{\otimes 2\alpha} \underbrace{P_{i_1}\cdots P_{i_{2\alpha}}}_{B}{I_B}^{\otimes 2\alpha}),
\end{aligned}
\label{eq:SM_G2}
\end{equation}
because \(Q_\alpha:=\sum_{P_{i}\in P_N}{P_{i}}^{\otimes{2\alpha}}\), it will select those \(P_{i_1}\cdots P_{i_{2\alpha}}\) with \(P_{i_1}=P_{i_2}=\cdots =P_{i_{2\alpha}}\). Then Eq.~(\ref{eq:SM_G2}) further simplifies:
\begin{equation}
\begin{aligned}
    &\text{tr}({{U_A}^{\otimes 2\alpha}\otimes{I_B}^{\otimes 2\alpha} {\rho_0}^{\otimes 2\alpha} {U_A}^{\dagger\otimes 2\alpha}\otimes{I_B}^{\otimes 2\alpha}}Q^{AB}_{\alpha})\\
    =&\sum_{P_{i}\in P_N}2^{-4\alpha N}\text{tr}(Q_{\alpha}{U_A}^{\otimes 2\alpha} \underbrace{P_{i}\cdots P_{i}}_{2\alpha}{U_A}^{\dagger\otimes 2\alpha})\text{tr}(Q_{\alpha}{I_B}^{\otimes 2\alpha} \underbrace{P_{i}\cdots P_{i}}_{2\alpha}{I_B}^{\otimes 2\alpha})\\
    =&2^{-2\alpha N}\text{tr}(Q_{\alpha} {U_A}^{\otimes 2\alpha}Q_{\alpha}{U_A}^{\dagger\otimes 2\alpha}).
\end{aligned}
\label{eq:SM_G3}
\end{equation}
Then \(M_{\alpha}(|U\rangle)\) can be written as:
\begin{equation}
\begin{aligned}
    M_{\alpha}(|U\rangle)=&\frac{1}{1-\alpha}\text{log}[\text{tr}({{U_A}^{\otimes 2\alpha}\otimes{I_B}^{\otimes 2\alpha} {\rho_0}^{\otimes 2\alpha} {U_A}^{\dagger\otimes 2\alpha}\otimes{I_B}^{\otimes 2\alpha}}Q^{AB}_{\alpha})]-2N\frac{1}{1-\alpha}\\
    =&\frac{1}{1-\alpha}\text{log}[2^{-2\alpha N}\text{tr}(Q_{\alpha} {U_A}^{\otimes 2\alpha}Q_{\alpha}{U_A}^{\dagger\otimes 2\alpha)}]-2N\frac{1}{1-\alpha}\\
    =&\frac{1}{1-\alpha}\text{log}[\text{tr}(Q_{\alpha} {U_A}^{\otimes 2\alpha}Q_{\alpha}{U_A}^{\dagger\otimes 2\alpha)}]-2N\frac{1+\alpha}{1-\alpha},
\end{aligned}
\label{eq:SM_G4}
\end{equation}
this is exactly the definition of \(H_{\alpha}(U)\) in Eq.~(\ref{eq:SM_G0}), so we get \(H_{\alpha}(U)=M_{\alpha}(|U\rangle)\).
\end{proof}

We have thus established that the unitary stabilizer R\'enyi entropy $H_{\alpha}(U)$ is equal to the state R\'enyi stabilizer entropy of a maximally entangled state between the original system and an auxiliary system of equal size, acted upon by $U$. Notice that this result does not involve averaging over a particular unitary ensemble, and holds for each and arbitrary $U$. 

\subsection{Lower bound on the unitary nullity and \(T\) count}

In this section, we establish the full inequality of Eq.~(\ref{eq:ineq}) of the main text.
Let \(p_{ij}(U):={\text{tr}}^{2}(P_iUP_jU^{\dagger})/2^{4N}\). Because \(\sum_{i,j}p_{ij}=1\), the coefficients \(p_{ij}\) form a normalized probability distribution. Then the \(\alpha\)-th unitary stabilizer R\'enyi entropy we have defined can be written in the following way:
\begin{equation}
    H_{\alpha}(U):=\frac{1}{1-\alpha}\text{log}\sum_{i,j}{p_{ij}(U)}^{\alpha}-2N.
\end{equation}
To connect the unitary SRE to unitary nullity, we first note the following lemma. 

\begin{theorem}
   Let $t(U)$ be the T-count of $U$, i.e., the number of T gates needed to synthesize $U$ in a Clifford+T decomposition and let $\nu(U)$ be the unitary nullity. Then  $H_0(U)\leq \nu (U)\leq t(U)$.
\end{theorem}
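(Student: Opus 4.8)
The plan is to prove the two inequalities of the chain separately, handling the nontrivial bound $H_0(U)\le \nu(U)$ first and then the more standard $\nu(U)\le t(U)$. I would begin by evaluating $H_0$ explicitly. Since $p_{ij}(U)^0$ equals $1$ when $p_{ij}(U)>0$ and $0$ otherwise, the $\alpha\to 0$ limit of the formula $H_\alpha(U)=\tfrac{1}{1-\alpha}\log\sum_{i,j}p_{ij}(U)^\alpha-2N$ gives $H_0(U)=\log\big(\#\{(i,j):p_{ij}(U)\neq 0\}\big)-2N$. For each fixed $j$ the quantity $w_j:=\#\{i:p_{ij}(U)\neq 0\}$ is exactly the number of Pauli strings appearing in $U P_j U^\dagger=\sum_i c_{ij}P_i$, i.e. the Pauli-support size of $U P_j U^\dagger$. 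Thus $H_0(U)=\log\big(\sum_j w_j\big)-2N$, and since $\nu(U)=2N-\log|s(U)|$, the inequality $H_0(U)\le \nu(U)$ is equivalent to the clean counting statement $\big(\sum_j w_j\big)\,|s(U)|\le 2^{4N}$.

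The crux is a per-string support bound: $w_j\le 2^{2N}/|s(U)|$ for every $j$. To establish it I would first note that $s(U)$ is a subgroup of the Pauli group mod phase (closed under multiplication, since $UPU^\dagger=\pm P'$ and $UQU^\dagger=\pm Q'$ give $U(PQ)U^\dagger=\pm P'Q'$), and that conjugation by $U$ maps it isomorphically onto a Pauli subgroup $\mathcal H:=U\,s(U)\,U^\dagger$ of equal size, preserving commutation relations. The key observation is a covariance relation: for $g\in s(U)$ write $UgU^\dagger=h_g\in\mathcal H$ and $gP_jg^{-1}=\eta\,P_j$ with $\eta=\pm1$; conjugating $U P_j U^\dagger=\sum_i c_{ij}P_i$ by $h_g$ and using $h_g P_i h_g^{-1}=\mu_i P_i$ yields $\sum_i c_{ij}\mu_i P_i=\eta\sum_i c_{ij}P_i$. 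By linear independence of the Pauli strings, every $P_i$ in the support of $U P_j U^\dagger$ must satisfy $\mu_i=\eta$, and letting $g$ range over $s(U)$ shows that all support strings share the \emph{same} commutation signs with all of $\mathcal H$. Since the commutation pattern defines an $\mathbb F_2$-linear map from the $2N$-dimensional symplectic space of Paulis into $\hat{\mathcal H}$ with kernel equal to the symplectic complement $\mathcal H^{\perp}$, any fixed pattern is realized by exactly $|\mathcal H^{\perp}|=2^{2N}/|\mathcal H|=2^{2N}/|s(U)|$ Pauli strings. Hence $w_j\le 2^{2N}/|s(U)|$, and summing over the $4^N$ values of $j$ gives $\sum_j w_j\le 2^{4N}/|s(U)|$, which is precisely the required bound. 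I expect this lemma — recognizing the covariance constraint and converting it into symplectic-complement counting — to be the main obstacle.

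For $\nu(U)\le t(U)$ I would invoke subadditivity of the unitary nullity under composition, $\nu(VW)\le\nu(V)+\nu(W)$, which follows from $s(VW)\supseteq\{P\in s(W): WPW^\dagger\in\pm s(V)\}$ together with the subgroup inequality $|G\cap H|\ge |G|\,|H|/4^N$ applied to $W\,s(W)\,W^\dagger$ and $s(V)$ inside the order-$4^N$ Pauli group mod phase~\cite{Jiang_2023}. Writing $U$ in a Clifford$+T$ form with $t(U)$ gates and using $\nu(C)=0$ for any Clifford $C$ together with $\nu(T)=1$ for a single $T=e^{i\pi Z/8}$ (since $TZT^\dagger=Z$ but $TXT^\dagger\notin P_N$, giving $|s(T)|=2\cdot 4^{N-1}$), subadditivity gives $\nu(U)\le t(U)$. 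Finally, since the Rényi entropies $H_\alpha(U)$ are nonincreasing in $\alpha$, the bound $H_0(U)\le\nu(U)$ extends to $H_\alpha(U)\le\nu(U)\le t(U)$ for every $\alpha\ge 0$, recovering the full chain of Eq.~(\ref{eq:ineq}).
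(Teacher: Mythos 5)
Your proof is correct and follows essentially the same route as the paper's for the key inequality $H_0(U)\le\nu(U)$: you fix a Pauli string, use the covariance relation to show that every string in the Pauli support of its conjugate under $U$ must share one common eigenvalue pattern under conjugation by $\mathcal{H}=U\,s(U)\,U^\dagger$, and count the solutions as (a coset of) the symplectic complement, yielding the same per-string bound $2^{2N}/|s(U)|$ and hence $\#\{p_{ij}\neq 0\}\le 2^{4N}/|s(U)|$, exactly as in the paper. The only difference is that the paper simply cites $\nu(U)\le t(U)$ from~\cite{Jiang_2023}, whereas you supply a short self-contained derivation via subadditivity of the nullity (using the subgroup-intersection bound in $\mathbb{F}_2^{2N}$) together with $\nu(C)=0$ for Cliffords and $\nu(T)=1$; this addition is correct and harmless, so the proposal matches the paper's argument in all essentials.
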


\begin{proof}
   Define \(s(U):=\{P_i| \ UP_iU^\dagger\in P_N\}\) and \(g(U):=Us(U)U^\dagger\). Both \(s(U)\) and \(g(U)\) are subgroups of \(P_N\). Recall the definition of unitary nullity~\cite{Jiang_2023} is \(v(U):=2N-\text{log}|s(U)|\). We will show that, for a fixed \(P_i\), the number of non-zero \(p_{ij}\) is upper bounded by \(\frac{2^{2N}}{|s(U)|}\).

The group \(g(U)\) acts on the operator space via conjugation, \(R(X):=RXR,\ R\in g(U)\). For any \(R\in g(U)\), there exists \(Q \in s(U)\) with \(R=UQU^\dagger\). Thus for any Pauli string \(P_i\) and any $R\in g(U)$, we have:
\begin{equation}
    R(UP_iU^\dagger)=RUP_iU^\dagger R=U(QP_iQ)U^\dagger,
\end{equation}
for some $Q\in s(U)$.
Since Pauli strings either commute or anticommute, that is, $QP_i=\pm P_iQ$, we must have
\begin{equation}
    R(UP_iU^\dagger)=\pm UP_iU^\dagger.
    \label{eq:ru}
\end{equation}
Thus, regarding \(UP_iU^\dagger\) as a vector in operator space and \(R(\cdot)\) as a linear transformation, we see that \(UP_iU^\dagger\) is a common eigenvector of all the \(R\in g(U)\), with eigenvalues \(\pm1\). Expanding \(UP_iU^\dagger\) in Pauli basis,
\begin{equation}
    UP_iU^\dagger=\sum_{S\in P_N}c_SS,
\end{equation}
and applying \(R(\cdot)\) gives
\begin{equation}
    R(UP_iU^\dagger)=\sum_S c_S(RSR)=\pm\sum_Sc_SS.
\end{equation}
Since \(RSR=\pm S\), Eq.~(\ref{eq:ru}) imposes the constraint that all Pauli strings in the above expansion with \(c_S\ne0\) must share the same commutation relation with \(R\). Modulo a global phase, the space of \(P_N\) is isomorphic to the binary symplectic space \(\mathbb{F}_2^{2N}\). Each Pauli string \(S=X_1^{s_1}Z_1^{s_2}X_2^{s_3}Z_2^{s_4}...X_N^{s_{2N-1}}Z_N^{s_{2N}}\) corresponds to a vector \(s=(s_1,s_2,...,s_{2N})\), and each \(R \in g(U)\) can be equivalently represented as a vector \(r \in \mathbb{F}_2^{2N} \). The commutation relation between \(R\) and \(S\) is encoded by the symplectic form:
\begin{equation}
    \varepsilon(R,S)=(-1)^{\langle r,s\rangle},\quad\langle r,s\rangle:=\sum^N_{i=1}r_{2i-1}s_{2i}+r_{2i}s_{2i-1} \quad ({\rm mod} \ 2)
\end{equation}
where the \(\langle r,s\rangle\) is the symplectic inner product of \(r,s\). 
Therefore, the constraint that all nonzero \(c_S\) share the same commutation relation with \(R\in g(U)\) becomes a system of linear equations:

\begin{equation}
 \langle r,s\rangle=b(r) \in \{0,1\} \quad \forall \ r \in G,  
\end{equation}
where \(G \subseteq \mathbb{F}^{2N}_2\) is the vector space corresponding to \(g(U)\), with \(\text{dim}G=\text{log}|g(U)|=\text{log}|s(U)|\). Thus the solution space lies within the affine translation of the symplectic orthogonal complement \(G^{\perp}\), whose size is \(|G^{\perp}|=2^{2N-\text{dim}G}=\frac{2^{2N}}{|s(U)|}\). Hence, the number of nonzero coefficients \(c_S\) is at most \(\frac{2^{2N}}{|s(U)|}\). Summing over all Pauli strings $P_i$, we obtain: 

\begin{equation}
    H_0(U)=\text{log}(\#\{ p_{ij}\ne 0\})-2N\le\text{log}(\frac{2^{4N}}{|s(U)|})-2N=v(U).
\end{equation}
Since \(t(U)\ge v(U)\)~\cite{Jiang_2023}, it follows that \(t(U)\ge v(U) \geq H_0(U)\).
\end{proof}

\begin{corollary}
 For any unitary $U$, its $T$-count is lower bounded such that $t(U) \ge  H_{\alpha}(U)$.
\end{corollary}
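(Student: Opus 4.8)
The plan is to derive the general-$\alpha$ bound as an immediate consequence of Theorem 3 (the $\alpha=0$ case $t(U)\ge \nu(U)\ge H_0(U)$, which I take as already established) together with the elementary fact that $H_\alpha(U)$ is non-increasing in $\alpha$. The key observation is that, up to an $\alpha$-independent additive constant, $H_\alpha(U)$ is an ordinary Rényi entropy, so the standard entropy-ordering inequality applies directly and no new structural input about $U$ is needed beyond what went into Theorem 3.

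First I would recast Eq.~(\ref{eq:cu}) in probabilistic form. Writing $p_{ij}(U):=\mathrm{tr}^2(P_iUP_jU^\dagger)/2^{4N}$, unitarity gives $p_{ij}\ge 0$ and $\sum_{ij}p_{ij}=1$, so $\{p_{ij}\}$ is a genuine probability distribution on pairs of Pauli strings, and
\begin{equation}
H_\alpha(U)=\frac{1}{1-\alpha}\log\Big(\sum_{ij}p_{ij}(U)^\alpha\Big)-2N=:R_\alpha(p)-2N,
\end{equation}
where $R_\alpha(p)$ is the standard order-$\alpha$ Rényi entropy and the shift $-2N$ does not depend on $\alpha$. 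Next I would establish the single comparison $R_\alpha(p)\le R_0(p)$ for all $\alpha\ge 0$, where $R_0(p)=\log\#\{(i,j):p_{ij}\ne 0\}=\log s$ is the support (Hartley) entropy. Restricting the sum to the support of size $s$ and applying Jensen's inequality to $x\mapsto x^\alpha$, one gets $\sum_{ij}p_{ij}^\alpha\le s^{1-\alpha}$ in the concave regime $0\le\alpha<1$ and the reversed bound $\sum_{ij}p_{ij}^\alpha\ge s^{1-\alpha}$ in the convex regime $\alpha>1$; in both cases dividing $\log\sum_{ij}p_{ij}^\alpha$ by $1-\alpha$ (whose sign flips across $\alpha=1$) yields $R_\alpha(p)\le\log s$, with the boundary cases $\alpha=1,\infty$ following by continuity. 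Subtracting $2N$ gives $H_\alpha(U)\le H_0(U)$.

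Chaining this with Theorem 3 then closes the argument:
\begin{equation}
t(U)\ \ge\ \nu(U)\ \ge\ H_0(U)\ \ge\ H_\alpha(U).
\end{equation}
The only content beyond bookkeeping is the inequality $R_\alpha\le R_0$, which is classical; the sole point requiring care is that $1-\alpha$ changes sign at $\alpha=1$, reversing the direction of the Jensen step so that the \emph{same} final inequality is recovered on either side. Consequently I expect the main (and essentially only) obstacle to be purely expository rather than technical: packaging the monotonicity cleanly so that it transfers through the constant offset $-2N$ and connects to the nullity bound without re-deriving any of Theorem 3.
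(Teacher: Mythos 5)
Your proposal is correct and follows essentially the same route as the paper: both reduce the claim to the chain $t(U)\ge \nu(U)\ge H_0(U)\ge H_\alpha(U)$, where the last step is the standard monotonicity of the R\'enyi entropy in $\alpha$ applied to the normalized distribution $p_{ij}(U)$, with the $\alpha$-independent shift $-2N$ passing through harmlessly. The only difference is that the paper simply cites R\'enyi monotonicity ($H_\infty \le H_\alpha \le H_0$), whereas you prove the needed case $R_\alpha\le R_0$ explicitly via Jensen's inequality on both sides of $\alpha=1$ --- a correct, self-contained rendering of the same argument.
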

\begin{proof}
Recall that  R\'enyi entropy is non-increasing in \(\alpha\) such that
    $H_{\infty}(U)\le H_{\alpha}(U) \le H_0(U)$
where \(H_0(U)=\text{log}(\#\{p_{ij}\ne 0\})-2N\). 
\end{proof}
Although $H_{\alpha}(U)$ is not as tight a lower bound compare to unitary nullity, the calculation of $H_{\alpha}(U)$ does not involve any optimization. 
Furthermore, if $U_A$ has a finer structure, such as a finite depth circuit, then the stabilizer Renyi entropy of its Choi state can be computed efficiently using tensor network methods. Hence we obtain a computable measure of the unitary magic and a computable lower bound of $t(U)$.

\section{Exact results}\label{sec:5}
In this section, we present an alternative analytical method to derive exact expressions of $\overline{Y^{\rm lin}}$ in all cases discussed in the main text\footnote{We acknowledge Alioscia Hamma for providing insightful notes, which served as the primary inspiration for the analytical method presented in this section.}. Recall the linear stabilizer entropy for a pure state $\rho =|\psi\rangle \langle \psi |$ is defined as $Y^{\rm lin}(\rho) :=1-\frac{1}{D}\sum_{P\in P_N}{\rm tr}(P \rho)^4$. where \(D:={\rm dim}\  \mathcal{H}\). This quantity admits an equivalent expression: 
\begin{equation}
    Y^{\rm lin}(\rho)=1-\frac{1}{D}{\rm tr} (Q\rho^{\otimes 4}),
\label{eq:SMp_1}
\end{equation}
where \(Q:=\sum_{P_{i}\in P_N}{P_{i}}^{\otimes4}\). For a Hilbert space \(\mathcal{H}=\otimes^{n}_{i=1}\mathcal{H}_i\), the operator \(Q\) factories as \(Q=\otimes^{n}_{i=1}Q_i\). We can then interpret the map \({\rm tr} (Q(\cdot))\) as a linear operator, which facilitates subsequent calculations.
\subsection{Eq.~(\ref{eq:Y_bi})}
In this subsection, we present a detailed derivation of the exact form of Eq.~(\ref{eq:Y_bi}). The initial state is given by: 
\begin{equation}
    |\psi\rangle=\sum_{i\in\{0,1\}^{\otimes E}} 2^{-E/2}{|0\rangle}^{\otimes f_A}{|i\rangle}_A\otimes{|i\rangle}_B.
\label{eq:SMp_2}
\end{equation}
Let \(\vec{i}:=(i_1,i_2,i_3,i_4)\), where each \(i_k\in\{0,1\}^{\otimes E}\), denote a tuple of four bit strings. Then we have:
\begin{equation}
    {|\psi\rangle\langle\psi|}^{\otimes 4}=2^{-4E}\sum_{\vec{i},\vec{i'}}{|0\rangle\langle0|}^{\otimes 4f_A}\otimes{|\vec{i}\rangle\langle{\vec{i'}}|}_A\otimes{|\vec{i}\rangle\langle\vec{i'}|}_B.
\label{eq:SMp_3}
\end{equation}
Substituting this into the expression for the quantity in Eq.~(\ref{eq:Y_bi}), we have:
\begin{equation}
    \overline{Y^{\rm lin}}
    =\mathbb{E}_{U_A} \ Y^{\rm lin}\left(U_A|\psi\rangle \langle\psi|\ U_A^\dagger\right)
    =1-\frac{1}{D}{\rm tr} \left[Q\mathbb{E}_{U_A}\left({U_A}^{\otimes 4}\otimes {I_B}^{\otimes 4} {|\psi\rangle\langle\psi|}^{\otimes 4} \ {U_A^\dagger}^{\otimes 4}\otimes {I_B}^{\otimes 4}\right)\right].
\label{eq:SMp_4}
\end{equation}
Notice the Haar average of an operator \(O\in\mathcal{B}(\mathcal{H}^{\otimes k})\) reads:
\begin{equation}
    \mathbb{E}_{U}U^{\otimes k}O{U^\dagger}^{\otimes k}=\sum_{\sigma,\pi\in S_k}V_{\sigma,\pi}{\rm tr}(T_{\sigma}O)T_{\pi}
\label{eq:SMp_5},
\end{equation}
with \(V_{\sigma,\pi}\) being the Weingarten function and \(T_{\sigma},T_{\pi}\) being the aforementioned permutation operators. Applying this to Eq.~(\ref{eq:SMp_4}), we have:
\begin{equation}
    1-\overline{Y^{\rm lin}}=\sum_{\sigma,\pi}\sum_{\vec{i},\vec{i'}}\frac{1}{D}2^{-4E}V^A_{\sigma,\pi}{\rm tr} \left(Q_AT^A_{\pi}\right){\rm tr}\left(T^{A-f_A}_{\sigma}{|\vec{i}\rangle\langle{\vec{i'}}|}_A\right){\rm tr}\left(Q_B{|\vec{i}\rangle\langle\vec{i'}|}_B\right).
\label{eq:SMp_6}
\end{equation}
We have used the fact that \(Q=Q_A\otimes Q_B\) and \(T^A_{\sigma}=T^{f_A}_{\sigma}\otimes T^{A-f_A}_{\sigma}\). Furthermore, because of the tensor product structure, we have:
\begin{eqnarray}
    F_1(\sigma,E)&:=&\sum_{\vec{i},\vec{i'}\in {\{0,1\}}^{\otimes4E}}{\rm tr}\left(T^{A-f_A}_{\sigma}{|\vec{i}\rangle\langle{\vec{i'}}|}_A\right){\rm tr}\left(Q_B{|\vec{i}\rangle\langle\vec{i'}|}_B\right)\\\nonumber
    &=&\sum_{\vec{i}\in {\{0,1\}}^{\otimes4E}}{\rm tr}\left(Q_B{|\vec{i}\rangle\langle\sigma(\vec{i})|}_B\right)=\left[\sum_{\vec{j}\in {\{0,1\}}^{\otimes4}}{\rm tr}\left(Q_1{|\vec{j}\rangle\langle\sigma(\vec{j})|}_1\right)\right]^E,
\label{eq:SMp_7}
\end{eqnarray}
where \(\sigma(\vec{i}):=(i_{\sigma(1)},i_{\sigma(2)},i_{\sigma(3)},i_{\sigma(4)})\). Then the Eq.~(\ref{eq:SMp_6}) can be written as the formula shown in Eq.~(\ref{eq:SMp_8}).
\begin{equation}
    1-\overline{Y^{\rm lin}}=\sum_{\sigma,\pi}2^{-4E}DV^A_{\sigma,\pi}{\rm tr} \left(Q_AT^A_{\pi}\right)F_1(\sigma,E)
\label{eq:SMp_8},
\end{equation}
where all the terms represent rational fractions, and the entire summation can be completed using symbolic computation. Let \(D_A:=2^{|A|},D_E:=2^E\). Finally, we have the exact expression Eq.~(\ref{eq:SMp_9}).
\begin{equation}
    \overline{Y^{\rm lin}}=1-\frac{4(D_A^2D_E^2-3D_AD_E-6D_E^2+6)}{D_A(D_A^2-9)D_E^3}.
\label{eq:SMp_9}
\end{equation}

\subsection{Eq.~(\ref{eq:UaUb})}
In this subsection, we present the analytical method to derive the exact expression of Eq.~(\ref{eq:UaUb}). The initial state is shown in Eq.~(\ref{eq:SMp_10}).
\begin{equation}
    |\psi\rangle=\sum_{i\in\{0,1\}^{\otimes E}} 2^{-E/2}{|0\rangle}^{\otimes f_A}{|i\rangle}_A\otimes{|0\rangle}^{\otimes f_B}{|i\rangle}_B,
\label{eq:SMp_10}
\end{equation}
and the tensor product of its density matrix reads:
\begin{equation}
    {|\psi\rangle\langle\psi|}^{\otimes 4}=2^{-4E}\sum_{\vec{i},\vec{i'}}{|0\rangle\langle0|}^{\otimes 4f_A}\otimes{|\vec{i}\rangle\langle{\vec{i'}}|}_A\otimes{|0\rangle\langle0|}^{\otimes 4f_B}\otimes{|\vec{i}\rangle\langle\vec{i'}|}_B.
\label{eq:SMp_11}
\end{equation}
The quantity in Eq.~(\ref{eq:UaUb}) then can be written as:
\begin{equation}
    \overline{Y^{\rm lin}}=\mathbb{E}_{U_A}\mathbb{E}_{U_B} \ Y^{\rm lin}\left(U_AU_B|\psi\rangle \langle\psi|\ U_A^\dagger U_B^\dagger\right)=1-\frac{1}{D}{\rm tr} \left[Q\mathbb{E}_{U_A}\mathbb{E}_{U_B}\left({U_A}^{\otimes 4}\otimes {U_B}^{\otimes 4} {|\psi\rangle\langle\psi|}^{\otimes 4} \ {U_A^\dagger}^{\otimes 4}\otimes {U_B^\dagger}^{\otimes 4}\right)\right].
\label{eq:SMp_12}
\end{equation}
Applying Eq.~(\ref{eq:SMp_5}), we have:
\begin{equation}
    1-\overline{Y^{\rm lin}}=\sum_{\sigma_A,\pi_A,\sigma_B,\pi_B}\sum_{\vec{i},\vec{i'}}2^{-4E}\frac{1}{D}V^A_{\sigma_A,\pi_A}V^B_{\sigma_B,\pi_B}{\rm tr} \left(Q_AT^A_{\pi_A}\right){\rm tr} \left(Q_BT^B_{\pi_B}\right){\rm tr}\left(T^{A-f_A}_{\sigma_A}{|\vec{i}\rangle\langle{\vec{i'}}|}_A\right){\rm tr}\left(T^{B-f_B}_{\sigma_B}{|\vec{i}\rangle\langle{\vec{i'}}|}_B\right).
\label{eq:SMp_13}
\end{equation}
The summation over \(\vec{i},\vec{i'}\) can be done similarly in Eq.~(\ref{eq:SMp_7}):
\begin{eqnarray}
    F_2(\sigma_A,\sigma_B,E)&:=&\sum_{\vec{i},\vec{i'}\in {\{0,1\}}^{\otimes4E}}{\rm tr}\left(T^{A-f_A}_{\sigma_A}{|\vec{i}\rangle\langle{\vec{i'}}|}_A\right){\rm tr}\left(T^{B-f_B}_{\sigma_B}{|\vec{i}\rangle\langle{\vec{i'}}|}_B\right)\\ \nonumber
    &=&\sum_{\vec{i}\in {\{0,1\}}^{\otimes4E}}\delta_{\sigma_A(\vec{i}),\sigma_B(\vec{i})}=\left(\sum_{\vec{j}\in {\{0,1\}}^{\otimes4}}\delta_{\sigma_A(\vec{j}),\sigma_B(\vec{j})}\right)^E.
\label{eq:SMp_14}
\end{eqnarray}
Then we derive a summation formula Eq.~(\ref{eq:SMp_15}), whose terms contains rational fractions, so the entire summation can be completed using symbolic computation.
\begin{equation}
    1-\overline{Y^{\rm lin}}=\sum_{\sigma_A,\pi_A,\sigma_B,\pi_B}2^{-4E}\frac{1}{D}V^A_{\sigma_A,\pi_A}V^B_{\sigma_B,\pi_B}{\rm tr} \left(Q_AT^A_{\pi_A}\right){\rm tr} \left(Q_BT^B_{\pi_B}\right)F_2(\sigma_A,\sigma_B,E).
\label{eq:SMp_15}
\end{equation}
Let \(D_A:=2^{|A|},D_B:=2^{|B|},D_E:=2^E\), we have:
\begin{eqnarray}
    &1-\overline{Y^{\rm lin}}=\frac{4}{D_AD_B(D_A^2-9)(D_B^2-9)D_E^3}\big(D_A^2D_B^2D_E(D_E^2+3)-6D_AD_B(D_A+D_B)(D_E^2+1)\\ \nonumber
    &-6(D_A^2+D_B^2-9)D_E(D_E^2-1)+3D_AD_BD_E(D_E^2+11)\big).
\label{eq:SMp_16}
\end{eqnarray}

\subsection{Eq.~(\ref{eq:tripartite_UaUb})}
In this subsection, we present the analytical method to derive the exact expression of Eq.~(\ref{eq:tripartite_UaUb}). The initial state is shown in Eq.~(\ref{eq:SMp_17}).

\begin{equation}
    |\psi\rangle=2^{-(b_{AB}+b_{AC}+b_{BC}+g)/2}\sum_{i_{AB},i_{AC},i_{BC},i_g} {|0\rangle}^{\otimes f_A}{|i_{AB}\rangle}{|i_{AC}\rangle}{|i_{g}\rangle}_A\otimes{|0\rangle}^{\otimes f_B}{|i_{AB}\rangle}{|i_{BC}\rangle}{|i_{g}\rangle}_B\otimes{|i_{AC}\rangle}{|i_{BC}\rangle}{|i_{g}\rangle}_C.
\label{eq:SMp_17}
\end{equation}
The tensor product of its density matrix reads:
\begin{eqnarray}
    &{|\psi\rangle\langle\psi|}^{\otimes4}=2^{-4(b_{AB}+b_{AC}+b_{BC}+g)}\sum_{\vec{i}_{AB},\vec{i}'_{AB},\vec{i}_{AC},\vec{i}'_{AC},\vec{i}_{BC},\vec{i}'_{BC},\vec{i}_g,\vec{i}'_{g}} {{|0\rangle\langle0|}^{\otimes 4f_A}{|\vec{i}_{AB}\rangle\langle\vec{i}'_{AB}|\otimes}{|\vec{i}_{AC}\rangle\langle\vec{i}'_{AC}|}\otimes|\vec{i}_{g}\rangle\langle\vec{i}'_g|}_A\\ \nonumber
    &\otimes{{|0\rangle\langle0|}^{\otimes 4f_B}{|\vec{i}_{AB}\rangle\langle\vec{i}'_{AB}|\otimes}{|\vec{i}_{BC}\rangle\langle\vec{i}'_{BC}|}\otimes|\vec{i}_{g}\rangle\langle\vec{i}'_g|}_B\otimes{{|\vec{i}_{AC}\rangle\langle\vec{i}'_{AC}|\otimes}{|\vec{i}_{BC}\rangle\langle\vec{i}'_{BC}|}\otimes|\vec{i}_{g}\rangle\langle\vec{i}'_g|}_C.
\label{eq:SMp_18}
\end{eqnarray}
The quantity in Eq.~(\ref{eq:tripartite_UaUb}) then can be written as:
\begin{eqnarray}
    \overline{Y^{\rm lin}}&=&\mathbb{E}_{U_A}\mathbb{E}_{U_B} \ Y^{\rm lin}\left(U_AU_B|\psi\rangle \langle\psi|\ U_A^\dagger U_B^\dagger\right)\\ \nonumber
    &=&1-\frac{1}{D}{\rm tr} \left[Q\mathbb{E}_{U_A}\mathbb{E}_{U_B}\left({U_A}^{\otimes 4}\otimes {U_B}^{\otimes 4}\otimes{I_B}^{\otimes 4} {|\psi\rangle\langle\psi|}^{\otimes 4} \ {U_A^\dagger}^{\otimes 4}\otimes {U_B^\dagger}^{\otimes 4}\otimes{I_C}^{\otimes 4}\right)\right].
\label{eq:SMp_19}
\end{eqnarray}
Applying Eq.~(\ref{eq:SMp_5}), we have:
\begin{eqnarray}
    &1-\overline{Y^{\rm lin}}=2^{-4(b_{AB}+b_{AC}+b_{BC}+g)}\frac{1}{D}\sum_{\sigma_A,\pi_A,\sigma_B,\pi_B}\sum_{\vec{i}_{AB},\vec{i}'_{AB},\vec{i}_{AC},\vec{i}'_{AC},\vec{i}_{BC},\vec{i}'_{BC},\vec{i}_g,\vec{i}'_{g}}V^A_{\sigma_A,\pi_A}V^B_{\sigma_B,\pi_B}\times\\ \nonumber
    &{\rm tr}\left(T^{A-f_A}_{\sigma_A}{{|\vec{i}_{AB}\rangle\langle\vec{i}'_{AB}|\otimes}{|\vec{i}_{AC}\rangle\langle\vec{i}'_{AC}|}\otimes|\vec{i}_{g}\rangle\langle\vec{i}'_g|}_A\right){\rm tr}\left(T^{B-f_B}_{\sigma_B}{{|\vec{i}_{AB}\rangle\langle\vec{i}'_{AB}|\otimes}{|\vec{i}_{BC}\rangle\langle\vec{i}'_{BC}|}\otimes|\vec{i}_{g}\rangle\langle\vec{i}'_g|}_B\right)\times\\ \nonumber
    &{\rm tr} \left(Q_AT^A_{\pi_A}\right){\rm tr} \left(Q_BT^B_{\pi_B}\right){\rm tr}\left(Q_C{{|\vec{i}_{AC}\rangle\langle\vec{i}'_{AC}|\otimes}{|\vec{i}_{BC}\rangle\langle\vec{i}'_{BC}|}\otimes|\vec{i}_{g}\rangle\langle\vec{i}'_g|}_C\right).
\label{eq:SMp_20}
\end{eqnarray}
The summation over \(\vec{i}_{AB},\vec{i}'_{AB},\vec{i}_{AC},\vec{i}'_{AC},\vec{i}_{BC},\vec{i}'_{BC},\vec{i}_g,\vec{i}'_{g}\) can be done in the similar way as those Eq.~(\ref{eq:SMp_7}) and Eq.~(\ref{eq:SMp_14}).
\begin{eqnarray}
    && F_3(\sigma_A,\sigma_B,b_{AB},b_{AC},b_{BC},g)\\ \nonumber
    &:=&\sum_{\vec{i}_{AB},\vec{i}'_{AB},\vec{i}_{AC},\vec{i}'_{AC},\vec{i}_{BC},\vec{i}'_{BC},\vec{i}_g,\vec{i}'_{g}}{\rm tr}\left(Q_C{{|\vec{i}_{AC}\rangle\langle\vec{i}'_{AC}|\otimes}{|\vec{i}_{BC}\rangle\langle\vec{i}'_{BC}|}\otimes|\vec{i}_{g}\rangle\langle\vec{i}'_g|}_A\right)\times\\ \nonumber
    &\times&{\rm tr}\left(T^{A-f_A}_{\sigma_A}{{|\vec{i}_{AB}\rangle\langle\vec{i}'_{AB}|\otimes}{|\vec{i}_{AC}\rangle\langle\vec{i}'_{AC}|}\otimes|\vec{i}_{g}\rangle\langle\vec{i}'_g|}_A\right){\rm tr}\left(T^{B-f_B}_{\sigma_B}{{|\vec{i}_{AB}\rangle\langle\vec{i}'_{AB}|\otimes}{|\vec{i}_{BC}\rangle\langle\vec{i}'_{BC}|}\otimes|\vec{i}_{g}\rangle\langle\vec{i}'_g|}_B\right)\\ \nonumber
    &=&\sum_{\vec{i}_{AB},\vec{i}_{AC},\vec{i}_{BC},\vec{i}_g}{\rm tr}\left(Q_{AC}|\vec{i}_{AC}\rangle\langle\sigma_A(\vec{i}_{AC})|\right){\rm tr}\left(Q_{BC}{|\vec{i}_{BC}\rangle\langle\sigma_B(\vec{i}_{BC})|}\right){\rm tr}\left(Q_g|\vec{i}_{g}\rangle\langle\sigma_A(\vec{i}_g)|\right)\delta_{\sigma_A(\vec{i}_{AB}),\sigma_B(\vec{i}_{AB})}\delta_{\sigma_A(\vec{i}_g),\sigma_B(\vec{i}_g)}\\\nonumber
    &=&\left(\sum_{\vec{j}}\delta_{\sigma_A(\vec{j}),\sigma_B(\vec{j})}\right)^{b_{AB}} \left[\sum_{\vec{j}}{\rm tr}\left(Q_1|\vec{j}\rangle\langle\sigma_A(\vec{j})|\right)\right]^{b_{AC}}\left[\sum_{\vec{j}}{\rm tr}\left(Q_1|\vec{j}\rangle\langle\sigma_B(\vec{j})|\right)\right]^{b_{BC}}\left[\sum_{\vec{j}}{\rm tr}\left(Q_1|\vec{j}\rangle\langle\sigma_B(\vec{j})|\right)\delta_{\sigma_A(\vec{j}),\sigma_B(\vec{j})}\right]^{g}.
\label{eq:SMp_21}
\end{eqnarray}
Then we derive a summation formula Eq.~(\ref{eq:SMp_22}), whose terms contains rational fractions, so the entire summation can be completed using symbolic computation.
\begin{equation}
    1-\overline{Y^{\rm lin}}=2^{-4(b_{AB}+b_{AC}+b_{BC}+g)}\frac{1}{D}\sum_{\sigma_A,\pi_A,\sigma_B,\pi_B}V^A_{\sigma_A,\pi_A}V^B_{\sigma_B,\pi_B}{\rm tr} \left(Q_AT^A_{\pi_A}\right){\rm tr} \left(Q_BT^B_{\pi_B}\right)F_3(\sigma_A,\sigma_B,b_{AB},b_{AC},b_{BC},g).
\label{eq:SMp_22}
\end{equation}
Let \(D_A:=2^{|A|},D_B:=2^{|B|},D_C:=2^{|C|},D_{AB}:=2^{b_{AB}},D_{AC}:=2^{b_{AC}},D_{BC}:=2^{b_{BC}},D_g:=2^g\), we have:

\begin{eqnarray}
    \nonumber 1-\overline{Y^{\rm lin}}&=&\frac{4}{D_AD_B(D_A^2-9)(D_B^2-9)D_{AB}^3D_{AC}^3D_{BC}^3D_g^3}\big(D_A^2D_B^2D_{AB}^3D_{AC}^2D_{BC}^2D_g^2+3D_A^2D_B^2D_{AB}D_{AC}^2D_{BC}^2D_g \\ \nonumber
&-& 6 D_A^2 D_{AB}^3 D_{AC}^2 D_{BC}^2 D_g^2
- 6 D_A^2 D_{AB}^2 D_{AC}^2 D_B D_{BC} D_g 
- 18 D_A^2 D_{AB} D_{AC}^2 D_{BC}^2 D_g 
+ 24 D_A^2 D_{AB} D_{AC}^2 \\ \nonumber
&-& 6 D_A^2 D_{AC}^2 D_B D_{BC} 
+ 3 D_A D_{AB}^3 D_{AC} D_B D_{BC} D_g 
- 6 D_A D_{AB}^2 D_{AC} D_B^2 D_{BC}^2 D_g 
+ 36 D_A D_{AB}^2 D_{AC} D_{BC}^2 D_g \\ \nonumber
&- &36 D_A D_{AB}^2 D_{AC} 
+ 3 D_A D_{AB} D_{AC} D_B D_{BC} D_g 
+ 30 D_A D_{AB} D_{AC} D_B D_{BC} 
- 6 D_A D_{AC} D_B^2 D_{BC}^2 \\ \nonumber
&+& 36 D_A D_{AC} D_{BC}^2 
- 36 D_A D_{AC} 
- 6 D_{AB}^3 D_{AC}^2 D_B^2 D_{BC}^2 D_g^2 
+ 36 D_{AB}^3 D_{AC}^2 D_{BC}^2 D_g^2 
+ 18 D_{AB}^3 \\ \nonumber
&+& 36 D_{AB}^2 D_{AC}^2 D_B D_{BC} D_g 
- 36 D_{AB}^2 D_B D_{BC} 
- 18 D_{AB} D_{AC}^2 D_B^2 D_{BC}^2 D_g 
+ 108 D_{AB} D_{AC}^2 D_{BC}^2 D_g \\
&-&144 D_{AB} D_{AC}^2 
+ 24 D_{AB} D_B^2 D_{BC}^2 
- 144 D_{AB} D_{BC}^2
+ 126 D_{AB} 
+ 36 D_{AC}^2 D_B D_{BC} 
- 36 D_B D_{BC}\big).
\label{eq:SMp_23}
\end{eqnarray}

\section{Simulation method}\label{sec:6}
In this section, we briefly discuss the numerical simulation method we used. We use two different methods for numerically evaluating the average $\overline{Y^{\rm lin}}$. The first one is a brute-force evaluation according to the definition of $Y^{\rm lin}$. However, this requires a direct enumeration of all $4^N$ Pauli strings, which can be time-consuming for larger system sizes. We thus adopt a second method, which is inspired by our analytical calculations to avoid summing over exponentially many Pauli strings. According to Eq.~(\ref{eq:SM10}), there are only four different values for \({{\alpha }_{{{P}_{A}}}}({{a}_{k}},{{a}_{k}},{{a}_{k}},{{a}_{k}})\), depending on whether $P_A = I_A$ and $a_k = I_A$. Therefore, we only need to evaluate the corresponding $\alpha$'s to efficiently compute the average $\overline{Y^{\rm lin}}$. Similarly, there are four possible contributions from different \(P\) in Eq.~(\ref{eq:SM12}), so we can simply simulate these values to obtain an unbiased estimate for \(\overline{Y^{\rm lin}}\).

\end{document}